\numberwithin{equation}{section}   
\numberwithin{equation}{section}
\title{Replication-proof Bandit Mechanism Design with Bayesian Agents}
  \newcommand{\country}[1]{#1.}
  \newcommand{\city}[1]{#1}
  \newcommand{\institution}[1]{#1}
  \newcommand{\email}[1]{Email: \texttt{#1}}
  \newcommand{\affiliation}{\thanks}
\author{
{Suho Shin
 \affiliation{
   \institution{University of Maryland}
   \city{College Park, MD}
   \country{USA}
 \email{suhoshin@umd.edu}
 }}
 \and
{Seyed Esmaeili
 \affiliation{
   \institution{Simons Laufer Mathematical Sciences Institute}
   \city{Berkeley, CA}
   \country{USA}
 \email{esmaeili@cs.umd.edu}
 }}
 \and
 {MohammadTaghi Hajiaghayi
 \affiliation{
   \institution{University of Maryland}
   \city{College Park, MD}
   \country{USA}
 \email{hajiagha@umd.edu }
 }}
}
\begin{document}
\maketitle
\begin{abstract}

We study the problem of designing replication-proof bandit mechanisms when agents strategically register or replicate their own arms to maximize their payoff.
Specifically, we consider Bayesian agents who only know the distribution from which their own arms' mean rewards are sampled, unlike the original setting of by Shin et al. 2022.
Interestingly, with Bayesian agents in stark contrast to the previous work, analyzing the replication-proofness of an algorithm becomes significantly complicated even in a single-agent setting.
We provide sufficient and necessary conditions for an algorithm to be replication-proof in the single-agent setting, and present an algorithm that satisfies these properties.
These results center around several analytical theorems that focus on \emph{comparing the expected regret of multiple bandit instances}, and therefore might be of independent interest since they have not been studied before to the best of our knowledge. 
We expand this result to the multi-agent setting, and provide a replication-proof algorithm for any problem instance.
We finalize our result by proving its sublinear regret upper bound which matches that of Shin et al. 2022.

\end{abstract}

\section{Introduction}\label{sec:intro}
Multi-armed bandit (MAB) algorithms are an important paradigm for interactive learning that focuses on quantifying the trade-off between exploration and exploitation \citep{slivkins2019introduction}, with various real-world applications such as recommender systems  \citep{barraza2020introduction,tang2014ensemble}, dynamic pricing by \citep{badanidiyuru2018bandits,badanidiyuru2014resourceful}, and clinical trials by \citep{djallel2019survey} to name a few.
They are used to model various practical applications where there is a set of actions (arms) to be selected (pulled) over a collection of rounds. 
Pulling an arm yields a reward sampled from its distribution which is generally different from the other arms' distributions. The objective is to maximize the sum of rewards accumulated through the rounds or equivalently minimize the \emph{regret} with respect to the optimal clairvoyant arm choice. 

In deploying MAB algorithms in real-world applications, however, the rewards are often not coming from stochastic distributions, but from a collection of rational individuals (agents) who seek to maximize their own payoff.
Consequently, there has been a surge of interest in bandit problem from the mechanism design perspective, which studies MAB algorithms that induce a good outcome with respect to agents' strategic behavior, while maintaining its efficiency.
Examples of such settings include two-sided matching markets, where agents learn their preferences through sequential interactions \citep{liu2021bandit,liu2020competing}, and learning in environments where agents can strategically modify the realized rewards of their arms \citep{braverman2019multi,feng2020intrinsic}.

Recently, \cite{shin2022multi} considered a problem where agents register arms to a platform that uses a MAB algorithm to minimize regret. 
Each agent has an original set of arms, and the mean reward of each arm is known to him a priori, \ie agents are \emph{fully-informed}.
Each agent receives a \emph{fixed constant portion} of the reward whenever its arm is pulled, and tries to maximize cumulative reward.
Importantly, each agent can potentially register the same arm \emph{more than once} (replication) at free cost, which possibly ruins the exploration-exploitation balance of the bandit algorithm.
Indeed, \cite{shin2022multi} show that ``ordinary'' algorithms such as upper-confidence-bound (UCB) induce rational agents to replicate infinitely many times, and thus the algorithm suffers linear regret.
To address this problem, \cite{shin2022multi} propose a hierarchical algorithm with two phases, and prove its truthfulness such that each agent's dominant strategy is not to replicate.

Although the introduction of replication-proof mechanism is remarkable, the assumption that each agent is fully informed about its own arms is theoretically strong and practically limited.
Intuitively, if an agent knows all of its own arms' mean rewards, there exists no reason for the agent to \emph{consider suboptimal arms}.
Indeed, this assumption significantly simplifies the equilibrium analysis so that the problem simply reduces to the case under which each agent only has a \emph{single arm}.
Then, it is immediate to check that any bandit algorithm is replication-proof under a single-agent setting.
Moreover, in practice, it is more reasonable to assume that each agent only has partial information regarding its own arms.
For instance, in online content platforms, each content provider usually does not know the quality of each content that has not been posted yet, but only can guess its outcome, based on the data possibly acquired from the previous contents he/she may have.

In this context, we study a \emph{Bayesian extension}\footnote{We remark that our notion of Bayesian differs from that of Bayesian mechanism in the literature. 
While the Bayesian mechanism computes the equilibrium by considering ex-interim payoff of the players in the incomplete information setting, \ie ex-interim incentive-compatible, our mechanism is DSIC in an ex-post manner..} of the replication-proof bandit mechanism design problem, where each agent only knows such a distribution from which its own arm's mean reward is sampled.\footnote{We refer to Appendix B for more discussion on the practical motivation of our problem setup.}
Given the prior distribution, each agent computes its ex-ante payoff by taking expectation over all the possible realization of its own arms, and commits to a strategy that maximizes the payoff.
Importantly, we aim to obtain a \emph{dominant strategy incentive-compatible} (DSIC) mechanism such that each agent has no incentive to deviate from its truthful strategy, \ie submitting all the original arms without any replication, regardless of the others' strategies.

While our extension itself is intuitive and simple, this brings \emph{significant challenges} in analyzing an algorithm's equilibrium compared to the fully-informed agent setting.
Notably, even in a single-agent case, it is not trivial to obtain a replication-proof algorithm.
This is in stark contrast to the fully-informed setting under which any bandit algorithm immediately satisfies replication-proofness.
The question of designing replication-proofness in the single-agent setting spawns an intriguing question of \emph{comparing expected regret of multiple bandit instances}, which will be elaborated shortly.

\paragraph{Outline of the paper}
In the following subsection, we present our main contributions. Due to the page limit, we defer our discussion on related works to Appendix A.
Then, we introduce the formal problem setup and preliminaries, and the failure of existing algorithms and its implication.
Next we develop our main intuition on constructing replication-proof algorithm by investigating the single-agent setting.
Correspondingly, we extend this result and provide our main result on  replication-proof algorithm with sublinear regret.
Practical motivations, discussion on an extension to asymmetric setting, details on the failure of H-UCB, and all the proofs are deferred to the appendix.
Due to the page limit, all the appendices can be found in the full paper~\cite{esmaeili2023replication}.

\subsection{Contributions and techniques}
Overall, our main results for replication-proof algorithms are presented in a step-by-step manner.
We start with investigating the previous algorithm H-UCB suggested by~\cite{shin2022multi}, which mainly builds upon standard UCB along with the idea of hierarchical structure.
Unlike from the fully-informed setting under which any bandit algorithm is replication-proof in the single-agent setting, we reveal that the standard UCB algorithm fails to be replication-proof even with a single-agent, regardless of the exploration parameter $f$, due to its deterministic (up to tie-break) and adaptive nature.
This phenomenon carries over to the multi-agent setting, so that UCB with hierarchical structure (H-UCB) fails to be replication-proof, regardless of the number of agents.

To tackle this challenge, we first thoroughly investigate the single-agent setting, and characterize necessary and sufficient conditions for an algorithm to be replication-proof.
Let us provide a simple example to explain the main technical challenges one would face in analyzing equilibriums.
\begin{example}
	Consider a single agent with two Bernoulli arms, say arm $a$ and $b$.
        The prior distribution is given by $Bern(0.5)$, \ie the expectation of both the arms' distributions are sampled from the prior distribution $Bern(0.5)$. 
        The agent receives a constant portion, say $40\%$ of the realized reward, whenever their arms are pulled.
	There exist four cases in terms of the pair of realized mean rewards $(\mu_a, \mu_b)$: (i) $(1,1)$,  (ii) $(1,0)$, (iii) $(0,1)$, and (iv) $(0,0)$, all of which occurs with probability $0.25$.
        Due to the nature of Bayesian agents, the agent cannot observe which arm has larger mean rewards, and suppose that the agent somehow commits to a strategy that only replicates the arm $a$, say the replicated arm $a'$.
	For cases (i) and (iv), the replicated arm has no effect since all the arms have the same mean rewards, \ie pulling any arm yields the same expected reward, so let's focus on case (ii) and (iii).
	In case (ii), the resulting bandit instance from the agent's strategy will be $\cI_A: (\mu_a, \mu_{a'},\mu_b) = (1,1,0)$ whereas in case (iii), it will be $\cI_B: (0,0,1)$
	Hence, compared to the strategy with no replication, the agent's gain from replicating arm $a$ is equal to the gain from the average of expected payoff in case (ii) and (iii), compared to the case in which the algorithm runs with bandit instance $\cI_O: (1,0)$ that does not have any replicated arm.
	Hence, if the agent has no incentive to replicate arm $a$ once, the expected regret under $\cI_O$ should be at most the average of that under $\cI_A$ and $\cI_B$.
\end{example}
%
Even under simplistic and standard algorithms such as UCB, it is not straightforward to answer the above question.
Intuitively, instance $\cI_A$ will incur smaller regret as the number of optimal arm increases, but the instance $\cI_B$ will have larger regret due to more suboptimal arms with mean reward $0$.\footnote{This can be shown using a coupling-like argument between random reward tapes of the arms, combined with a regret decomposition lemma. See~\cite{shin2022multi} for more details.}
Thus, the question is whether the loss from $\cI_B$ is larger than the gain from $\cI_A$ in average, compared to $\cI_O$.

More formally, we can refine the above observation as the following fundamental question in comparing the expected regret of multiple bandit instances.
\definecolor{mycolor}{rgb}{0.9, 0.90, 0.9}
\begin{tcolorbox}[colback=mycolor,colframe=gray!75!black,colframe=white]
\begin{center}
  \emph{Given a bandit instance, consider an adversary who uniformly randomly selects an arm and replicates it. Which bandit algorithms guarantee that the regret does not increase against the adversary?}
  \end{center}
\end{tcolorbox}
We affirmatively answer this question by proving that, there exists an algorithm that satisfies the above questions for \emph{any} bandit instance under some mild assumptions.
To this end, we introduce a notion of \emph{random-permutation regret}, which captures the expected regret of an agent's strategy with replication without observing the ex-post realization.
Then, we show that if an algorithm satisfies (i) \emph{truthfulness under random permutation} (TRP) such that the random-permutation regret does not increase with respect to replication, and (ii) permutation invariance such that the algorithm's choice does not depend on the arm indices, then the algorithm is replication-proof.
More surprisingly, we prove that these conditions are indeed \emph{necessary} for any algorithm to be replication-proof.\footnote{We refer to Section~\ref{sec:model} for precise definitions.}
In fact, the failure of UCB comes from violating TRP.
Meanwhile, we prove that exploration-then-commit (henceforth ETC) satisfies these properties, thus it is replication-proof in the single-agent setting.

For the multi-agent setting, we combine several techniques upon the observations above.
First, we borrow the hierarchical structure of~\cite{shin2022multi}, and analyze hierarchical algorithm with each phase running ETC, which effectively decomposes each agent's payoff from each other, thanks to its non-adaptive nature.
It remains as a major open problem whether it is necessary for the algorithm to have non-adaptive structure in order to satisfy TRP.
In addition, we introduce a novel restarting round to re-initialize the intra-agent statistics.
This largely simplifies the analysis of the selected agent in the exploitation rounds in the first phase, since the second phase simply reduces to the single-agent case with the highest empirical average reward.
We cement our result by proving that our algorithm has sublinear regret.

\section{Model}\label{sec:model}
We consider a simultaneous game between a set of agents $[n] = \{1,2,\ldots, n\}$ and a principal.
Each agent $i \in \set{N}$ has a set of original arms $\set{O}_i = \{o_{i,1},\ldots, o_{i,l_i}\}$ given a constant $l_i \in \N$.
Each agent $i$ is equipped with a cumulative distribution $F_i$ from which each original arm $o_{i,k}$'s mean reward $\mu(o_{i,k})$ is sampled for $k \in [l_i]$.
Each arm $o_{i,k}$ further follows a reward distribution $G_{i,k}$ with the mean reward sampled as noted above.
We use $f_i$ and $g_{i,k}$ to denote the density function of $F_i$ and $G_{i,k}$,\footnote{We assume the existence of density function for ease of exposition, but all the results will easily be generalized.} and we assume that all the distributions are supported on $[0,1]$.
We further assume that $F_i$ is independent from $F_j$ for $i \neq j \in [n]$.
For instance, $F_i$ might be a uniform distribution over $[0,1]$, and each arm $a \in \cO_i$ has a Bernoulli reward distribution with the mean sampled from $F_i$.
We often deal with a \emph{single agent} setting with $n=1$.

\paragraph{Bayesian agents}
Importantly, we consider a \emph{Bayesian scenario} in which each agent is only aware of its own distribution $F_i$,\footnote{Discussion on the generalization to the asymmetric setting is presented in Appendix C.} but not the exact realization of $\mu_a$ for $a \in \set{O}_i$.
We note that such assumption is indeed practical since content providers (agents) in contents platform typically only have a partial information on how much their contents may attract the users, \eg based on his/her previous contents and outcomes therein.
This is in stark contrast to the setting of~\cite{shin2022multi} in which they assume \emph{fully-informed} agents such that any strategic agent does not have any incentive to register the arms except the best one reducing the problem into the single-arm case, which is not practical in many real-world scenarios.
We highlight that the equilibrium analysis in the fully-informed setting simply boils down to the case under which each agent only has a single original arm and decides how much to replicate that unique arm, whereas our extension significantly complicates the analysis of equilibrium, as will be elaborated shortly.

\paragraph{Agent's strategy}
Given the prior distribution $F_i$, the set of original $O_{i,k}$ and the principal's algorithm $\fA$, each agent $i$ decides how many times to replicate arm $k$, including $0$ of no replication, for each $k \in [l_i]$.
Precisely, agent $i$ commits to a set $\set{S}_i = \{s_{i,k}^{(c)}:c \in [c_{i,k}], k \in [l_i]\}$ in a strategic manner, where $c_{i,k}$ denotes the total number of replica for arm $o_{i,k}$.
The replicated arm $s_{i,k}^{(c)}$ has the same mean reward to the original arm, \ie $\mu(s_{i,k}^{(c)}) = \mu(o_{i,k})$ for any $i \in [n]$ and $k \in [l_i]$.
We often use $s_{i,k}^{(0)}$ to denote $o_{i,k}$ for simplicity.
Define $\cS = (\cS_1, \ldots, \cS_n)$ and $\cO = (\cO_1, \ldots, \cO_n)$.
Given all the agents' strategies $\set{S}$, the bandit algorithm $\fA$ runs and corresponding rewards are realized.

\paragraph{Mechainsm procedure}
The overall mechanism proceeds as follows: (i) The principal commits to a bandit algorithm (mechanism), (ii) Each agent $i \in \set{N}$  decides an action $\set{S}_i$ to register given $\mathfrak{A}$ to maximize own payoff, (iii) The mean rewards $\mu_a$ for each $a \in \set{O}$ are sampled, (iv) The bandit algorithm $\mathfrak{A}$ runs and rewards are realized, (v) The principal and the agent realize their own utility.\footnote{Note that if all the agents do not strategize and simply register $\set{O}_i$ in step (ii), our problem reduces to the standard stochastic multi-armed bandit problem.}

\paragraph{Agent's payoff}
Once an arm is selected by $\fA$, a fixed portion $\alpha \in (0,1)$ of the reward is paid to the agent who registers the arm.
We write $\muvec_i$ to denote $(\mu_{i,1}, \mu_{i,2}, \ldots, \mu_{i,l_i})$, \ie the vector of agent $i$'s (realized) mean rewards of the arms.
Let $E_{i}(\muvec_i)$ be an event that the mean reward of agent $i$'s arms are realized to be $\muvec_{i}$.
Recall that $E_i(\cdot)$ are mutually independent for for $i \in [n]$.
We often abuse $E_{i}$ to denote $E_i(\muvec_i)$ if the realization $\muvec_i$ is clear from the context.
We also write $\cS_{-i}$ to denote $(\cS_1, \ldots, \cS_{i-1}, \cS_{i+1}, \ldots, \cS_n)$, and $E(\muvec)$ for $\muvec = (\muvec_1,\muvec_2, \ldots, \muvec_n)$ to denote the event such that $\cap_{i \in [n]}E_i(\muvec_i)$.

We further introduce some notations given the principal's algorithm $\fA$.
Conditioned on the $E_{i}(\muvec_i)$, agent $i$'s \emph{ex-post payoff} of selecting strategy $\set{S}_i$ given the others' strategies is defined as
\begin{align*}
U_i(\set{S}_i, \set{S}_{-i} ;E(\muvec)) = \alpha \sum_{t=1}^T R_t\Ind{I^{(t)} \in \set{S}_i},
\end{align*}
where  $R_t$ refers to the reward at round $t$ and $I_t$ denotes the arm selected at round $t$.

Given the ex-post utility, each agent computes its \emph{ex-ante payoff} by marginalizing over the realization of its own arms' mean rewards, defined as follows
\begin{align*}
	u_i(\set{S}_i, \set{S}_{-i}) = \Ex{U_i(\set{S}_i, \set{S}_{-i};E(\muvec))},
\end{align*}
where the randomness in the expectation comes from $E(\muvec)$, i.e. the realization of the mean rewards, and possibly from the random bits of the algorithm $\fA$.
In words, each agent is rational \emph{expected utility maximizer} who maximizes its expected payoff.

\paragraph{Information and desiderata}
We now describe the information structure of the mechanism, and how each agent decides its own strategy based on its ex-ante payoff.
We first assume that the principal does not have any information regarding the agents' distributions nor the number of original arms, but can observe by whom each arm is registered.
This is often called \emph{prior-independent mechanism} in the mechanism design, \cf \cite{hartline2013bayesian}.
In addition, we assume that the agent does not have any information regarding the others.
Instead, our objective is to construct a \emph{incentive-compatible} mechanism such that each agent's \emph{dominant strategy} is to register the arms without any replication, without regard to what others may have proposed.
To this end, we first define a dominant strategy as follows.
\begin{definition}[Dominant Strategy]
Agent $i$'s strategy $\set{S}_i$ is dominant strategy if $u_i(\cS_i \cup \set{S}_{-i}) \ge  u_i(\set{S}'_i \cup \cS_{-i})$ for any $\set{S}'_i$ and $\set{S}_{-i}$.
\end{definition}
Correspondingly, we introduce the following notion of \emph{dominant strategy equilibrium}.
\begin{definition}[Equilibrium]
An algorithm $\mathfrak{A}$ has dominant strategy equilibrium	(DSE) if there exists a set of agent strategies $\set{S} = \{\set{S}_i\}_{i \in [n]}$ such that $\set{S}_i$ is dominant strategy for any agent $i \in [n]$.
\end{definition}
As per the definition, the total ordering of the agent's payoff with respect to its own strategy should remain the same regardless of the others' strategies $\cS_{-i}$.
Thus, we often omit the parameter $\cS_{-i}$ in denoting agent $i$'s payoff, \eg $u_i(\cS_i)$ instead of $u_i(\cS_i, \cS_{-i})$.
Note, however, that the realization of the other agents' mean rewards may affect the actual quantity of $u_i(\cS_i)$, although it does not affect which strategy agent $i$ will eventually commit to.
Since we are interested only in how each agent will "strategize", it does not lose any generality.

We finally introduce the following notion of truthfulness.
\begin{definition}[Replication-proof]
An algorithm $\mathfrak{A}$ is \emph{replication-proof} if it has a dominant strategy equilibrium $\{\set{O}_i\}_{i \in [n]}$ for any $T \in \N$.
\end{definition}
Replication-proofness is desirable as the agents are not motivated to replicate their own arms, thereby reducing the number of arms that a bandit algorithm faces.
Since the regret upper bound of a standard bandit algorithm usually depends on the number of arms, maintaining a smaller number of suboptimal arms in the system affects the eventual learning efficiency of the algorithm.
We often say that an algorithm is \emph{incentive-compatible} or \emph{truthful} to denote the replication-proofness~\citep{nisan2007algorithmic}.

On top of designing \emph{prior-independent} \emph{incentive-compatible} mechanism, we are also interested in a learning efficiency, \ie learning the reward distribution in a sample-efficient manner.
In the literature, this is usually described as cumulative regret, defined as follows.
\begin{definition}[Regret]
	Given time horizon $T$, strategy $\set{S}$ and the event $E(\muvec)$, the ex-post regret of algorithm $\mathfrak{A}$ is 
	\[\reg(\mathfrak{A}, T ; \set{S}, E(\muvec)) = \parans{\max_{a \in \cup_{i \in \cN} \cO_i}\sum_{t=1}^T \mu(a)} -\sum_{t=1}^T \mu(I_t),\]
	where $I_t$ denotes the arm selected by $\mathfrak{A}$ at round $t$.
\end{definition}
We emphasize that the optimal benchmark is selected from $\cup_{i \in \set{N}}\set{O}_i$, not from the registered set of arms $\cup_{i \in \set{S}}\set{S}_i$.
This benchmark is indeed important since the principal should incentivize the agents to decide an action that contains own best arm to maintain larger utility.
For example, if an algorithm motivates an agent to not register all the arms, then there might be a constant probability that the omitted arm has the largest mean reward, which essentially yields sublinear regret with respect to the regret defined above.
More formally, our notion of regret implicitly implies that any sublinear regret algorithm incentivizes all the agents to register all their original arms at least once.
We also remark that although the arms' mean rewards are sampled from prior distributions, which is often referred to Bayesian bandits (\eg see Ch.3 of \cite{slivkins2019introduction}), we deal with ex-post regret, \eg as does in \cite{agrawal2012analysis} which analyzes the ex-post regret of Thompson Sampling in Bayesian bandits.
Obviously, ex-post regret upper bound \emph{implies} Bayesian regret upper bound by simply taking an expectation.

\section{Failure of Existing Algorithms}
Before getting into our main results, the natural question one might ask is to analyze whether the existing algorithms by~\cite{shin2022multi} work.
They mainly propose hierarchical UCB (H-UCB), which is a variant of the standard UCB by~\cite{auer2002finite}.
The formal pseudocode of both UCB and H-UCB are deferred to the appendix.
While we provide the analysis with specific exploration parameter of $\ln t/n_a$, we note that the result easily carries over to arbitrary exploration parameters $f(t,n_a)$, once $f$ is increasing over $t$ and decreasing over $n_a$.
We omit further details as it is beyond our interest.

Mainly, H-UCB  consists of two phases: (i) in the first phase, it runs UCB1 by considering each agent as a single arm and selects one agent at each round, and (ii) in the second phase, it runs UCB within the selected agent.
Precisely, it maintains two types of statistics: (i) agent-wise empirical average reward $\muhat_i$, number of pulls per agent $n_i$, and (ii) arm-wise empirical average reward $\muhat_{i,a}$ and number of pulls per arm $n_{i,a}$.

\begin{theorem}\label{thm:negative}
	There exists a problem instance such that UCB1 is not replication-proof in the single-agent setting, and such that H-UCB  is not replication-proof for any number of agents.
\end{theorem}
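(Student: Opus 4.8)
The plan is to prove the two assertions in sequence: first construct a single-agent witness against UCB1, then lift it to H-UCB for arbitrary $n$. The starting observation is that when $n=1$ every pulled arm belongs to the unique agent, so its ex-ante payoff equals $\alpha\,\Ex{\sum_t R_t}=\alpha\bigl(T\mu^\star-\Ex{\reg}\bigr)$, where $\mu^\star=\max_{a\in\cO_1}\mu(a)$ is fixed once the means are realized and is unaffected by replication (the benchmark ranges over originals only). Hence maximizing payoff is exactly minimizing expected regret, and, following the logic of the Example, truthfulness is dominant if and only if no replication strategy strictly lowers the prior-averaged regret. To refute replication-proofness it therefore suffices to exhibit one instance, one horizon $T$, and one replication strategy for which the averaged regret strictly drops, i.e.\ a violation of the TRP inequality such as $\Ex{\reg(\cI_O)}>\tfrac12\bigl(\Ex{\reg(\cI_A)}+\Ex{\reg(\cI_B)}\bigr)$ (or its $m$-arm analogue). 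I would not use the symmetric two-point Example verbatim, since there the penalty of replicating a suboptimal arm dominates; instead I would tune the prior.

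The key device is to make rewards deterministic: take $F$ supported on $\{0,1\}$ with mass $p=\Pr[\mu=1]$ and Bernoulli reward arms, so that UCB1 becomes a fully deterministic process and the regret of each realized instance equals the exact count of pulls of mean-$0$ arms, a quantity I can obtain by unrolling the index recursion. I then exploit precisely the two features the paper highlights, UCB being deterministic-up-to-tie-break and adaptive, at a carefully chosen finite horizon $T$. The mechanism I expect to work is the following: when the replicated arm happens to be optimal, the extra high-index copy it introduces ``blocks'' one or more exploratory pulls of a mean-$0$ arm that UCB would otherwise make around the crossover round, strictly lowering regret; when the replicated arm happens to be suboptimal, its only effect within the short horizon is its single initialization pull, a bounded penalty. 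Averaging over the prior and tuning $p$ and $T$ makes the optimal-copy savings outweigh the suboptimal-copy penalty, reversing the inequality above. Crucially, replication-proofness is demanded for \emph{every} $T\in\N$, so a single such horizon suffices; I expect the asymptotic $O(\log T)$ accounting to favor truthfulness, which is exactly why the witness must live at finite $T$. For the stated robustness ``regardless of the exploration parameter $f$'' I would carry the unrolling with a generic bonus $\sqrt{f(t,n_a)}$ and invoke only its monotonicity (increasing in $t$, decreasing in $n_a$) to show that both the blocking round and the penalty bound persist.

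For the multi-agent claim I would embed this single-agent witness into H-UCB so that the failure localizes to one target agent independently of $n$. Concretely, I make all but one agent ``dummies'' with known, well-separated deterministic arms, chosen so that Phase 1 (UCB over agents-as-super-arms) selects the target agent on a predictable, replication-independent set of rounds, and so that within those rounds Phase 2 (UCB inside the target agent) faces exactly the single-agent sub-instance analyzed above. Because the target agent's payoff counts only rewards collected while it is selected, and because replication inside the target agent perturbs only its own Phase-2 dynamics while its Phase-1 super-arm statistic is pinned by the same deterministic rewards, the strict regret drop from the single-agent analysis transfers to a strict payoff gain, for any number of dummy agents. The main obstacle, by far, is the finite-horizon combinatorial computation of the second paragraph: exhibiting an explicit prior mass $p$, horizon $T$, and tie-break rule at which UCB's deterministic trajectory provably realizes the reversed TRP inequality, and certifying that this crossover is stable under an arbitrary monotone bonus $f$. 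The Phase-1/Phase-2 decoupling in the multi-agent lift should be comparatively routine once the dummies are chosen to freeze the super-arm statistics.
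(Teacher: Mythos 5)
Your single-agent plan has the right skeleton --- reduce payoff maximization to regret minimization, use a two-point prior with deterministic rewards so UCB's trajectory can be unrolled exactly, and exhibit a finite horizon $T$ at which the prior-averaged regret of the replicating strategy strictly drops. This is precisely the paper's route. But there are two genuine problems. First, your ``key device'' of tuning the prior mass $p$ is provably vacuous: if the agent replicates arm $1$, the two asymmetric realizations (replicated arm optimal, giving instance $\cI_B=(1,1',0)$; replicated arm suboptimal, giving $\cI_C=(1,0,0')$) each occur with probability exactly $p(1-p)$, so the expected utility gain from replication is $\alpha\, p(1-p)\bigl(2\reg_A(T)-\reg_B(T)-\reg_C(T)\bigr)$ and its \emph{sign} is independent of $p$. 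Tuning $p$ cannot make ``optimal-copy savings outweigh the suboptimal-copy penalty''; the inequality $\reg_A(T)>\tfrac12(\reg_B(T)+\reg_C(T))$ must be won combinatorially, and you explicitly defer that computation as your ``main obstacle.'' That computation is the entire content of the paper's proof: it defines the run lengths $s_i$ from the UCB index-crossing condition, unrolls the three trajectories, and chooses $T=s_1+s_2+2$ --- the round right after $A$'s second bad pull but right before $C$'s second \emph{pair} of bad pulls --- so that $\reg_A(T)=2$, $\reg_B(T)=1$, $\reg_C(T)=2$. At this horizon the suboptimal-copy penalty is $0$, not $1$ (the initialization pull of $0'$ merely substitutes for $A$'s second exploratory pull of $0$ inside the window), which is exactly why the symmetric $p=\tfrac12$ example you rejected does work. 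Your accounting (savings $\ge 1$, penalty $\le 1$) is too coarse to close the argument, and your proposed escape route cannot close it either.

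The multi-agent lift rests on a false premise. You claim the target agent's Phase-1 super-arm statistic is ``pinned by the same deterministic rewards'' and hence that its selection rounds are replication-independent. But the agent-level empirical average in H-UCB is the average of the rewards actually collected, which depends on the internal pull sequence, which replication alters: at internal pull count $k=s_1+s_2+2$ the target's empirical average is $c(s_1+s_2+1)/(s_1+s_2+2)$ under instance $B$ but $c(s_1+s_2)/(s_1+s_2+2)$ under $A$ and $C$. The paper's proof does not avoid this dependence --- it \emph{exploits} it: it first proves the selection rounds coincide across $A,B,C$ up to the critical round $\tau_k$ (this itself requires an argument, not an assumption), and then chooses the dummy agent's mean $\mu$ inside the narrow interval $\bigl(\tfrac{c(s_1+s_2)}{s_1+s_2+2}+\theta,\ \tfrac{c(s_1+s_2+1)}{s_1+s_2+2}+\theta\bigr)$ so that the empirical-average gap created by replication flips exactly one Phase-1 decision at round $\tau_k+1$ in the replicator's favor, yielding a strict expected-utility gain of order $p(1-p)\alpha c$. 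Without either establishing schedule coincidence (which your dummy construction does not do) or engineering a favorable schedule divergence (which your framework forbids by assumption), the transfer of the single-agent failure does not go through, so the second half of the theorem remains unproven under your plan.
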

The proof for the single-agent setting relies on a construction of bad problem instance which exploits the deterministic and adaptively exploring nature UCB.
To briefly explain the proof, we consider an agent with two Bernoulli arms whose prior distribution is supported on $\{0,1\}$, where it samples mean $0$ and $1$ equally likely.
Let's focus on a realization in which one arm has mean $0$, and the other has $1$.
Observe that replicating one arm essentially yields one of the following two bandit instances: (i) arms with mean rewards $(1,1,0)$ or (ii) arms with mean rewards $(1,0,0)$.
Here, we write $(x,y,z)$ to denote bandit instance with three arms  mean rewards $1,1,0$.
Note that, intuitively, (i) yields a lower expected regret, but higher in (ii), compared to the original instance with $(0,1)$.
Thus, our analysis mainly reduces to show that the loss from (ii) is larger than the gains from (i).
This can be made indeed true since after observing the dynamics of instance (ii), we can essentially set the time horizon to be exactly right \emph{before} the two arms with mean rewards zero are \emph{consecutively} chosen.
In this way, the loss from (ii) can be made smaller while the gains from (i) is moderately large, which implies the dominance of replicating strategy.
The proof for the multi-agent setting essentially is based on the fact that its Phase 2 does not guarantee replication-proofness within the chosen agent due to the single-agent argument, along with carefully chosen problem parameters to propagate this phenomenon to Phase 1 so as to increase the agent's expected utility under replication.

\section{Warm-up: Single-agent Setting}\label{sec:single}
As noted thus far, UCB is not replication-proof even in the single agent setting.
Indeed, as per Theorem~\ref{thm:negative}, it is not obvious to verify which algorithm would satisfy replication-proofness even in the single-agent setting.
In this section, we step towards in constructing replication-proof algorithm, by carefully investigating the single-agent setting first.
We present a set of conditions which is sufficient for an algorithm to be replication-proof in the single-agent setting, and then, as an example, will prove that exploration-then-commit (ETC) will satisfy these properties.
As we restrict our attention to single-agent setting in this section, we often omit the agent index $i$ in the notations.

To this end, we first introduce several notations.
For any natural number $l$, let $\cP_l$ be a set of all possible permutations $\sigma:[l] \mapsto [l]$.
Now we introduce a dictionary form of bandit instance $\cI$.
Given the realization of the arms' mean rewards, suppose that the set of mean rewards constitute a sorted sequence $\{1 \ge \mu_1 > \mu_2 > \ldots > \mu_l  > 0\}$, where some arms may share the same mean reward.
For each $\mu_a$ for $a \in [l]$, we count the number of arms which have mean reward $\mu_a$, and denote the count by $c_a$.
Then, our bandit instance $\cI$ can essentially presented as tuples of the mappings $\bigtimes_{a\in[l]}(\mu_a:c_a)$, which we say dictionary form of the bandit instance $\cI$.

Suppose that we have a standard multi-armed bandit instance $\cI$ with dictionary form $(\mu_a:c_a)_{a \in [l]}$.
Then, we define a \emph{permuted bandit instance} $\cI_\sigma$ to be a bandit instance with dictionary form of $\bigtimes_{a\in[l]}(\mu_a: c_{\sigma(a)})$, \ie arms with mean reward $\mu_i$ appears $c_{\sigma(a)}$ times.
For instance, given the bandit instance $\cI = (0.5: 1, 0.7: 2)$, consider a permutation $\sigma$ such that $\sigma(1) = 2$ and $\sigma(2)=1$ correspondingly.
Then, one can easily verify that $\cI_\sigma = (0.5:2, 0.7:1)$.

Now, we introduce the random permutation regret.
\begin{definition}[Random permutation regret]
	Given a single-agent bandit instance $\cI$ with $l$ arms, we define an algorithm $\fA$'s \emph{random permutation regret} (RP-Regret) as follows.
	\begin{align}
		\rpreg(\fA, T) 
		&= 
		\Exu{\sigma \in \cP_l}{\reg_{I_{\sigma}}(\fA, T)}
		\nonumber
		\\
		&=
		\frac{1}{|\cP_l|} \cdot \parans{\sum_{\sigma \in \cP_l}\reg_{I_{\sigma}}(\fA, T)}.\label{eq:epf}
	\end{align}
\end{definition}
Given the random permutation regret, we define the following property of algorithm, which will play an essential role in obtaining truthful algorithm in the single-agent setting.
\begin{definition}[Truthful under random permutation]
	Given an arbitrary single-agent bandit instance $\cI$ with $l$ arms,
	consider a truthful strategy $\cO$ and arbitrary strategy $\cS$.
	An algorithm $\fA$ is \emph{truthful under random permutation} (TRP) if
	\begin{align*}
		\rpreg(\fA,T;\cO) 
		\le 	
		\rpreg(\fA,T;\cS).
	\end{align*}
\end{definition}
Note that TRP requires the inequality to holds for arbitrary bandit instance $\cI$.
The foundation of these notions essentially builds upon the observation from our counterexample provided in Theorem~\ref{thm:negative}.
To elaborate more, 
consider a strategy $\cS$ that only replicates the first arm given a bandit instance $\mu_1\ge \mu_2 \ge \ldots \ge \mu_l$.
Then, we can observe that any permutation $\sigma \in \cP_l$ results in a bandit instance that belongs to the following family of bandit instances: define $\cI^* = \cup_{i \in [l]}\cI_i$ such that
\begin{align*}
    \cI_1 = \{\underbrace{\mu_1, \mu_1}_{repl}, \mu_2,  \ldots, \mu_l\} , 
    \ldots, \cI_l = \{\mu_1, \mu_2, \ldots, \underbrace{\mu_l,\mu_l}_{repl}\}.
\end{align*}


In words, $\cI_i$ denotes the case that the strategy $\cS$ replicates the arm with $i$-th highest reward.
For example, suppose that $\sigma \in \cP_l$ satisfies $\sigma(1) = i$ for some $i \in [l]$.
Since $\cS$ replicates the first arm, under the permutation $\sigma$, the arm with parameter $\mu_i$ is replicated, and thus it corresponds to $\cI_i$.
Then, our notion of TRP essentialy asks what is an algorithm that makes the expected regret of $\cI$ smaller than (or equal to) the average of expected regret of $\cI_1,\dots, \cI_l$.
Put it differently, TRP essentially asks the following fundamental question, as we pointed out in the introduction.
\begin{center}
	\emph{Given a bandit instance, does ``uniform randomly'' replicate one arm increase the expected regret?}
\end{center}
This foundation of TRP will play an essential role in constructing replication-proof algorithm in both the single-agent and the multi-agent setting.
Indeed, from our proof of Theorem~\ref{thm:negative}, one can observe that UCB does not satisfy the above question, and thus fails to be replication-proof.

\begin{definition}[Permutation invariance]
	An algorithm is \emph{permutation invariant} (PI) if given the random bits of the algorithm and the arms, the choice of which arms to pull remains the same (up to tie-breaking) when the index of the arms are permuted.
\end{definition}
Note that the permutation-invariance essentially requires the algorithm to be agnostic to the index of the arms, \ie does not use the arm index to choose the arm, possibly except the tie-breaking case.
One can easily verify that it holds for a broad class canonical algorithms such as UCB and $\eps$-greedy.

\begin{theorem}\label{thm:single_pf_rp}
	In the single-agent setting, if an algorithm is TRP and PI, then it is replication-proof.
\end{theorem}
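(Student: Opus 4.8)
The plan is to show that in the single-agent setting the agent's payoff equals, up to a strategy-independent constant, the negative of its expected regret, and then to convert the expectation over the prior into the random-permutation regret so that TRP applies directly. Throughout I identify a strategy with a multiplicity vector $\boldsymbol{m} = (m_1,\dots,m_l)$, where $m_k \ge 1$ is the total number of copies of original arm $k$ (one original plus $c_k$ replicas); the truthful strategy $\cO$ corresponds to $\boldsymbol{m} = (1,\dots,1)$. For a mean vector $\muvec$ and a multiplicity vector $\boldsymbol{m}$, write $\cI[\muvec,\boldsymbol{m}]$ for the induced instance and $\reg_{\cI}(\fA,T)$ for the expected regret on $\cI$; thus the instance induced by $\cS$ on the realization $\muvec$ is $\cI_\cS(\muvec) = \cI[\muvec,\boldsymbol{m}_\cS]$.

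\emph{Step 1 (payoff $=$ constant $-$ regret).} Since there is a single agent, every registered arm belongs to it, so $\Ind{I_t \in \cS}=1$ for all $t$ and $U(\cS;E(\muvec)) = \alpha\sum_{t=1}^T R_t$. As replicas share their originals' means, the benchmark $\mu^\star := \max_{a\in\cO}\mu(a)$ is the same for every strategy. Taking conditional expectations gives $\Ex{U(\cS;E(\muvec)) \mid E(\muvec)} = \alpha\bigl(T\mu^\star - \reg_{\cI_\cS(\muvec)}(\fA,T)\bigr)$, and averaging over $\muvec$,
\[
u(\cS) = \alpha T\,\Exu{\muvec}{\mu^\star} - \alpha\,\Exu{\muvec}{\reg_{\cI_\cS(\muvec)}(\fA,T)},
\]
where the first term does not depend on $\cS$. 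Hence, for a fixed $T$, $\cO$ is dominant if and only if $\Exu{\muvec}{\reg_{\cI_\cO(\muvec)}(\fA,T)} \le \Exu{\muvec}{\reg_{\cI_\cS(\muvec)}(\fA,T)}$ for all $\cS$.

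\emph{Step 2 (symmetrization: prior expectation $=$ expected RP-regret).} Because the original means are i.i.d.\ from $F$, the vector $\muvec$ is exchangeable, and by PI the expected regret is invariant under simultaneously relabeling means and multiplicities: $\reg_{\cI[\muvec\circ\tau,\,\boldsymbol{m}\circ\tau]}(\fA,T) = \reg_{\cI[\muvec,\,\boldsymbol{m}]}(\fA,T)$ for every $\tau\in\cP_l$. Combining this invariance (with $\tau=\sigma^{-1}$) with $\muvec\circ\sigma^{-1}\overset{d}{=}\muvec$ shows that, for each fixed $\sigma$, replacing $\boldsymbol{m}$ by $\boldsymbol{m}\circ\sigma$ leaves $\Exu{\muvec}{\cdot}$ unchanged; averaging over $\sigma\in\cP_l$ then yields
\[
\Exu{\muvec}{\reg_{\cI_\cS(\muvec)}(\fA,T)} = \Exu{\muvec}{\Exu{\sigma\in\cP_l}{\reg_{(\cI_\cS(\muvec))_\sigma}(\fA,T)}} = \Exu{\muvec}{\rpreg(\fA,T;\cS)},
\]
since permuting the multiplicities across the distinct mean values is exactly the operation defining $\rpreg$. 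Applying TRP for every realization $\muvec$, namely $\rpreg(\fA,T;\cO)\le\rpreg(\fA,T;\cS)$, and integrating over $\muvec$ gives $\Exu{\muvec}{\reg_{\cI_\cO(\muvec)}(\fA,T)}\le\Exu{\muvec}{\reg_{\cI_\cS(\muvec)}(\fA,T)}$; Step 1 then converts this into $u(\cO)\ge u(\cS)$. As the argument holds for arbitrary $\cS$ and arbitrary $T$, $\cO$ is a dominant strategy and $\fA$ is replication-proof.

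The main obstacle is Step 2: making the passage from the Bayesian expectation over the prior to the combinatorial random-permutation regret fully rigorous. PI is precisely what licenses the symmetrization, and the delicate point is matching the uniform average over position-permutations arising from exchangeability with the dictionary-form permutation of multiplicities in the definition of $\rpreg$. This identification is transparent when $F$ is atomless, so that the realized means are distinct almost surely; when the prior carries atoms and means can tie, one must check that grouping equal means into the dictionary form does not alter the averaged regret, which is where the mild assumptions on $F$ enter.
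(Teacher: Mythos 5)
Your proof is correct and follows essentially the same route as the paper's: both reduce the payoff comparison to a comparison of prior-expected regrets, use permutation invariance together with exchangeability of the i.i.d.\ means to rewrite the prior expectation as an expectation of the random-permutation regret (the paper does this by integrating over the sorted region and summing over permutations, you by a change of variables moving $\sigma$ from the means onto the multiplicity vector), and then apply TRP pointwise before integrating. The tie/atom subtlety you flag at the end is sidestepped in the paper exactly as you suggest, by assuming the prior has no point mass for ease of exposition.
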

The proof essentially follows from the observation that the algorithm's ex-ante regret can be partitioned into a disjoint set of random permutation regret, with respect to the ex-post realization of the mean rewards.
Based on this disjoint partition, we can effectively marginalize over the priors while maintaining the partition, and conclude that the dominance over the partition yields the overall dominance.

Furthermore, we observe that TRP is indeed a necessary condition for an algorithm to be replication-proof in the single-agent setting.
\begin{proposition}\label{thm:truthful-necessary}
	If an algorithm is not TRP, then it is not replication-proof.
\end{proposition}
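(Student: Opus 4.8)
The plan is to prove the contrapositive: assuming an algorithm $\fA$ is replication-proof, I would show it must be TRP. The statement of Proposition~\ref{thm:truthful-necessary} is the logical contrapositive of this, so it suffices to establish that replication-proofness implies TRP. The key conceptual link, already surfaced in the discussion preceding Theorem~\ref{thm:single_pf_rp}, is that the ex-ante payoff an agent computes by marginalizing over the prior realizations decomposes into a sum over ``dictionary forms'' of bandit instances, and within each such form the relevant quantity is exactly a random permutation regret. So replication-proofness, which is a statement about ex-ante payoffs, should force an inequality on each RP-regret term.

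First I would fix an arbitrary single-agent bandit instance $\cI$ with $l$ arms (in dictionary form) and an arbitrary strategy $\cS$, and suppose toward contradiction that TRP fails, i.e. $\rpreg(\fA,T;\cO) > \rpreg(\fA,T;\cS)$ for this $\cI$ and $\cS$. I would then construct a prior $F$ (supported on the finitely many mean-reward values appearing in $\cI$) under which, with positive probability, the realized multiset of mean rewards is exactly the dictionary form of $\cI$, and for which each permutation of the count-vector is equally likely. The point of this construction is that when the agent plays strategy $\cS$ rather than the truthful $\cO$, the resulting family of post-replication instances is precisely $\{\cI_\sigma\}_{\sigma\in\cP_l}$ traversed uniformly, exactly as set up in the definition of $\rpreg$. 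Because the agent's payoff is (up to the constant factor $\alpha$) total reward, and total reward equals the clairvoyant optimum minus regret, maximizing ex-ante payoff is equivalent to minimizing ex-ante regret; hence the agent strictly prefers $\cS$ over $\cO$ under this prior, contradicting that $\cO$ is a dominant strategy and therefore that $\fA$ is replication-proof.

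The main obstacle, and the step I would spend the most care on, is making the translation between \emph{regret} and \emph{agent payoff} airtight. The agent's payoff counts only rewards from arms the agent owns, and after replication the replicated copy is also owned by the agent, so replication can only weakly increase the agent's raw collected reward for a fixed arm-pull sequence; the subtlety is that the pull sequence itself changes. I would argue that the optimal benchmark term in the regret is constant across $\cO$ and $\cS$ (replication does not change the set of original arms $\cO$, hence not the clairvoyant optimum over $\cup_i\cO_i$), so that comparing payoffs reduces cleanly to comparing the $-\sum_t \mu(I_t)$ terms, i.e. comparing regrets. A second delicate point is ensuring the constructed prior genuinely isolates a single dictionary form with the correct uniform-over-permutations weighting; here I would lean on permutation invariance only implicitly and instead choose $F$ to place equal mass on each distinct value so that the multiset realization and its coordinate permutations occur with identical probability, which is what converts the strict RP-regret inequality into a strict ex-ante payoff gap.

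Finally I would note the logical structure: I have shown that if some instance $\cI$ and strategy $\cS$ witness a TRP violation, then there is a prior under which $\cO$ is not dominant, so $\fA$ is not replication-proof. Contrapositively, not-TRP implies not-replication-proof, which is exactly the claim. I expect no heavy computation is needed; the content is entirely in the reduction and in verifying that the payoff difference inherits the sign of the RP-regret difference.
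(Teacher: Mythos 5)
Your overall strategy---transferring a violation of the RP-regret inequality into a violation of dominant-strategy truthfulness via the decomposition of the ex-ante payoff over realized instances---is the same route the paper takes (its own proof of this proposition is a two-sentence sketch invoking exactly that decomposition from Theorem~\ref{thm:single_pf_rp}). You also handle the regret-versus-payoff translation correctly: in the single-agent setting the agent owns every registered arm, so its payoff is $\alpha$ times the total collected reward, and the clairvoyant benchmark over original arms is unaffected by replication, so comparing ex-ante payoffs is indeed equivalent to comparing expected regrets.

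The genuine gap is in the isolation step. Because the model samples the $l$ arms' means i.i.d.\ from $F$, \emph{no} prior supported on the values appearing in $\cI$ can make the realized instance be a permutation of $\cI$ with probability one: realizations whose multiset of values differs from that of $\cI$---for example all arms drawing the largest value, or counts other than those of $\cI$---occur with positive probability. Hence your claim that ``the resulting family of post-replication instances is precisely $\{\cI_\sigma\}_{\sigma\in\cP_l}$ traversed uniformly'' is false, and the ex-ante payoff difference $u(\cS)-u(\cO)$ is a positively weighted sum of RP-regret differences over \emph{all} realizable multisets, of which the one coming from $\cI$ is only a single term. Failure of TRP gives you a strictly favorable term at $\cI$ and nothing more: at the other realizable instances the RP-regret inequality may strictly favor $\cO$ by arbitrarily large margins, so the sign of the sum is undetermined and the strict preference for $\cS$ does not follow. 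Your construction does go through when the contaminating terms provably vanish, e.g.\ $l=2$ arms with a two-point support, where every realization other than the two permutations of $\cI$ has all means equal and hence contributes zero regret difference under every strategy; this is exactly the structure exploited in the paper's UCB counterexample (Theorem~\ref{thm:negative}). But it does not work for a general instance $\cI$ witnessing the TRP violation, which is what the proposition requires. To be fair, the paper's own proof glosses over the same difficulty (it never constructs a prior at all, and the authors left themselves a note to double-check this step); your write-up simply makes the missing step explicit. Closing the gap would require either showing that a TRP violation at some instance forces one at an ``isolatable'' instance, or a quantitative argument bounding the contribution of the other realizable multisets.
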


\begin{discussion}
	One may wonder if PI can be violated for any replication-proof algorithm.
	For example, one can consider a black-box algorithm that first randomly selects an agent without using any statistical information but only with the agent indices, and runs a bandit algorithm within the agent.
	Further, suppose that it asymmetrically favors some agents' indices with higher probability in the first phase.
	Such an algorithm indeed is not permutation invariant, since the permutation of agent index would lead to a different outcome given the same reward tapes.
	From agents' perspective, however, each agent only needs to care about the secondary arm selection phase after the agent selection since they cannot change the probability to be selected in the agent selection phase.
	Using these observations, one can observe that such an algorithm is indeed replication-proof, but not permutation invariant.
	One may refine the notion of PI to be restricted within each agent, but we do not argue more details as it is beyond of our interest.
\end{discussion}

\paragraph{Constructing TRP algorithm}
We now present an algorithm that satisfies TRP.
In what follows, we mainly prove that exploration-then-commit (ETC) satisfies TRP once equipped with a proper parameter.
Its pseudocode is presented in Algorithm 3 in the appendix.
We suppose that the algorithm breaks tie in a uniform random manner.\footnote{Any tie-breaking rule does not hurt our analysis.}
ETC essentially decouples the exploration phase and the exploitation phase very explicitly, and thus brings a significant advantage in comparing its expected regret under several problem instances.

For analytical tractability, we pose some \emph{assumptions} on the support of the prior distributions.
Namely, for any agent $i$, $F_i$ has a discrete support over $[0,1]$ for $i \in \cN$.
Further, we define $\Delta_i$ be the minimum gap between any two possible outcomes from $F_i$, and let $\Delta = \min_{i \in \cN}\Delta$.
Furthermore, the algorithm knows this gap $\Delta$.
\begin{discussion}
	We discuss how one can weaken these assumptions at the end of this section, possibly at the cost of polylogarithmic blowup in the regret rate or another assumptions.
	We also remark that such assumptions, especially that the algorithm knows the minimum gap, often appear in the literature, \cf \cite{auer2002finite,audibert2010best,garivier2016explore}.
 \end{discussion}
 \begin{discussion}
	In practice, such scenario is fairly plausible in real-world applications. 
	For example in content platforms, suppose that a content creator comes up with several contents and is about to register. 
	Assume the reward of each content is just the number of views. 
	Typically, the content creator does not exactly know the reward of each content but has historical data on its previous contents. 
	The historical data can be used to structure a prior distribution of the quality of his/her contents. 
	Since each content creator usually has a very small number of original contents compared to the recommendation algorithm's time scale (one for each traffic), this prior distribution may consist of a fairly limited number of samples, thereby inducing a discrete support with a few number of points.
\end{discussion}

Then, our main result in the single-agent setting can be written as follows.
\begin{theorem}\label{thm:single_rp}
	ETC with exploration length $m \ge \frac{2}{\Delta^2} l \ln (2T)$ is TRP in the single-agent setting.
        Further, it has regret upper bound of
        \begin{align*}
            \sum_{a\in [l]} \Big( \frac{2\delta_a l\ln{(2T)}}{\Delta^2} + 1 \Big),
        \end{align*}
        where $l$ denotes the number of arms and $\delta_a$ denotes the gap of the mean rewards between optimal arm and arm $a$ for the single agent.
\end{theorem}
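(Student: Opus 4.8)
The plan is to exploit the explicit decoupling of exploration and exploitation in ETC. Recall that ETC pulls each of the (replicated) arms exactly $m$ times in a round-robin exploration phase and then commits to the arm of highest empirical mean for the remaining rounds. Because the regret in our model is a pseudo-regret measured through the mean rewards $\mu(I_t)$, the exploration phase contributes a \emph{deterministic} amount once the means are fixed: on an instance whose arms (counted with multiplicity) have gaps $\{\delta_j\}$ to the best mean, the exploration regret is exactly $m\sum_j \delta_j$. Writing $N$ for the total number of arms and $\hat a$ for the committed arm, I would record the exact decomposition
\begin{align*}
\reg_{\cI}(\fA,T) = m\sum_{j}\delta_j + (T-mN)^{+}\sum_{a}\delta_a\,\Pr[\hat a = a],
\end{align*}
where the first term is deterministic and the second (the exploitation slack) is nonnegative.

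For the stated regret bound I would first control the exploitation slack. For a suboptimal arm $a$, committing to $a$ requires its empirical mean (over $m$ samples) to beat that of an optimal arm; the difference of the two empirical means is a zero-mean average of $2m$ independent $[0,1]$ terms, so Hoeffding gives $\Pr[\hat a = a]\le \exp(-m\delta_a^2)$. Plugging $m \ge \tfrac{2}{\Delta^2} l\ln(2T)$ and $\delta_a \ge \Delta$ yields $\exp(-m\delta_a^2)\le (2T)^{-2l}\le (2T)^{-2}$, so the contribution of each suboptimal arm to the slack is at most $T\cdot (2T)^{-2}\le 1$. Combining this with the exploration term $m\delta_a = \tfrac{2\delta_a l\ln(2T)}{\Delta^2}$ gives the claimed bound $\sum_{a\in[l]}\big(\tfrac{2\delta_a l\ln(2T)}{\Delta^2}+1\big)$.

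For TRP I would average the decomposition over a uniformly random permutation of the realized means onto the agent's arm slots. If the agent assigns multiplicities $n_1,\dots,n_l$ to its $l$ slots (the truthful strategy $\cO$ has all $n_k=1$), then for a fixed slot $k$ the permuted mean $\mu_{\sigma(k)}$ is uniform over the $l$ realized means, so its expected gap to the best equals the average gap $\bar g := \tfrac1l\sum_j(\mu_{\max}-\mu_j)$. Hence the permutation-averaged exploration regret is exactly $m\bar g\,N$ with $N=\sum_k n_k$, and dropping the nonnegative exploitation slack gives $\rpreg(\fA,T;\cS)\ge m\bar g\,N$. For the truthful strategy $N=l$ and the slack is at most $1$, so $\rpreg(\fA,T;\cO)\le m\bar g\,l + 1$. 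Any genuine replication has $N\ge l+1$; since the chosen $m$ forces $m\bar g \ge 2\ln(2T) > 1$ whenever $\bar g>0$ (using $\bar g\ge \Delta/l$ and $m\ge \tfrac{2l\ln(2T)}{\Delta^2}$), I obtain the chain $\rpreg(\fA,T;\cO)\le m\bar g\,l+1\le m\bar g\,(l+1)\le m\bar g\,N\le \rpreg(\fA,T;\cS)$, which is exactly TRP; the case $\bar g=0$ (all arms optimal) is trivial since every regret term vanishes.

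The crux — and the step I expect to be most delicate — is showing that the exploitation slack can never overturn the exploration gain produced by an extra replica. This is precisely where the calibration of $m$ enters: it must be large enough that the commit step is correct with overwhelming probability (so the slack is $O(1)$), yet the marginal exploration cost $m\bar g$ of a single replica still strictly exceeds that slack. I would also dispatch a few boundary cases cleanly: $l=1$ or $\bar g=0$ (trivial), and extremely aggressive replication with $mN>T$ (then there is no exploitation phase at all and the exploration term alone dominates). A final point requiring care is that the realized means may tie — which happens with positive probability under the discrete prior — so I would phrase the permutation average over the $l!$ slot-permutations rather than over distinct dictionary values, noting that the symmetry computation yielding $\bar g$ is insensitive to such ties.
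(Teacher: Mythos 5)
Your proposal is correct and follows essentially the same route as the paper's proof: you decompose the ETC regret into a deterministic exploration term plus an exploitation slack controlled by Hoeffding (your single application to the difference of empirical means, versus the paper's union bound, is an immaterial variant), and you establish TRP by symmetrizing the exploration regret over permutations and showing the $O(1)$ slack is dominated by the exploration cost of one extra replica, where your condition $m\bar{g}\ge 1$ plays exactly the role of the paper's inequality $m \ge l$ in the chain $\sum_a (m+1)\delta_a \le \tfrac{l+1}{l}\sum_a m\delta_a \le \tfrac{1}{l}\sum_a m\delta_a(\|r\|_1+l)$. The only substantive addition is your explicit treatment of the truncated case $mN > T$, which the paper's lower bound $\mathbb{E}[n_{\sigma(a)}] \ge m$ silently assumes away.
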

The proof essentially follows from the standard regret analysis of ETC along with carefully chosen $m$.
By setting $m$ sufficiently large, we can essentially decrease the probability that a suboptimal arm is chosen to be $o(1/T)$.
This will guarantee that the expected number of rounds the suboptimal arm is chosen is smaller than $1$, whereas any replicating strategy incurs a simple regret lower bound larger than this quantity.
Together with Theorem~\ref{thm:single_pf_rp} and the fact that ETC is PI (which is straightforward to verify), it follows that ETC is replication-proof in the single-agent setting.

Note that the proof of Theorem~\ref{thm:single_rp} heavily relies on the fact that the algorithm can observe $\Delta$ in advance and set the exploration length correspondingly.
Thus, despite we aim to obtain a \emph{prior-independent} algorithm that does not require any knowledge in the agent's prior distributions, it is not \emph{truly} prior-independent in an algorithmic manner since it requires some problem parameters to operate the algorithm.
We discuss several ways to refine such restrictions, thereby suggesting a road for truly prior-independent algorithm.
First, assume $\Delta$ is constant but not known to the algorithm.
In this case, one may replace $\Delta$ in the algorithm to be some increasing functions $f(T) = \omega(1)$, and it is straightforward to verify that Theorem~\ref{thm:single_rp} still holds, but in an asymptotical regime.
Similarly, one can effectively wipe out the dependency on the total number of original arms $l$ by assuming that $l = O(1)$ and replacing it with some increasing functions.
Furthermore, the knowledge on $T$ can be wiped out by the standard doubling trick, \cf see Chapter 1 in ~\cite{slivkins2019introduction}, without sacrificing the truthfulness.
Remark that all these refinements only introduce polylogarithmic blowup in the regret upper bound, where we do not provide a formal proof as it is a cumbersome application of standard techniques.
The following theorem spells out these arguments.
\begin{proposition}\label{thm:single-refine1}
	Assume $l = O(1),\Delta = \Omega(1)$.
	Then, there is a prior-independent algorithm with polylogarithmic regret satisfying TRP and PI in the single-agent setting for sufficiently large $T$, that does not require any information on problem parameters.
\end{proposition}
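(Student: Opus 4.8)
The plan is to instantiate Theorem~\ref{thm:single_rp} with its three problem-dependent quantities---$\Delta$, $l$, and $T$---each replaced by an observable or prior-free surrogate, and then verify that TRP, PI, and polylogarithmic regret all survive the substitution. Concretely, under the standing assumptions $l=O(1)$ and $\Delta=\Omega(1)$ the quantity $\tfrac{2l}{\Delta^2}$ is a fixed constant $C$. I would therefore run ETC with exploration length $m=\phi(T)\ln(2T)$, where $\phi$ is any fixed function with $\phi(T)=\omega(1)$ that is still polylogarithmic, e.g.\ $\phi(T)=\ln\ln T$. This $\phi$ is chosen without reference to $\Delta$ or $l$, so the resulting algorithm is genuinely prior-independent; the only quantities it reads off are the number of registered arms (to pace the round-robin exploration) and the current round index, both of which are always observable.

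For a \emph{known} horizon this already gives TRP for large $T$: since $\phi$ diverges while $C$ is constant, we have $\phi(T)\ge C$ once $T\ge T_0$, hence $m\ge\tfrac{2l}{\Delta^2}\ln(2T)$ and Theorem~\ref{thm:single_rp} applies verbatim. What I would additionally extract from that proof is a \emph{quantitative} version of the TRP inequality: replicating inflates the exploration cost of the permuted instance by $\Omega(m)=\omega(1)$ (each extra arm is pulled $m$ times and, after averaging over $\cP_l$, lands on a suboptimal mean with constant probability and gap $\ge\Delta$), whereas the exploitation error of every instance involved stays $o(1)$ once $m=\omega(\ln T/\Delta^2)$. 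Thus $\rpreg(\fA,T;\cS)-\rpreg(\fA,T;\cO)=\Omega(m)$ whenever $\cS$ genuinely replicates a suboptimal arm, and equals $0$ when all means coincide. This strictly positive surplus is the lever that makes the horizon-free version go through.

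To remove knowledge of $T$ I would apply the standard doubling scheme: run a fresh copy of the modified ETC on epochs of length $T_k=2^k$. Regret and RP-regret are both additive across epochs and PI is inherited trivially, so it remains to control the two properties epoch by epoch. For regret, the truthful strategy keeps $l=O(1)$ arms, so epoch $k$ costs $O(\phi(T_k)\ln T_k)$, and summing over the $O(\log T)$ epochs yields $O(\phi(T)(\ln T)^2)=\operatorname{polylog}(T)$. For TRP I would split the epochs at the threshold $T_0$ above: every epoch with $T_k\ge T_0$ is individually TRP with the $\Omega(m_k)$ surplus derived above, while the finitely many epochs with $T_k<T_0$ span only $O(T_0)=O(1)$ rounds in total and can therefore violate the TRP inequality by at most $O(1)$. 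Summing, the total RP-regret gap is at least $\Omega(\phi(T)\ln T)-O(1)$, which is strictly positive for all sufficiently large $T$ --- exactly the qualifier in the statement.

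The \textbf{main obstacle} is precisely this doubling-versus-TRP interaction: naive restarts reintroduce short initial epochs that need not be TRP, and a purely additive argument would be inconclusive without the quantitative surplus; the resolution is the $\omega(1)$-versus-$O(1)$ accounting above, which is why the claim is asserted only for sufficiently large $T$. A secondary delicate point, where I expect the bulk of the ``cumbersome but standard'' bookkeeping sits, is bounding the exploitation error uniformly when an adversarial $\cS$ replicates into a large number $l'$ of arms: since replicas share their parent's mean, a Hoeffding bound with a union over $l'$ arms keeps the misidentification probability $o(1/T)$ as long as $\phi\to\infty$, and the degenerate regime $l'm>T$ (exploration never commits) is absorbed by the permutation averaging inside $\rpreg$, so replication can only weakly increase the averaged regret. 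I would present these last estimates at the level of the governing inequalities rather than grinding through the constants.
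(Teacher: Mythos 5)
Your proposal is correct and takes essentially the same route as the paper: the paper's prior-independent algorithm likewise wraps ETC in doubling epochs $T=2^i$ with a prior-free exploration length $m=\lfloor \ln^4 T\rfloor$ (your $\phi(T)\ln(2T)$), invokes Theorem~\ref{thm:single_rp} once the epoch is long enough that the surrogate dominates $\tfrac{2l}{\Delta^2}\ln(2T)$, and argues the short epochs are harmless. Your $\omega(1)$-surplus versus $O(1)$-violation accounting for the small epochs is just a slightly more explicit version of the paper's observation that those epochs never leave round-robin exploration and hence do not change the agent's expected payoff.
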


\section{Multi-agent Replication-proof Algorithm}\label{sec:multi}

%
%
%
Now we turn our attention to the more general setting with multiple agents.
We start by using the machinery of \emph{hierarchical structure} in H-UCB, but in a different manner.
A direct implication of Theorem~\ref{thm:single_pf_rp} is that if we run a variant of H-UCB such that the first phase runs a simple uniform-random selection algorithm, and the second phase runs some arbitrary TRP and PI algorithms, then the resulting algorithm is replication-proof.
To formally see this why, let $\set{S}_i$ be the strategy of agent $i$.
Define $\Gamma_i(\set{S}_i)$ be the agent $i$'s utility under the single-agent setting with agent $i$, \ie when there are no other agents other than $i$.
Due to the uniformly random selection nature in the first phase, the expected utility of any agent $i$ can be simply written $u_i(\cS_i, \cS_{-i}) = {\Gamma_i(\set{S}_i)}/{n}$.
Note that the dynamic of the algorithm purely reduces to the dynamic of the second phase algorithm if there's only a single agent.
Hence, by Theorem~\ref{thm:single_pf_rp} and due to the TRP and PI of the second phase algorithm, we conclude that truthful registration is a dominant strategy for any agent $i$, and thus the result follows.
Note, however, that if we deploy uniform random selection algorithm in $\alg_1$, it essentially suffers \emph{linear regret} since it always choose any suboptimal agent with constant probability at every rounds.

To capture both of sublinear regret and replication-proofness, we need more sophisticated algorithm with sublinear regret in the first phase.
To this end, we present $\hbb$ presented in Algorithm 5 in the appendix, which adopts ETC in both phases along with additional device of \emph{restarting round}.
Similar to H-UCB, this consists of two phases each running ETC agent-wise and arm-wise manner, respectively.
For analytical tractability, we say that the set of arms belong to \emph{stochastically ordered family} if two arms $a$ and $b$ satisfy $\mu_a \ge \mu_b$, then the reward distribution of arm $a$ is (first-order) stochastic dominant over that of $b$, \ie $\Pr{r_a \ge x} \ge \Pr{r_b \ge x}$ for any $x\ge 0$, where $r_i$ denotes arm $i$'s reward random variable.\footnote{For example, Bernoulli arms and Gaussian arms parameterized by mean rewards, belongs to this family. We also refer to~\cite{yu2009stochastic} for more examples in exponential family.}
Further, define $L = \max_{i \in [n]}l_i$.
Then, our main result is as follows.
\begin{theorem}\label{thm:spf_rp}
	Consider a stochastically ordered family of arms, and discrete support of prior distributions.
	Consider $\hbb$ with $M \ge mL$ and $m = \frac{2}{\Delta^2} L\ln (2T)$, and $\tau = Mn$
	Then, $\hbb$ is replication-proof.\footnote{Similar to the discussion in Proposition~\ref{thm:single-refine1}, we can refine this results and make the algorithm truly prior-independent. We omit these details of beyond the interest.}
\end{theorem}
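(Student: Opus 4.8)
**The plan is to reduce the multi-agent analysis to the single-agent result via the hierarchical structure, using the restarting round to sever the statistical coupling between the two phases.**

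The plan is to show that $\hbb$ is replication-proof by establishing that truthful registration $\cO_i$ is a dominant strategy for each agent $i$, regardless of $\cS_{-i}$. First I would unpack the two-phase structure: Phase 1 runs ETC agent-wise, treating each agent as a single aggregated arm, and Phase 2 runs ETC arm-wise within the selected agent. The key device is the restarting round $\tau = Mn$, which re-initializes the intra-agent arm statistics $(\muhat_{i,a}, n_{i,a})$ once an agent is committed to in Phase 1. The goal of invoking this device is to argue that, conditioned on agent $i$ being selected in the exploitation rounds of Phase 1, the subsequent dynamics within agent $i$ reduce \emph{exactly} to the single-agent ETC instance analyzed in Theorem~\ref{thm:single_rp}. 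The stochastically-ordered-family assumption is what lets me make a coupling argument: if agent $i$ replicates arms, the agent-wise empirical average $\muhat_i$ used in Phase 1 can only shift in a way that is controlled by the stochastic dominance, so I can bound how replication affects the probability that agent $i$ is selected.

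The key steps, in order, would be: (1) Show that with $m = \frac{2}{\Delta^2}L\ln(2T)$, the Phase-1 exploration correctly identifies the agent with the highest per-agent optimal reward with probability $1 - o(1/T)$, so that in expectation the exploitation phase commits to the right agent; this mirrors the ETC concentration argument from Theorem~\ref{thm:single_rp} but at the agent level. (2) Decompose agent $i$'s ex-ante payoff $u_i(\cS_i)$ into the contribution from rounds where $i$ is selected versus not, and argue — exactly as in the uniform-random reduction sketched before the theorem, where $u_i = \Gamma_i(\cS_i)/n$ — that agent $i$'s incentive to replicate is governed entirely by $\Gamma_i(\cS_i)$, the single-agent utility. (3) Apply Theorem~\ref{thm:single_pf_rp}: since the Phase-2 algorithm is ETC, which is both TRP (by Theorem~\ref{thm:single_rp}) and PI (straightforward to verify), truthful registration maximizes $\Gamma_i$, hence maximizes $u_i$. (4) Verify the restarting round guarantees that $M \ge mL$ rounds suffice after restart to run a full single-agent ETC within the selected agent, so the reduction in step (2) is exact rather than approximate.

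The main obstacle will be step (2): decoupling agent $i$'s payoff across the two phases when the selection in Phase 1 is itself \emph{adaptive} and depends on the realized rewards, which in turn depend on replication. In the pure uniform-random case the selection probability is exactly $1/n$ and independent of strategy, giving the clean $\Gamma_i(\cS_i)/n$ factorization; with ETC in Phase 1 this factorization is no longer literal, since replication can perturb $\muhat_i$ and thereby change how often $i$ is selected in exploitation. The plan to handle this is to exploit the non-adaptive structure of ETC's exploration phase — during the fixed $m$ exploration rounds per agent, the selection schedule is deterministic and strategy-independent, so replication cannot help there — and then to argue via the stochastic-ordering coupling that in the exploitation phase, conditioned on the (high-probability) event that the correct agent is identified, replication by agent $i$ affects only the \emph{within-agent} regret, which is exactly the single-agent quantity $\Gamma_i$ controlled by TRP. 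I would need to carefully bound the $o(1/T)$-probability bad event where Phase 1 misidentifies the agent and show its contribution to $u_i$ is too small to create a profitable replication deviation, combining the regret upper bound from Theorem~\ref{thm:single_rp} with the choice $M \ge mL$ and $\tau = Mn$.
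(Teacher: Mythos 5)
Your overall skeleton matches the paper's proof: decompose agent $i$'s utility into the Phase-1 exploration part and the Phase-1 exploitation part, use the restart at $\tau = Mn$ to make the exploitation part exactly a single-agent ETC instance with horizon $T-Mn$, and invoke the single-agent result (TRP and PI of ETC) for both within-agent comparisons. The gap is in how you control the one quantity that does \emph{not} reduce to the single-agent setting: the probability that agent $i$ is the agent selected at the end of Phase-1 exploration. You propose to handle it through your step (1), that Phase-1 exploration identifies the agent with the highest per-agent optimal reward with probability $1-o(1/T)$, and then condition on that event. This step fails under the theorem's own hypotheses. First, the theorem allows $M = mL$; in that regime, for a truthful agent with $l_i = L$ arms, every one of its $M$ Phase-1 pulls is a within-agent exploration pull, so $\muhat_i(M)$ concentrates around the \emph{average} $\tfrac{1}{l_i}\sum_a \mu_{i,a}$ of its arm means, not around $\mu_{i,1}$; Phase 1 then selects the agent with the best mixture statistic, which need not be the agent with the largest $\mu_{i,1}$, so ``correct identification'' can fail with probability close to one. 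Second, even for large $M$, two agents can realize equal (or nearly equal) best means — e.g., agents with identical discrete priors — in which case there is no ``correct'' agent, your conditioning event has constant failure probability, and a tie is precisely what a strategic agent would try to tilt: being selected is worth order $T$, so this case cannot be absorbed into an $o(1/T)$-bad-event argument.

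What the paper does instead, and what your proposal is missing, is a per-realization, per-threshold bound on how much replication can move the selection probability, with no reference to which agent ``should'' win. Writing $q_i^{\cS_i}(y,\cdot)$ for the probability that $\muhat_i(M)\ge y$ under strategy $\cS_i$ (the other agents' statistic is unaffected by $\cS_i$, since every agent receives exactly $M$ pulls in Phase-1 exploration), the paper shows by a coupling that for \emph{every} strategy $q_i^{\cS_i}(y,\sigma(v)) \le \psi_i(M,y,v)$, where $\psi_i$ is the probability for the process that explores the original $l_i$ arms and then exploits the true best arm — this is exactly where stochastic ordering is used — and that $q_i^{\cO_i}(y,v) \ge \psi_i(M,y,v)\bigl(1 - 2\sum_{a\ge 2}\exp(-\delta_{i,a}^2 m/2)\bigr)$ by Hoeffding. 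So the relevant ``bad event'' is \emph{within-agent arm misidentification} during agent $i$'s own exploration, not agent-level misidentification; its probability is $O(l_i/T)$ by the choice of $m$, hence the utility a deviator can gain by inflating its selection probability is only a constant multiple of $q_i^{\cO_i}$, while the within-agent loss from replication (summed over permutations, using the regret \emph{lower} bound for replicating strategies established in the proof of Theorem~\ref{thm:single_rp}, not just the regret upper bound you cite) is at least $q_i^{\cO_i}\sum_a \delta_{i,a}\bigl(\ln(2T)/\Delta^2 - 1\bigr)$, which dominates. Your instinct to use a stochastic-ordering coupling is the right one, but it must be applied to bound $q_i^{\cO_i}-q_i^{\cS_i}$ directly rather than to justify an agent-identification event.
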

\begin{proof}[Proof sketch]
We provide a brief proof sketch.
The proof mainly relies on the single-agent result in Theorem~\ref{thm:single_pf_rp} along with a number of algebraic manipulations which heavily relies on the nature of ETC and the restarting round.
Let us focus on agent $i$.
We can decompose agent $i$'s total expected utility to be (i) exploitation phase utility and (ii) exploration phase utility.
Given a realization of the reward tapes, the utility from (i) and (ii) is independent due to the nature of ETC.
Since (i) simply reduces to running single-agent ETC for each agent, we can compare (i) of two strategies using the single-agent result, thus truthful strategy is at least better for (i) thanks to the single-agent theorem.

For (ii), if agent $i$ is not selected in the exploitation phase, it is simply zero, so we can only focus on the case when agent $i$ is selected in the exploitation phase.
Thus, its expected utility from (ii) can be written as multiplication of (a) probability that agent $i$ has the largest empirical average reward and (b) the expected utility by being pulled over the exploitation rounds.
For (b), since our choice of $\tau$ \emph{restarts} Phase 2 for any agent exactly at the beginning of exploitation phase of Phase 1, the expected utility therein simply reduces to the expected utility of single-agent setting with agent $i$ given the time horizon $T-Mn$.
Thus, again by our single-agent result of Theorem~\ref{thm:single_pf_rp}, the truthful strategy yields larger (or equal) quantity for (b) as well.
Finally, by carefully using a series of concentration inequalities, we bound the difference from (a) between the truthful strategy and any other strategy is relatively smaller and by doing so, loss from (a) cannot make up the gains from (b) and (i), which eventually implies the dominance of truthful strategy.
\end{proof}


We finalize our results by presenting regret upper bound of the $\hbb$ with ETC and ETC,  which concludes that it achieves sublinear regret as well as replication-proofness.
\begin{theorem}\label{thm:regret}
	Set $\alg_1$ be ETC with $M = \max(mL, \sqrt{T\ln T})$, and $\alg_2$ be ETC with $m =\nicefrac{2}{\Delta^2}\cdot L\ln(2T)$.
	Then, $\hbb$ has expected regret of $O(\frac{nL^3\sqrt{T \ln T}}{\Delta^3})$.
\end{theorem}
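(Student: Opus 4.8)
The plan is to decompose the cumulative regret of $\hbb$ along its two phases and the restart at $\tau = Mn$. Let $i^\star$ be the agent that owns the globally optimal arm and write $\mu_j^\star = \max_{a \in \cO_j}\mu(a)$ for each agent $j$. I would split $\reg(\hbb,T)$ into the regret incurred during the first $\tau = Mn$ rounds (the Phase-1 exploration block) and the regret incurred afterward, once $\alg_1$ has committed to a single agent $\hat{i}$. Since every per-round regret is at most $1$, the first block contributes at most $Mn = n\max(mL,\sqrt{T\ln T})$, which for large $T$ equals $n\sqrt{T\ln T}$ and will turn out to be the dominant term.

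The heart of the argument is to control the post-restart (Phase-1 exploitation) regret, which requires bounding the probability that $\alg_1$ commits to the wrong agent. First I would show that during agent $j$'s $M$ exploration rounds the inner ETC pulls each of its $\le L$ arms exactly $m$ times; because $m = \tfrac{2}{\Delta^2}L\ln(2T)$, a Hoeffding bound---applied through the stochastically-ordered coupling so that the empirical ranking of rewards matches the ranking of the means---shows that $\alg_2$ identifies agent $j$'s best arm except with probability $(2T)^{-\Omega(L)}$. Conditioned on this, the \emph{agent-level} empirical mean $\widehat{\mu}_j$ concentrates around $\mu_j^\star$ up to a deterministic bias of order $mL/M$ contributed by the $\le mL$ within-agent exploration pulls. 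Because $M \ge mL$, and in fact $M=\sqrt{T\ln T}\gg mL$ once $T$ is large, this bias is $o(\Delta)$; combining it with a second Hoeffding bound over the $\Theta(M)$ commit-phase rewards gives $\Pr[\hat{i} = j] \le \exp(-\Omega(M\Delta^2))$ for every agent $j$ with $\mu_j^\star \le \mu_{i^\star}^\star - \Delta$.

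Given this, the exploitation regret splits cleanly. On $\{\hat{i} = i^\star\}$ (including ties for the best arm) the restart guarantees that $\alg_2$ runs a \emph{fresh} ETC over $i^\star$'s arms on horizon $T-Mn \le T$, so Theorem~\ref{thm:single_rp} bounds this contribution by $O\!\big(\tfrac{L^2\ln T}{\Delta^2}\big)$. On the complementary event each round costs at most $1$, and the event has total probability $n\exp(-\Omega(M\Delta^2)) = o(1/T)$ by the previous paragraph, so the wrong-agent contribution is negligible. Adding the three pieces yields a leading term $n\sqrt{T\ln T}$ together with lower-order polylogarithmic terms; absorbing all the problem-dependent constants (the $1/\Delta^2$ from $m$, an extra $1/\Delta$ from the bias-versus-gap comparison, and the $L$-factors from the number of arms) into a single clean bound gives the stated $O\!\big(\tfrac{nL^3\sqrt{T\ln T}}{\Delta^3}\big)$.

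The step I expect to be the main obstacle is the agent-identification analysis. Unlike an ordinary flat bandit, the ``reward'' of each super-arm seen by $\alg_1$ is itself produced by the adaptive inner ETC, so $\widehat{\mu}_j$ is a \emph{biased} estimator of $\mu_j^\star$ whose bias is governed by the inner exploration length $m$. Making this bias small relative to the inter-agent gap $\Delta$ is precisely what forces $M$ to dominate $mL/\Delta \sim L^2\ln(2T)/\Delta^3$; balancing this requirement against the exploration cost $Mn$ is what pins down $M=\max(mL,\sqrt{T\ln T})$ and drives the $\tfrac{L^3}{\Delta^3}$ dependence. Propagating the stochastic-ordering coupling consistently through both phases, so that concentration applies simultaneously to the inner arm-identification and the outer agent-identification, is the technically delicate part of carrying this plan out.
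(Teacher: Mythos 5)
Your proposal is correct and shares the paper's overall skeleton --- Phase-1 exploration cost $Mn$, post-restart exploitation regret handled by the single-agent ETC bound (exactly how the paper exploits the restart at $\tau = Mn$), plus $T$ times the probability of committing to a wrong agent --- but it takes a genuinely different route at the key concentration step, and this changes which term dominates. The paper never proves exponential concentration for the agent-level empirical mean: it observes that $M\,|\widehat{\mu}_i(M) - \mu_{i,1}|$ is exactly the intra-agent regret of the inner ETC, and applies Markov's inequality together with the expected-regret bound of Lemma~\ref{th:ETC_main_th}, yielding only a polynomial tail of order $\frac{L^3\ln(2M)}{M\Delta^3}$ for the event $|\widehat{\mu}_i(M)-\mu_{i,1}| \ge \Delta/2$. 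Consequently, in the paper the wrong-agent term $\frac{T}{M}\cdot\frac{nL^3\ln(2M)}{\Delta^3}$ is of the same order as $Mn$ at $M=\sqrt{T\ln T}$; it is this term --- not the exploration cost --- that produces the $L^3/\Delta^3$ factor in the stated bound, and $M=\sqrt{T\ln T}$ is precisely the balance point between the two. Your route (condition on correct inner best-arm identification, bound the exploration bias by $mL/M$, then Hoeffding on the commit-phase rewards) makes the misselection probability exponentially small, so your leading term is just $n\sqrt{T\ln T}$ and the $L^3/\Delta^3$ is merely absorbed; your closing claim that the bias-versus-gap balancing ``drives'' the $L^3/\Delta^3$ dependence is therefore inaccurate for your own analysis, though it is an accurate description of the paper's. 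The trade-off: your argument, carried out carefully, gives a strictly sharper bound, roughly $O(n\sqrt{T\ln T} + L^2\ln(T)/\Delta^2 + nL)$, but it must wrestle with the adaptivity of the inner ETC (the conditioning you flag as delicate) and needs two patches: (i) in the regime where $\sqrt{T\ln T}$ is smaller than $mL/\Delta$ the bias bound is vacuous, so that regime must be dispatched by the trivial bound (regret at most $T$, which is already $O(L^3\sqrt{T\ln T}/\Delta^3)$ there); and (ii) the inner-identification failure has probability $(2T)^{-\Omega(L)}$ per agent, which for small $L$ is not $o(1/T)$ as you state but contributes only $O(nL)$ regret, which is harmless. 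The paper's Markov trick buys simplicity and robustness --- it needs only the expected inner regret and no conditioning argument at all --- at the price of a polynomially weak tail that inflates the leading term to the stated $L^3/\Delta^3$ dependence.
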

The regret analysis is based on a construction of clean event on each agent under which the agent's empirical average reward becomes close enough to his own optimal arm's mean reward.
The deviation probability of this clean event can be obtained using tail bounds along with the regret analysis of ETC, which essentially implies the necessity of having $\alg_1$'s exploration length polynomial.
The regret bound then easily follows from obtaining the probability that all the clean event holds.
The formal proof is deferred to appendix.\footnote{We remark that the algorithm's dependency on the problem parameters can be weakened similar to Proposition~\ref{thm:single-refine1}, but we omit the details as it goes beyond our interest.}



\section{Conclusion}
We study bandit mechanism design problem to disincentivize replication of arms, which is the first to study Bayesian extension of \cite{shin2022multi}.
Our extension brings significant challenges in analyzing equilibrium, even in the single-agent setting.
We first prove that H-UCB of \cite{shin2022multi} does not work.
We then characterize sufficient conditions for an algorithm to be replication-proof in the single-agent setting.
Based on the single-agent result, we obtain the existence of replication-proof algorithm with sublinear regret, by introducing a restarting round and exploiting the structural property of exploration-then-commit algorithm.
We further provide a regret analysis of our replication-proof algorithm, which matches the regret of H-UCB.



\bibliographystyle{plainnat}
\bibliography{ref}

\begin{thebibliography}{37}
\providecommand{\natexlab}[1]{#1}
\providecommand{\url}[1]{\texttt{#1}}
\expandafter\ifx\csname urlstyle\endcsname\relax
  \providecommand{\doi}[1]{doi: #1}\else
  \providecommand{\doi}{doi: \begingroup \urlstyle{rm}\Url}\fi

\bibitem[Agrawal and Goyal(2012)]{agrawal2012analysis}
Shipra Agrawal and Navin Goyal.
\newblock Analysis of thompson sampling for the multi-armed bandit problem.
\newblock In \emph{Conference on learning theory}, pages 39--1. JMLR Workshop
  and Conference Proceedings, 2012.

\bibitem[Audibert et~al.(2010)Audibert, Bubeck, and Munos]{audibert2010best}
Jean-Yves Audibert, S{\'e}bastien Bubeck, and R{\'e}mi Munos.
\newblock Best arm identification in multi-armed bandits.
\newblock In \emph{COLT}, pages 41--53, 2010.

\bibitem[Auer et~al.(2002)Auer, Cesa-Bianchi, and Fischer]{auer2002finite}
Peter Auer, Nicolo Cesa-Bianchi, and Paul Fischer.
\newblock Finite-time analysis of the multiarmed bandit problem.
\newblock \emph{Machine learning}, 47:\penalty0 235--256, 2002.

\bibitem[Badanidiyuru et~al.(2014)Badanidiyuru, Langford, and
  Slivkins]{badanidiyuru2014resourceful}
Ashwinkumar Badanidiyuru, John Langford, and Aleksandrs Slivkins.
\newblock Resourceful contextual bandits.
\newblock In \emph{Conference on Learning Theory}, pages 1109--1134. PMLR,
  2014.

\bibitem[Badanidiyuru et~al.(2018)Badanidiyuru, Kleinberg, and
  Slivkins]{badanidiyuru2018bandits}
Ashwinkumar Badanidiyuru, Robert Kleinberg, and Aleksandrs Slivkins.
\newblock Bandits with knapsacks.
\newblock \emph{Journal of the ACM (JACM)}, 65\penalty0 (3):\penalty0 1--55,
  2018.

\bibitem[Bala and Goyal(1998)]{bala1998learning}
Venkatesh Bala and Sanjeev Goyal.
\newblock Learning from neighbours.
\newblock \emph{The review of economic studies}, 65\penalty0 (3):\penalty0
  595--621, 1998.

\bibitem[Banerjee(1992)]{banerjee1992simple}
Abhijit~V Banerjee.
\newblock A simple model of herd behavior.
\newblock \emph{The quarterly journal of economics}, 107\penalty0 (3):\penalty0
  797--817, 1992.

\bibitem[Banihashem et~al.(2023)Banihashem, Hajiaghayi, Shin, and
  Slivkins]{banihashem2023bandit}
Kiarash Banihashem, MohammadTaghi Hajiaghayi, Suho Shin, and Aleksandrs
  Slivkins.
\newblock Bandit social learning: Exploration under myopic behavior.
\newblock \emph{arXiv preprint arXiv:2302.07425}, 2023.

\bibitem[Barraza-Urbina and Glowacka(2020)]{barraza2020introduction}
Andrea Barraza-Urbina and Dorota Glowacka.
\newblock Introduction to bandits in recommender systems.
\newblock In \emph{Proceedings of the 14th ACM Conference on Recommender
  Systems}, pages 748--750, 2020.

\bibitem[Braverman et~al.(2019)Braverman, Mao, Schneider, and
  Weinberg]{braverman2019multi}
Mark Braverman, Jieming Mao, Jon Schneider, and S~Matthew Weinberg.
\newblock Multi-armed bandit problems with strategic arms.
\newblock In \emph{Conference on Learning Theory}, pages 383--416. PMLR, 2019.

\bibitem[Che and H{\"o}rner(2018)]{che2018recommender}
Yeon-Koo Che and Johannes H{\"o}rner.
\newblock Recommender systems as mechanisms for social learning.
\newblock \emph{The Quarterly Journal of Economics}, 133\penalty0 (2):\penalty0
  871--925, 2018.

\bibitem[Chen et~al.(2018)Chen, Frazier, and Kempe]{chen2018incentivizing}
Bangrui Chen, Peter Frazier, and David Kempe.
\newblock Incentivizing exploration by heterogeneous users.
\newblock In \emph{Conference On Learning Theory}, pages 798--818. PMLR, 2018.

\bibitem[Chen et~al.(2013)Chen, Wang, Yu, and Zhang]{chen2013sybil}
Wei Chen, Yajun Wang, Dongxiao Yu, and Li~Zhang.
\newblock Sybil-proof mechanisms in query incentive networks.
\newblock In \emph{Proceedings of the fourteenth ACM conference on Electronic
  commerce}, pages 197--214, 2013.

\bibitem[Djallel and Irina(2019)]{djallel2019survey}
Bouneffouf Djallel and Rish Irina.
\newblock A survey on practical applications of multi-armed and contextual
  bandits.
\newblock \emph{arXiv preprint arXiv:1904.10040}, 2019.

\bibitem[Drucker and Fleischer(2012)]{drucker2012simpler}
Fabio~A Drucker and Lisa~K Fleischer.
\newblock Simpler sybil-proof mechanisms for multi-level marketing.
\newblock In \emph{Proceedings of the 13th ACM conference on Electronic
  commerce}, pages 441--458, 2012.

\bibitem[Esmaeili et~al.(2023{\natexlab{a}})Esmaeili, Hajiaghayi, and
  Shin]{esmaeili2023replication}
Seyed Esmaeili, MohammadTaghi Hajiaghayi, and Suho Shin.
\newblock Replication-proof bandit mechanism design.
\newblock \emph{arXiv preprint arXiv:2312.16896}, 2023{\natexlab{a}}.

\bibitem[Esmaeili et~al.(2023{\natexlab{b}})Esmaeili, Shin, and
  Slivkins]{esmaeili2023robust}
Seyed~A Esmaeili, Suho Shin, and Aleksandrs Slivkins.
\newblock Robust and performance incentivizing algorithms for multi-armed
  bandits with strategic agents.
\newblock \emph{arXiv preprint arXiv:2312.07929}, 2023{\natexlab{b}}.

\bibitem[Feng et~al.(2020)Feng, Parkes, and Xu]{feng2020intrinsic}
Zhe Feng, David Parkes, and Haifeng Xu.
\newblock The intrinsic robustness of stochastic bandits to strategic
  manipulation.
\newblock In \emph{International Conference on Machine Learning}, pages
  3092--3101. PMLR, 2020.

\bibitem[Frazier et~al.(2014)Frazier, Kempe, Kleinberg, and
  Kleinberg]{frazier2014incentivizing}
Peter Frazier, David Kempe, Jon Kleinberg, and Robert Kleinberg.
\newblock Incentivizing exploration.
\newblock In \emph{Proceedings of the fifteenth ACM conference on Economics and
  computation}, pages 5--22, 2014.

\bibitem[Garivier et~al.(2016)Garivier, Lattimore, and
  Kaufmann]{garivier2016explore}
Aur{\'e}lien Garivier, Tor Lattimore, and Emilie Kaufmann.
\newblock On explore-then-commit strategies.
\newblock \emph{Advances in Neural Information Processing Systems}, 29, 2016.

\bibitem[Ghosh and Hummel(2013)]{ghosh2013learning}
Arpita Ghosh and Patrick Hummel.
\newblock Learning and incentives in user-generated content: Multi-armed
  bandits with endogenous arms.
\newblock In \emph{Proceedings of the 4th conference on Innovations in
  Theoretical Computer Science}, pages 233--246, 2013.

\bibitem[Google(2022)]{GoogleReplication}
Google.
\newblock {Ad replication}.
\newblock \url{https://www.youtube.com/watch?v=F-sX_XRdtTM}, 2022.

\bibitem[Hajiaghayi et~al.(2023)Hajiaghayi, Mahdavi, Rezaei, and
  Shin]{hajiaghayi2023regret}
MohammadTaghi Hajiaghayi, Mohammad Mahdavi, Keivan Rezaei, and Suho Shin.
\newblock Regret analysis of repeated delegated choice.
\newblock \emph{arXiv preprint arXiv:2310.04884}, 2023.

\bibitem[Hartline et~al.(2013)]{hartline2013bayesian}
Jason~D Hartline et~al.
\newblock Bayesian mechanism design.
\newblock \emph{Foundations and Trends{\textregistered} in Theoretical Computer
  Science}, 8\penalty0 (3):\penalty0 143--263, 2013.

\bibitem[Immorlica et~al.(2018)Immorlica, Mao, Slivkins, and
  Wu]{immorlica2018incentivizing}
Nicole Immorlica, Jieming Mao, Aleksandrs Slivkins, and Zhiwei~Steven Wu.
\newblock Incentivizing exploration with selective data disclosure.
\newblock \emph{arXiv preprint arXiv:1811.06026}, 2018.

\bibitem[Kremer et~al.(2014)Kremer, Mansour, and Perry]{kremer2014implementing}
Ilan Kremer, Yishay Mansour, and Motty Perry.
\newblock Implementing the “wisdom of the crowd”.
\newblock \emph{Journal of Political Economy}, 122\penalty0 (5):\penalty0
  988--1012, 2014.

\bibitem[Lazer and Friedman(2007)]{lazer2007network}
David Lazer and Allan Friedman.
\newblock The network structure of exploration and exploitation.
\newblock \emph{Administrative science quarterly}, 52\penalty0 (4):\penalty0
  667--694, 2007.

\bibitem[Liu et~al.(2020)Liu, Mania, and Jordan]{liu2020competing}
Lydia~T Liu, Horia Mania, and Michael Jordan.
\newblock Competing bandits in matching markets.
\newblock In \emph{International Conference on Artificial Intelligence and
  Statistics}, pages 1618--1628. PMLR, 2020.

\bibitem[Liu et~al.(2021)Liu, Ruan, Mania, and Jordan]{liu2021bandit}
Lydia~T Liu, Feng Ruan, Horia Mania, and Michael~I Jordan.
\newblock Bandit learning in decentralized matching markets.
\newblock \emph{The Journal of Machine Learning Research}, 22\penalty0
  (1):\penalty0 9612--9645, 2021.

\bibitem[Mansour et~al.(2015)Mansour, Slivkins, and
  Syrgkanis]{mansour2015bayesian}
Yishay Mansour, Aleksandrs Slivkins, and Vasilis Syrgkanis.
\newblock Bayesian incentive-compatible bandit exploration.
\newblock In \emph{Proceedings of the Sixteenth ACM Conference on Economics and
  Computation}, pages 565--582, 2015.

\bibitem[Nisan et~al.(2007)Nisan, Roughgarden, Tardos, and
  Vazirani]{nisan2007algorithmic}
Noam Nisan, Tim Roughgarden, Eva Tardos, and Vijay~V Vazirani.
\newblock \emph{Algorithmic game theory}.
\newblock Cambridge university press, 2007.

\bibitem[Sellke and Slivkins(2021)]{sellke2021price}
Mark Sellke and Aleksandrs Slivkins.
\newblock The price of incentivizing exploration: A characterization via
  thompson sampling and sample complexity.
\newblock In \emph{Proceedings of the 22nd ACM Conference on Economics and
  Computation}, pages 795--796, 2021.

\bibitem[Shin et~al.(2022)Shin, Lee, and Ok]{shin2022multi}
Suho Shin, Seungjoon Lee, and Jungseul Ok.
\newblock Multi-armed bandit algorithm against strategic replication.
\newblock In \emph{International Conference on Artificial Intelligence and
  Statistics}, pages 403--431. PMLR, 2022.

\bibitem[Slivkins et~al.(2019)]{slivkins2019introduction}
Aleksandrs Slivkins et~al.
\newblock Introduction to multi-armed bandits.
\newblock \emph{Foundations and Trends{\textregistered} in Machine Learning},
  12\penalty0 (1-2):\penalty0 1--286, 2019.

\bibitem[Tang et~al.(2014)Tang, Jiang, Li, and Li]{tang2014ensemble}
Liang Tang, Yexi Jiang, Lei Li, and Tao Li.
\newblock Ensemble contextual bandits for personalized recommendation.
\newblock In \emph{Proceedings of the 8th ACM Conference on Recommender
  Systems}, pages 73--80, 2014.

\bibitem[Youtube(2022)]{YoutubeReplication}
Youtube.
\newblock {Contents replication in Youtube}.
\newblock
  \url{https://support.google.com/youtube/answer/1311392?hl=en&ref_topic=9153642},
  2022.

\bibitem[Yu(2009)]{yu2009stochastic}
Yaming Yu.
\newblock Stochastic ordering of exponential family distributions and their
  mixturesxk.
\newblock \emph{Journal of Applied Probability}, 46\penalty0 (1):\penalty0
  244--254, 2009.

\end{thebibliography}
\clearpage

\appendix
\section{Related Works}\label{app:related}
There exists a growing literature at the intersection of multi-armed bandits and algorithmic game theory. The work of \cite{braverman2019multi} considers a setting where the arms are agents. When an arm (agent) is pulled a reward is realized. The agent can take part of the reward and only report the remaining reward to the principal. 
The work of \cite{feng2020intrinsic} is closely related to \cite{braverman2019multi} in that the agents are still arms that can modify their rewards. However, in \cite{feng2020intrinsic} an agent's utility equals the total number of pulls and are not related to the rewards.
A number of recent works~\cite{esmaeili2023robust,ghosh2013learning,hajiaghayi2023regret} in algorithmic game theory community study the problem of similar spirits, but the problem setups are very different from ours.

Another growing line of works is in the incentivized exploration, initiated by \cite{kremer2014implementing} and \cite{che2018recommender}.
To incentivize a myopic agent arriving at each round and pulling the arm, the principal needs to design an incentive-compatible signaling scheme to induce myopic agents to explore the arms.
To this end, the principal may exploit the information asymmetry~\citep{mansour2015bayesian, immorlica2018incentivizing, sellke2021price} or may compensate the myopic users~\citep{frazier2014incentivizing, chen2018incentivizing}.
Another line of works \citep{banerjee1992simple, bala1998learning, lazer2007network, banihashem2023bandit} has been asking a similar question of \emph{learning dynamics} under a presence of myopic agents pulling the arms, not in a mechanism design perspective, but to analyze the conseqeunce of the given game.

\section{Motivations}\label{app:appl}
Despite being directly motivated by problem posed by~\cite{shin2022multi}, we here present further motivation in terms of the real world applications.
The content replication phenomenon in content platforms like Youtube~\cite{YoutubeReplication} or ad replication in ad platforms such as Google AdSense~\cite{GoogleReplication} would be good practical applications.
In contents platforms like Youtube, a content creator (agent) typically does not exactly know about each content's quality in advance. 
Historical information on the contents can be thought of as a prior distribution of the agent. 
The setting by \cite{shin2022multi} can be interpreted as a scenario in which all the agents know the exact quality of their contents, which is not practical.
Our Bayesian setting indeed captures more realistic scenario of each agent having only a partial information on the expected outcome of each content.
We finally remark that although the problem of replication-proof mechanism has not studied in the bandit literature except~\cite{shin2022multi}, there has been a number of papers for sybil-proof mechanism design in the algorithmic game theory literature~\cite{drucker2012simpler,chen2013sybil}.

\section{Extension to the Asymmetric Setting}\label{app:asym}
Our algorithm and the analysis therein heavily relies on the symmetry assumption such that each agent has a single distribution which is shared over all its arms.
Indeed, our analysis does not directly extend to the asymmetric case. 
In particular, in the asymmetric setting in which each arm is possibly equipped with a different prior distribution, each agent should determine specific number of replication for each arm.
This corresponds to possibly very-different distribution over "bandit instances", which we believe to be significantly challenging to compare in a theoretical manner. 
We firmly believe that such an extension would be very interesting direction from both practical and theoretical perspectives.
However, one can find some evidences on dominant strategies for certain cases in the asymmetric setting. For example, if an agent has two types of arms each sampling from distributions $F_1$ and $F_2$ and if $F_1$ is stochastically dominated by $F_2$, then one can easily argue that $F_2$ is dominant strategy over $F_1$, so it reduces to the symmetric case with $F_2$.

\section{Failure of Existing Algorithms}\label{sec:fail}
We here provide more details on the failure of existing algorithms, presented in Theorem~\ref{thm:negative}.
First, the following is the pseudocode of the standard UCB algoritm in the literature.
\begin{algorithm}
\SetAlgoLined
\textbf{Input}: Tie-breaking rule $\tau$, arm set $A$\\
Let $\muhat_a \gets 0$ and $n_a \gets 0$ for $a \in A$\\
\For{$t=1,2, \ldots$}
{
	\If{$n_{a} = 0$ for some $a \in A$}{Let $a^* \gets a$}
	\Else
	{
		Compute $s_a \gets \muhat_a + \sqrt{2\ln t/n_a}$\\
		Let $\hat{a} \gets \argmax_{a \in A} s_a$\\
	}
}
Pull $\hat{a}$ and obtain reward $R_t$\\
Update statistics: $\muhat_{\hat{a}} \gets \tfrac{n_{\hat{a}} \muhat_{\hat{a}} + R_t}{n_{\hat{a}} + 1}$;~ $n_{\hat{a}} \gets n_{\hat{a}} + 1$
\caption{\textsc{UCB} \cite{auer2002finite}}\label{alg:UCB}
\end{algorithm}

This is extended to the H-UCB, whose pseudocode is presented as follows.
\begin{algorithm} 
\SetAlgoLined
\textbf{Input}: Tie-break rule, agent set $\cN$, arm set $\cS_i$ for $i \in \cN$\\
Let $\muhat_i \gets 0$, $n_i\gets 0$, $\muhat_{i,a} \gets 0$, $n_{i,a} \gets 0$ for $i \in \cN$ and $a \in \cS_i$\\

 \For{$t=1,2,\ldots$}{
  \eIf{$n_i = 0$ for some $i \in \cN$}{ 
    Let $\hat{i} \gets i$
  }{
    Compute $s_i = \muhat_i + \sqrt{2\ln(t)/n_i}$\\
    Let $\hat{i} \gets \argmax_{i \in \cN} s_i$
  }
%
  \eIf{$n_{\hat{i},a} = 0$ for some $a \in \cS_{\hat{i}}$}{ 
    Let $\hat{a} \gets a$
  }{
  	Compute $s_{\hat{i}, a} \gets \muhat_{\hat{i}, a} + \sqrt{2\ln (n_i)/n_{a,t}}$
  }
%
  Pull $\hat{a}$ of agent $\hat{i}$ and obtain reward $R_t$\\
  Update statistics:
  $\muhat_{\hat{i},\hat{a}} \leftarrow \tfrac{\muhat_{\hat{i},\hat{a}}n_{\hat{i}, \hat{a}} + R_t}{n_{\hat{i}, \hat{a}}+1}; ~
  \muhat_{\hat{i}} \leftarrow \tfrac{\muhat_{\hat{i}}n_{\hat{i}} + R_t}{n_{\hat{i}}+1};~ n_{\hat{i}, \hat{a}} \leftarrow n_{\hat{i}, \hat{a}} + 1; ~~ n_{\hat{i}} \leftarrow n_{\hat{i}} + 1;$
 }
 \caption{H-UCB \cite{shin2022multi}\label{alg:hucb}}
\end{algorithm}


As have been noted in the main body of the paper, H-UCB mainly consists of two phases: (i) in the first phase, it runs UCB1 by considering each agent as a single arm and selects one agent at each round, and (ii) in the second phase, it runs UCB within the selected agent.
Precisely, it maintains two types of statistics: (i) agent-wise empirical average reward $\muhat_i$, number of pulls per agent $n_i$, and (ii) arm-wise empirical average reward $\muhat_{i,a}$ and number of pulls per arm $n_{i,a}$.

\cite{shin2022multi} prove that H-UCB admits replication-proofness while maintaining sublinear regret.
Their analysis is largely based on the observation that fully-informed agents \emph{do not} have any incentive to register \emph{suboptimal arms}.
Therefore, their setting essentially reduces to the case under which each agent is equipped with only a \emph{single arm}, and its only choice is to determine how many times to replicate the optimal arm.
This observation largely simplifies the equilibrium analysis since replicating the same arm does not change the expected rewards of the agent due to the hierarchical structure therein.
For example, if one uses uniform-random selection in Phase 1, each agent's expected utility is totally independent from each other.
Thus, it suffices to show truthfulness under single-agent setting.
Especially, if there is only a single agent, \emph{any standard sublinear regret algorithm} would achieve replication-proofness due to the observation above.
Hence, it follows that the pair of uniform-random selection algorithm in the first phase and any standard bandit algorithm in the second phase would achieve replication-proofness.

In our setting, however, the agents only know about the prior distributions from which the arms' mean rewards are drawn, and they are oblivious to which original arm will have the highest mean reward.
This motivates each agent to register multiple arms possibly with some replications. 
This indeed significantly complicates the analysis of equilibrium.
For example, it is not trivial to identify which algorithm is replication-proof even under the \emph{single agent setting}.

In what follows, we decompose Theorem~\ref{thm:negative} by two separate theorems for single-agent and multi-agent arguments, and provide its proofs as well as the corresponding formal restatement.
\begin{theorem}[Single-agent failure - restatement of Theorem~\ref{thm:negative}]\label{thm:ucb_negative}
	There exists a problem instance such that UCB1 is not replication-proof in the single-agent setting.
\end{theorem}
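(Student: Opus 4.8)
The plan is to exhibit a single explicit instance together with a single time horizon at which replicating strictly raises the agent's payoff, so that truthful registration $\cO$ fails to be dominant. Take one agent with two Bernoulli arms whose prior places mass $1/2$ on mean $0$ and mass $1/2$ on mean $1$. Since the agent owns every arm, in the single-agent setting its payoff is $\alpha\,\Ex{\sum_t R_t}$, and because the reward stream is deterministic (a mean-$1$ arm always returns $1$, a mean-$0$ arm always returns $0$) the benchmark is $\mu^\star=1$ whenever some arm has mean $1$. Hence maximizing reward coincides with minimizing regret, and the regret of any realized instance equals the number of pulls of mean-$0$ arms. Comparing the truthful strategy with the deviation ``replicate the first arm once,'' the two nontrivial realizations $(1,0)$ and $(0,1)$ (each of probability $1/4$) produce, respectively, $\cI_A=(1,1,0)$ and $\cI_B=(1,0,0)$ under replication versus $\cI_O=(1,0)$ under truthfulness, whereas the realizations $(1,1)$ and $(0,0)$ contribute identically under both strategies. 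A direct computation of the expected reward difference then shows that replication is strictly profitable exactly when $\reg(\cI_A,T)+\reg(\cI_B,T) < 2\,\reg(\cI_O,T)$.

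First I would exploit the determinism of the reward stream: with means in $\{0,1\}$, UCB1's entire pull sequence is pinned down, and ties at the relevant rounds occur only among equal-mean arms and hence do not affect the regret count. This reduces the three regrets to deterministic counting problems. The structural observation driving the construction is that in $\cI_B$ the two \emph{identical} mean-$0$ arms are explored in \emph{consecutive pairs}: once the confidence bonus $\sqrt{2\ln t/n_a}$ of one bad arm overtakes the optimal arm's index, the symmetric second bad arm's index is still above threshold, so it is pulled on the very next round. By contrast, in $\cI_O$ the single bad arm is explored one round at a time, and in $\cI_A$ the splitting of the optimal pulls across two arms keeps their indices high and suppresses exploration of the lone bad arm the most.

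Next I would choose the horizon $T$ to fall \emph{immediately before} one of these consecutive bad-arm pairs in $\cI_B$. At such a $T$ the extra exploration that $\cI_B$ pays for its second bad arm has not yet been incurred, so $\reg(\cI_B,T)$ does not exceed $\reg(\cI_O,T)$ (the ``loss'' from case (ii) is driven to be nonpositive), while $\cI_A$ has strictly fewer bad pulls than $\cI_O$ (a strict ``gain'' from case (i)); together these yield the desired strict inequality. Concretely, UCB1 with the canonical bonus already witnesses this at $T=7$: one checks by tracing the trajectory that $\reg(\cI_O,7)=2$, $\reg(\cI_A,7)=1$, and $\reg(\cI_B,7)=2$, so the agent's expected reward rises from $\tfrac{17}{4}\alpha$ under $\cO$ to $\tfrac{18}{4}\alpha$ under replication. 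Hence $\cO$ is not dominant and UCB1 is not replication-proof.

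The main obstacle is controlling UCB1's exact trajectory finely enough to guarantee simultaneously the timing of the consecutive pair in $\cI_B$ and the three regret inequalities. Because every reward is deterministic, this is ultimately a finite verification rather than a concentration argument, but it must be carried out with care regarding the tie-breaking rule and the precise exploration rounds; the small-horizon witness $T=7$ sidesteps the asymptotics entirely. I also note that this same computation shows UCB1 violates TRP on this instance, so the conclusion could alternatively be routed through the necessity direction of Proposition~\ref{thm:truthful-necessary}.
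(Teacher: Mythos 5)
Your proposal is correct and follows essentially the same route as the paper's proof: the same two-arm Bernoulli prior on $\{0,1\}$, the same reduction to comparing $\reg(\cI_O,T)$ against the average of $\reg(\cI_A,T)$ and $\reg(\cI_B,T)$ on the realizations $(1,0)$ and $(0,1)$, and the same choice of horizon placed just after the original instance's second bad pull and just before the consecutive bad-arm pair in $(1,0,0)$. Your explicit witness $T=7$ is exactly the paper's $T=s_1+s_2+2$ instantiated for the canonical bonus $\sqrt{2\ln t/n_a}$ (where $s_1=1$, $s_2=4$), so the only difference is that you verify the trajectory numerically rather than defining the sequence $s_i$ abstractly.
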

\begin{proof}[Proof of Theorem~\ref{thm:ucb_negative}]
	Consider a single agent with two arms indexed by $1$ and $2$ from distribution $D$ in which each arm's mean reward is $1$ with probability $1/2$ and $0$ with probability $1/2$.
	Consider a truthful strategy $\set{O}$ and strategy $\set{S}$ that replicates arm $1$.
	There exists four possible cases regarding the realization of the arms parameters: (i) both the arms have mean reward $1$, (ii) only arm $1$ has mean reward $1$, (iii) only arm $2$ has mean reward $1$, and (iv) both the arms have mean reward $0$.
	Denote each cases by $X_i$ for $i =1,2,3,4$.
%
	For $X_1$ and $X_4$, note that there exists no difference in the agent's ex-post utility.
	For $X_2$ and $X_3$, we show that there exists a time horizon $T$ such that the average ex-post regret is minimized for strategy $\set{S}$.
	
	\begin{figure}[t]
	\includegraphics[width=0.6\textwidth]{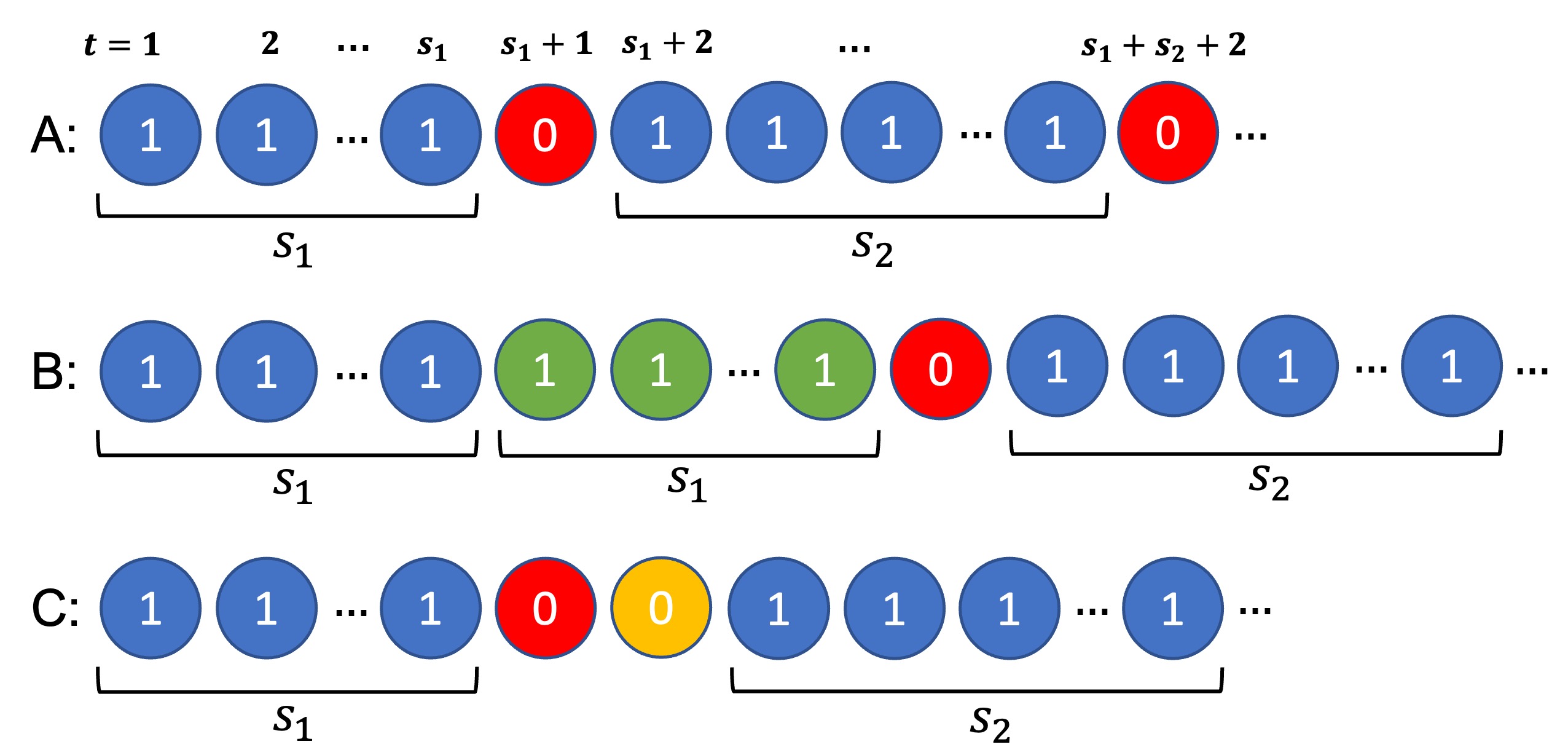}
	\centering
	\caption{Bad instance for UCB in Theorem~\ref{thm:ucb_negative}. 
	Each node denotes the sequence of the bandit algorithm's selection for each instance.
	Blue node denotes the original arm with mean reward $1$, and green one is the replica of it.
	Red nodes the original arm with mean reward $0$, and yellow node is the replica of it.
	Note that at round $t = s_1+s_2+2$, instance $A$ realizes reward $0$ two times, whereas $B$ and $C$ realizes three times, thus having $1.5$ times in average.}
	\label{fig:single-bad}
	\end{figure}
	
	Consider the following list of bandit problem instances with Bernoulli arms.
	\begin{itemize}
		\item Let $A$ be a bandit instance with $\cI_A = (1,0)$.
		\item Let $B$ be a bandit instance with $\cI_B = (1,1',0)$.
		\item Let $C$ be a bandit instance with $\cI_C = (1,0,0')$.
	\end{itemize}
	Each element in $\cI$ denotes the sequence of mean reward of Bernoulli arms, where we write $1'$ and $0'$ to distinguish the arms with same mean.
	Importantly, observe that $A$ denotes the ex-post problem instance that the algorithm faces against strategy $\set{O}$ under $X_2$ or $X_3$.
	Furthermore, $B$ denotes the ex-post problem instance against $\set{S}$ under $X_2$, and $C$ denotes that against $\set{S}$ under $X_3$.
	
	We now show the following claim.
	\begin{claim}\label{cl:neg_ucb}
		Suppose that we run UCB with exploration parameter $f(\cdot)$ for all the problem instances $A,B$, and $C$.
		Then, there exists a time horizon $T$ such that $\reg_A(T) > \nicefrac{(\reg_B(T) + \reg_C(T))}{2}$.
	\end{claim}
	\begin{proof}[Proof of the claim]	
	Define $s_1 = 1$.
	For any integer $i \ge 2$, let $s_i$ denote the minimum integer satisfying
	\begin{align*}
		1 + \sqrt{\frac{\ln(s_i + i)}{s_i}} 
		< 0 + \sqrt{\frac{\ln(s_i + i)}{i}}.
	\end{align*}
	Such $s_i$ exists for any $i \ge 2$ since LHS is a decreasing function of $s_i$ which goes to $1$ as $s_i$ increases and RHS is an increasing function of $s_i$ which goes to $\infty$ as $s_i$ increases, and putting $s_i = i$ yields that LHS is larger than RHS.
	Further, it is easy to check that $s_i > s_j$ if $i > j$ and $s_i \ge i$.
	
	Due to our construction of the sequence $s_i$, in $A$, arm with mean $1$ is selected exactly $s_i$ times and then arm with mean $0$ is selected for a single round, and it repeats for $i=1,2,\ldots$.
	Figure~\ref{fig:single-bad} illustrates this bandit instances.
	As can be seen in the figure, the bandit process in $A$ can be described as
	\begin{align*}
		(1:s_1, 0:1, 1:s_2, 0:1, 1:s_3, 0:1, \ldots),
	\end{align*}
	where in each element of $x:y$ denotes that arm with mean $x$ is selected $y$ times.
	Likewise, in $B$, we have
	\begin{align*}
		(1:s_1, 1':s_1, 0:1, [1:s_2, 1':s_2], 0:1, [1:s_3, 1':s_3], 0:1, \ldots),
	\end{align*}
	where for each element in $[\cdot]$, we ignore the orders how $1$ and $1'$ are selected, but aggregate them to only present the total number of times each one is selected before getting to the arm $0$.
	For $C$, we have
	\begin{align*}
		(1:s_1, [0:1, 0':1], 1:s_2, [0:1, 0':1], 1:s_3, [0:1, 0':1], \ldots).
	\end{align*}
	Set $T = s_1 + 2 + s_2$.
	Given $T$, in $A$, the arm $0$ is selected twice.
	In $B$, the arm $0$ is selected once.
	In $C$, the arm $0$ is selected once, and also $0'$ is selected once.
	Hence, at round $t=T$, $\reg_A(T) = 2 > \frac{1+ 2}{2} = \frac{\reg_B(T) + \reg_C(T)}{2}$,
	and it completes the proof.
	\end{proof}
	Set $T$ as defined in the claim above.
	Finally, we can decompose the agent's ex-ante utility as follows.
	\begin{align*}
		u(\set{O}) 
		= 
		\frac{1}{4}\parans{\sum_{i=1}^4U(\set{O};E_i)}
		&=
		\frac{1}{4}\parans{\sum_{i=1,4}U(\set{O};E_i)} + \frac{1}{4}\parans{\sum_{i=2,3}U(\set{S};E_i)}
		\\
		&=
		\frac{1}{4}\parans{\sum_{i=1,4} \parans{\max_{a \in E_i}\mu_a \cdot T - \reg_{E_i}(T)}}
		+
		\frac{1}{4}\parans{\sum_{i=2,3} \parans{\max_{a \in E_i}\mu_a \cdot T - \reg_{E_i}(T)}}
		\\
		&=
            \frac{T - \reg_A(T)}{2}
		\\
		&<
		\frac{T - (\reg_B(T) + \reg_C(T))/2}{2}
		\\
		&=
		\frac{1}{4}\parans{\sum_{i=1,4}U(\set{S};E_i)} + \frac{1}{4}\parans{\sum_{i=2,3}U(\set{S};E_i)}
		\\
		&= u(\set{S}),
	\end{align*}
	and thus $\set{S}$ yields the strictly larger ex-ante utility, and it concludes the proof.
\end{proof}

Furthermore, even in the multi-agent setting, we prove that H-UCB cannot be made replication-proof with any choice of algorithm parameters, thereby showing that previously suggested algorithms cannot be made replication-proof for any number of agents.
\begin{theorem}[Multi-agent failure - restatement of Theorem~\ref{thm:negative}]\label{thm:hucb_negative}
	For any number of agents $n \ge 1$, there exists a problem instance such that H-UCB is not replication-proof.
\end{theorem}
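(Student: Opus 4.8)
The plan is to establish the multi-agent failure by reducing to the single-agent counterexample of Theorem~\ref{thm:ucb_negative} and then propagating the incentive to replicate from Phase~2 up to Phase~1. First I would fix a construction with $n$ agents where one distinguished agent, say agent~$1$, is equipped with the same two-point prior on $\{0,1\}$ (mean $1$ or $0$ each with probability $1/2$) used in the single-agent proof, while the remaining $n-1$ agents are set up with fixed, known mean rewards chosen so that agent~$1$ is selected by the Phase~1 UCB for a controllable, large number of rounds. The key observation carried over from Theorem~\ref{thm:ucb_negative} is that, \emph{conditioned on agent~$1$ being the selected agent}, Phase~2 runs ordinary UCB on agent~$1$'s registered arms, and on the realizations where agent~$1$'s two original arms have distinct means the same instances $A,B,C$ arise; hence replicating arm~$1$ strictly lowers agent~$1$'s average ex-post regret inside Phase~2 at the tuned horizon $T=s_1+2+s_2$.

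The main step is to show that this Phase~2 gain is not washed out (or reversed) by Phase~1 effects. Here the difficulty is that replication changes agent~$1$'s \emph{agent-level} empirical average $\muhat_1$, which feeds back into the Phase~1 UCB index $s_1 = \muhat_1 + \sqrt{2\ln t / n_1}$ and thus into how often agent~$1$ is selected. I would control this by choosing the competing agents' means and the horizon so that agent~$1$'s Phase~1 selection pattern is essentially \emph{insensitive} to whether it replicates: concretely, arrange a separation in the Phase~1 indices large enough that, over the relevant horizon, the sequence of selected agents is identical (up to the tie-break) under both strategies $\set{O}$ and $\set{S}$ on every realization that matters. Under such a ``Phase~1 decoupling'' the per-round rewards agent~$1$ receives while selected coincide with the single-agent dynamics of instances $A,B,C$, and the decomposition of $u_1(\set{O})$ versus $u_1(\set{S})$ reduces, term by term over the four realizations $X_1,\dots,X_4$, to exactly the inequality $\reg_A(T) > (\reg_B(T)+\reg_C(T))/2$ already proved in Claim~\ref{cl:neg_ucb}, scaled by agent~$1$'s (common) Phase~1 pull count.

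The hard part will be making the Phase~1 decoupling rigorous: I must verify that the change in $\muhat_1$ induced by replication does not flip a single Phase~1 selection over the chosen horizon, since the adaptive UCB index is sensitive to the running average. I would handle this by (i) taking the competing agents' means bounded away from agent~$1$'s conditional average by a constant gap, (ii) choosing $T$ (equivalently the Phase~1 horizon) short enough relative to that gap that the $\sqrt{2\ln t/n_i}$ confidence terms cannot cause a crossover, yet long enough that $T=s_1+2+s_2$ rounds of Phase~2 activity on agent~$1$ are realized, and (iii) invoking the monotonicity $s_i \ge i$ and $s_i > s_j$ for $i>j$ from Claim~\ref{cl:neg_ucb} to guarantee the window exists. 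Once the selection sequences agree, the remaining computation is the same four-case ex-ante decomposition as in the single-agent proof, yielding $u_1(\set{S}) > u_1(\set{O})$ and hence that H-UCB is not replication-proof for any $n \ge 1$. I would finally note that the argument is robust to the Phase~1 exploration parameter, since the decoupling only uses monotonicity of the index in $t$ and $n_i$, matching the remark that the result carries over to arbitrary increasing-then-decreasing $f(t,n_a)$.
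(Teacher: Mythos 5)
Your high-level reduction is the same as the paper's (restrict attention to agent~$1$, note that on the realizations with distinct means the registered arm sets give exactly the instances $A$, $B$, $C$ of Claim~\ref{cl:neg_ucb}, and then control Phase~1), but the step you yourself flag as the hard part --- the ``Phase~1 decoupling'' --- is a genuine gap, and the techniques you propose for it do not close it. Up to agent~$1$'s $(s_1{+}1)$-th pull all three scenarios are literally identical, so decoupling there is free; the problem starts exactly afterwards, when $\muhat_1$ differs across $(\cO,A)$, $(\cS,B)$, $(\cS,C)$ persistently, by order $c/j$ after $j$ pulls. What you need is not just ``same selection sequence'' but that at a \emph{common} horizon $T$ agent~$1$'s pull count equals exactly $k=s_1+s_2+2$ in all three scenarios, since the inequality $\reg_A(T)>(\reg_B(T)+\reg_C(T))/2$ is specific to that pull count; with the coupling ordering $\tau^B_t\le\tau^A_t\le\tau^C_t$ of the rounds of agent~$1$'s $t$-th pull, this amounts to proving $\tau^C_k\le T<\tau^B_{k+1}$. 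A constant gap between agent~$1$'s conditional mean and the dummy agents' means cannot deliver this: Phase~1 index crossings are unavoidable over any window long enough to contain the needed pulls (an unselected agent's bonus $\sqrt{2\ln t/n_j}$ grows without bound), and near a crossing the index difference passes through zero, so an order-$c/j$ perturbation of $\muhat_1$ does flip selections no matter how large the gap in the means is. Nor does ``choose $T$ short'' help, because whether a crossing falls inside the sensitive window depends on the Phase~1 state when the window opens, which your construction leaves uncontrolled. Items (i)--(iii) of your plan never engage with this.

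The paper resolves the difficulty by doing the \emph{opposite} of decoupling. It takes $n=2$ with agent~$2$ holding a single deterministic arm of mean $\mu$, and tunes $\mu$ into the narrow interval $\bigl(\frac{c(s_1+s_2)}{s_1+s_2+2}+\theta,\ \frac{c(s_1+s_2+1)}{s_1+s_2+2}+\theta\bigr)$, i.e.\ sandwiched between agent~$1$'s possible empirical averages at pull count $k$ (shifted by the bonus difference $\theta$), rather than separated from them by a constant gap. With this choice it first argues $\tau^A_k=\tau^B_k=\tau^C_k$ (coupling plus a contradiction argument on the agent-level UCB indices), which is precisely the equal-pull-count property your plan is missing, and then shows that at the very next round Phase~1 selects agent~$1$ under $B$ but selects agent~$2$ under $A$ and $C$. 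Setting $T=\tau^A_k+1$, replication earns an extra Phase~1 selection with probability $1/4$ and truthful play never does, so $u_1(\cS)>u_1(\cO)$. In other words, the decisive gain in the paper's proof is produced \emph{by} the Phase~1 sensitivity to $\muhat_1$ that your approach tries to suppress; if you wish to keep your route, the burden is to prove $\tau^C_k\le T<\tau^B_{k+1}$ for some admissible $T$, and some version of the paper's index-sandwich argument appears unavoidable for that.
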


\begin{discussion}
	We further note that the above two theorems can be easily generalized by parameterizing the exploration parameters $f(n_i, t) = \sqrt{2 \ln(t) / n_i}$ or $g(n_{a,t}, n_i) = \sqrt{2 \ln(n_i)/n_{a,t}}$.
	For example, consider a class of exploration functions $f$ such that it is a monotone decreasing on $n_a$ given $t$, and a monotone increasing on $t$ given $n_a$.
	Further, assume that $f(\cdot, t) \to \infty$ as $t \to \infty$ and $f(n_a, n_a+c) \to 0$ as $n_a \to \infty$ for any constant $c \in \N$, \ie it decreases more rapidly on $n_a$ than it increases from $t$.
	Then, one can check that the both results carry over.
\end{discussion}

\begin{proof}[Proof of Theorem~\ref{thm:hucb_negative}]
	If there exists only a single agent, it is easy to see that the dynamics of the algorithm reduces to standard UCB1.
	Thus, by Theorem~\ref{thm:ucb_negative}, H-UCB is not replication-proof.
	Now we consider multiple agents with $n \ge 2$.
	We only show for the case $n=2$ since it easily extends to the cases with $n \ge 3$.
	Without loss of generality, we assume that the algorithm breaks tie uniformly at random in both phases $1$ and $2$.
	Suppose that agent $1$ has two deterministic arms sampled from $D$ in which the mean reward is $c \in [0,1]$ with probability $0.5$, and $0$ otherwise, for some $c>0$.
	We further assume that agent $2$ is equipped with a deterministic prior $D_i$ supported on $\mu$, and the arm also deterministically outputs reward $\mu$.
	Our choice of $\mu$ will be determined later, which will be selected in order to propagate the single-agent failure to the two-agent cases.
	Especially, we will focus on the round that agent $1$ is selected exactly $s_1+s_2+2$ times, as defined in the proof of Theorem~\ref{thm:ucb_negative}.
	Note that the bandit process is deterministic for any realization of the arms' parameters.
	
	We consider a truthful strategy $\cO$ of registering both arms, and another strategy $\cS$ of replicating only a single arm, and focus on the difference on the agent's expected utility between two strategies.
	If the agent samples $c$ or $0$ both the arms, then the rewards of the bandit process and the agent's ex-post utility therein will be indifferent.
	Thus, we only focus on the remaining case.
	Suppose that agent $1$ samples $c$ for one arm and $0$ for the other, which occurs with probability $1/2$.
	Similar to the proof of Claim~\ref{cl:neg_ucb}, we consnider the following list of bandit problem instances with deterministic arms.
	\begin{itemize}
		\item Let $A$ be a bandit instance with $\cI_A = (c,0)$.
		\item Let $B$ be a bandit instance with $\cI_B = (c,c',0)$.
		\item Let $C$ be a bandit instance with $\cI_C = (c,0,0')$.
	\end{itemize}
	Each element in $\cI$ denotes the sequence of mean reward of Bernoulli arms, where we write $c'$ and $0'$ to distinguish the arms with same mean.
	We simply say arm $c$ and arm $0$ to denote the arm with the corresponding parameter $c$ and $0$, respectively.
	Then, the realization of the agent $1$'s arms in playing strategy $\cS$ corresponds to either of bandit instance $B$ or $C$, and strategy $\cO$ corresponds to the bandit instance $A$.
	Let us abuse the notation of $s_1,s_2,\ldots$ defined in the proof of Claim~\ref{cl:neg_ucb}, so that it denotes the number of rounds that arm $c$ is selected consecutively before arm $0$ is sampled $i$-th time, in problem instance $A$.
	Also, we use $R_t(i)$ and $N_t(i)$ to denote the quantity of $R(i)$ and $N(i)$ at round $t$.
	
	Given the random bits of the tie-breaking rule used by the algorithm, let $\tau^{i}_1,\tau^i_2,\ldots, \tau^i_{k}, \ldots$ be the rounds that agent $1$ is sampled $k$ times when the mean rewards of the agent $1$'s arms are realized to be $i= A,B,C$.
	It is straightforward to check that $\tau^{B}_t \le \tau^{A}_t \le \tau^{C}_t$\footnote{This can  be shown using the coupling technique introduced in \cite{shin2022multi}.}.
	Let $k = s_1+s_2+2$, and following the argument of Claim~\ref{cl:neg_ucb}, we observe that at round $\tau^{B}_k$, agent $1$'s empirical mean rewards by playing $\cS$ will be either of (i) $\frac{c(s_1+s_2)}{s_1+s_2+2}$ or (ii) $\frac{c(s_1+s_2+1)}{s_1+s_2+2}$ based on whether the arms are realized to be instance $B$ or $C$.
	By playing $\cO$, agent $1$ simply has empirical mean reward of $\frac{c(s_1+s_2)}{s_1+s_2+2}$ at round $\tau^{A}_k$.
	Furthermore, if $\tau^{B}_k < \tau^{A}_k$, then in problem instance $A$, it means that the agents' parameters at round $\tau^{B}_k$ satisfy
	\begin{align*}
		R_{\tau^B_k}(1) + \sqrt{\frac{2\ln(t)}{k-1}}
		<
		R_{\tau^B_k}(2) + \sqrt{\frac{2\ln(t)}{\tau^B_k - k}},
	\end{align*}
	where $R_t(i)$ denotes the empirical average reward of agent $i$ at round $t$,  $R_{\tau^B_k}(1) = \frac{c(s_1+s_2)}{s_1+s_2+2}$ and $R_{\tau^B_k}(2) = \mu$.
	However, from our construction of the variables and the fact that H-UCB selects agent $1$ at round $\tau^{B}_k$ in $B$, we know that
	\begin{align*}
		\frac{c(s_1+s_2)}{s_1+s_2+2} + \sqrt{\frac{2\ln(t)}{k-1}}
		\ge
		\mu + \sqrt{\frac{2\ln(t)}{\tau^B_k - k}},
	\end{align*}
	and thus $\tau^A_k$ and $\tau^B_k$ needs to be exactly equivalent.
	Similarly, for instance $C$, since arm $c$ and $c'$ are selected the same number of times at the beginning of round $\tau^B_k$, we conclude that $\tau^i_k$ are equivalent for $i=A,B,C$.
	
	Now define $\theta = \sqrt{\frac{2\ln(t)}{k}} - \sqrt{\frac{2\ln(t)}{\tau^B_k-k}}$ and we pick any $\mu \in (\frac{c(s_1+s_2)}{s_1+s_2+2} + \theta,\frac{c(s_1+s_2+1)}{s_1+s_2+2} + \theta)$.
%
%
%
%
%
%

	Then, at round $\tau^B_k+1$, for bandit instance $A$ and $C$, we observe that agent $2$ will be selected since
	\begin{align*}
		R_{\tau^B_k+1}(2) + \sqrt{\frac{2\ln(t)}{N_{\tau^B_k}(2)}}
		&=
		\mu + \sqrt{\frac{2\ln(t)}{\tau^B_k - k}}
		\\
		&>
		\frac{s_1+s_2}{s_1+s_2+2} + \theta + \sqrt{\frac{2\ln(t)}{\tau^B_k - k}}
		\\
		&=
		\frac{s_1+s_2}{s_1+s_2+2} + \sqrt{\frac{2\ln(t)}{k}}
		= 
		R_{\tau^B_k+1}(1) + \sqrt{\frac{2\ln(t)}{N_{\tau^B_k}(1)}}.
	\end{align*}
	For bandit instance $C$, however, observe that
	\begin{align*}
		R_{\tau^C_k+1}(2) + \sqrt{\frac{2\ln(t)}{N_{\tau^C_k}(2)}}
		&=
		\mu + \sqrt{\frac{2\ln(t)}{\tau^C_k - k}}
		\\
		&<
		\frac{s_1+s_2+1}{s_1+s_2+2} + \theta + \sqrt{\frac{2\ln(t)}{\tau^C_k - k}}
		\\
		&=
		\frac{s_1+s_2+1}{s_1+s_2+2} + \sqrt{\frac{2\ln(t)}{k}} 
		=
		R_{\tau^C_k+1}(1) + \sqrt{\frac{2\ln(t)}{N_{\tau^C_k}(1)}},
	\end{align*}
	and thus arm $1$ is selected.
	This essentially implies that by setting $T = \tau_k^A+1$, agent $1$ is selected at round $T$ with probability $1/4$ if he plays $\cS$, is never selected if he plays $\cO$, where the probability $1/4$ comes from the fact that problem instance $C$ is realized with probability $1/4$.
	Thus, agent $1$'s expected utility becomes larger for such $T$ by replicating one of its arm, and we finish the proof.
\end{proof}


\begin{proof}[Proof of Theorem~\ref{thm:single_pf_rp}]
	Let $\fA$ be an algorithm with symmetry and TRP.
	Given a single agent problem instance $\cI$ with original set $\cO$ with $l$ arms, let $\cS$ be a truthful strategy that registers $\cO$.
	Consider any non-truthful strategy $\cS$ such that it replicates $r_i \in \Z_0$ times for arm $a$ for $a \in [l]$, where $\Z_0 =  \{0,1,2,\ldots\}$ denotes the set of nonnegative integers.
	Note that since $\cS$ is non-truthful strategy, at least one $r_a$ has to be strictly positive for some $a \in [l]$.
	We abuse $r_a=0$ to say that arm $a$ is only registered once without any replication, and write $r = (r_1,\ldots, r_l)$.
	For $v = (v_1,\ldots, v_l) \in \R_{\ge 0}^l$, we write $g(v)$ to denote the density of the mean rewards of arms having $v$, \ie $g(v) = \prod_{a=1}^l f(v_a)$, where $\R_{\ge 0}^l$ denotes the set of $l$-dimensional nonnegative real vectors.
	For simplicity, we assume that the prior distribution of the agent has probability density function, and does not have any point mass.\footnote{This assumption is for ease of exposition, and does not restrict any of our results. For instance, if there exists point mass so that two arms may realize the same mean reward with strictly positive probability, then we can essentially unionize those two arms and consider it as a single arm. In this way, one can verify that all the analysis carry over.}
	We further write $dv$ to denote $dv_1 dv_2 \ldots dv_l$.
	We use $\reg_{v, r}$ to denote the regret of problem instance such that the arms' mean rewards are realized to be $v$ and each arm $i$ is replicated $r_i$ times.
	We finally write $\sigma(v)$ to the permuted vector given a permutation $\sigma \in S^l$, and $\sigma(v)_i$ to denote $i$-coordinate of $\sigma(v)$.
	
	
	Observe that the expected utility in playing the strategy $\cS$ can be written as follows.
	\begin{align*}
		u(\cS) 
		&= 
		\int_{\mu_1 = v_1}\ldots \int_{\mu_l = v_l} \Ex{U(\cS)\ |\ \bigtimes_{i \in [l]}(v_a: r_a)}g(v)dv
		\\
		&=
		\int_{\mu_1 = v_1} \int_{\mu_2 = v_2 < v_1}\ldots \int_{\mu_k = v_l< v_{l-1}} 
		\sum_{\sigma \in \cP_l}\Ex{U(\cS)\ |\ \bigtimes_{i \in [l]}(v_{\sigma(a)}: r_a)}g(v)dv
		\\
		&=
		\int_{\mu_1 = v_1} \int_{\mu_2 = v_2 < v_1}\ldots \int_{\mu_k = v_l< v_{l-1}} 
		\sum_{\sigma \in \cP_l}\Ex{v_1T - \reg_{\sigma(v),r}(T) \ |\ \bigtimes_{i \in [l]}(v_a: r_a)}g(v)dv,
	\end{align*}
	where in the last equation, we condition on $\bigtimes_{i \in [n]}(v_i:r_i)$ since $\reg_{\sigma(v),r}(T)$ already accounts for the permuted bandit instance.
	Further, due to the permutation invariance of the algorithm, applying $\sigma^{-1}$ to all the arm indices does not change the expected regret, and thus we can further expand it to be
	\begin{align}
		u(\cS)
		&=
		\int_{\mu_1 = v_1} \int_{\mu_2 = v_2 <v_1}\ldots \int_{\mu_l = v_l< v_{l-1}} 
		\sum_{\sigma \in \cP_l}\Ex{v_1T - \reg_{v,\sigma(r)}(T) \ |\ \bigtimes_{a \in [l]}(v_a: r_a)}g(v)dv.
		\label{ineq:05131256}
	\end{align}
	Let $\vec{0}_l$ denote the $l$-dimensional zero vector.
	We now prove the following claim.
	\begin{claim}
		For any $v \in R_{\ge 0}^l$, $\sigma \in \cP_l$, and $r \in \Z_{\ge 0}^l$, we have
		\begin{align*}
			\sum_{\sigma \in \cP_l}\Ex{\reg_{v,\sigma(\vec{0}_l)}(T)} 
			\le 
			\sum_{\sigma \in \cP_l}
			\Ex{\reg_{v,\sigma(r)}(T)}.
		\end{align*}
	\end{claim}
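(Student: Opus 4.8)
The plan is to recognize that, once the notation is unwound, this claim is nothing more than a restatement of the TRP hypothesis applied to the bandit instance determined by the realized mean-reward vector $v$. First I would observe that permuting an all-zero replication vector does nothing, \ie $\sigma(\vec{0}_l) = \vec{0}_l$ for every $\sigma \in \cP_l$, so the left-hand side collapses to $l!\cdot\Ex{\reg_{v,\vec{0}_l}(T)}$, which is simply $l!$ copies of the expected regret of the truthful (no-replication) instance, where the expectation is taken over the reward realizations and the random bits of $\fA$.

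Next I would identify each side with a random-permutation regret. Under the no-point-mass assumption the coordinates of $v$ are almost surely distinct, so the instance with means $v$ and replication $r$ has dictionary form with $l$ distinct mean values and count vector $c = (1+r_1,\ldots,1+r_l)$. Since adding one to each coordinate commutes with permutation, permuting $r$ is the same as permuting $c$; hence $\frac{1}{l!}\sum_{\sigma\in\cP_l}\Ex{\reg_{v,\sigma(r)}(T)}$ is exactly $\rpreg(\fA,T;\cS)$, the RP-regret of the replicating strategy $\cS$ whose replication vector is $r$. For the truthful instance the count vector is $(1,\ldots,1)$, which is fixed under every permutation, so its RP-regret equals its plain regret, $\Ex{\reg_{v,\vec{0}_l}(T)} = \rpreg(\fA,T;\cO)$.

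With both sides identified, the claim becomes precisely $\rpreg(\fA,T;\cO)\le\rpreg(\fA,T;\cS)$, which is the definition of TRP applied to the instance with arm means $v$ and the arbitrary strategy $\cS$. Invoking the TRP hypothesis on $\fA$ and multiplying through by $l!$ then yields the stated inequality.

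The step I expect to require the most care is the bookkeeping in identifying the permuted sum with the RP-regret: confirming that summing $\reg_{v,\sigma(r)}(T)$ over all $\sigma\in\cP_l$ reproduces the uniform average over permuted count vectors that defines $\rpreg$ (the $\sigma$-versus-$\sigma^{-1}$ convention is harmless once one sums over the whole symmetric group), and that replication never merges two distinct mean values, so the dictionary form retains exactly $l$ entries and $\cP_l$ remains the correct permutation group throughout. The general case of non-distinct $v$, should one want it beyond the almost-surely-distinct regime, follows by the same consolidation of equal-mean arms already used to pass to the dictionary form.
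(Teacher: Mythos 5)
Your proof is correct and takes essentially the same approach as the paper: the paper's own proof is a three-line computation that identifies the left- and right-hand sums with the random-permutation regrets of $\cO$ and $\cS$ respectively and then invokes the TRP hypothesis, exactly as you do. Your extra bookkeeping (noting $\sigma(\vec{0}_l)=\vec{0}_l$ and the correspondence between the replication vector $r$ and the count vector $(1+r_1,\ldots,1+r_l)$ in the dictionary form) only spells out details the paper leaves implicit.
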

	\begin{proof}
		This directly follows from our definition of TRP.
		More formally,
		\begin{align*}
			\sum_{\sigma \in \cP_l}\Ex{\reg_{v,\sigma(\vec{0}_l)}(T)} 
			\tag{def. of RP-Regtret}
			&=
			\frac{\rpreg(\fA,T;\cO)}{|\cP_l|}
			\\
			&\le
			\frac{\rpreg(\fA,T;\cS) }{|\cP_l|}
			\tag{since $\fA$ is TRP}
			\\
			&=
			\sum_{\sigma \in \cP_l}
			\Ex{\reg_{v,\sigma(r)}(T)}.
			\tag{def. of RP-Regret}
		\end{align*}
		It completes the proof.
	\end{proof}
	By the claim, we can further expand from \eqref{ineq:05131256} as follows.
	\begin{align*}
		u(\cS)
		&\le
		\int_{\mu_1 = v_1} \int_{\mu_2 = v_2 < v_1}\ldots \int_{\mu_l = v_l< v_{l-1}} 
		\sum_{\sigma \in \cP_l}\Ex{v_1T - \reg_{v,\sigma(\vec{0}_l)}(T) \ |\ \bigtimes_{a \in [l]}(v_a: r_a)}g(v)dv,	
	\end{align*}

	Since $\reg_{v, \sigma(\vec{0}_l)}$ is equivalent to the regret of the truthful strategy $\cO$ given the ex-post realization $v$, we obtain
	\begin{align*}
		u(\cS)
		&\le
		\int_{\mu_1 = v_1} \int_{\mu_2 = v_2 < v_1}\ldots \int_{\mu_l = v_l< v_{l-1}} 
		\sum_{\sigma \in \cP_l}\Ex{v_1T - \reg_{\cI}(T) \ |\ \bigtimes_{a \in [l]}(v_a: r_a)}g(v)dv
		\\
		&=
		\int_{\mu_1 = v_1} \int_{\mu_2 = v_2 }\ldots \int_{\mu_l = v_l} 
		\Ex{U(\cO)\ |\ \bigtimes_{a \in [l]}(v_a: r_a)}g(v)dv
		\\
		&=
		u(\set{O}),
	\end{align*}
	and thus the truthful registration $\set{O}$ is a dominant strategy.
\end{proof}


\section{Proof of Proposition~\ref{thm:truthful-necessary} and Theorem~\ref{thm:single_rp}}
Here, we provide the proofs for the necessity of TRP for truthfulness in the single-agent setting, and the main result in the single-agent setting.
\begin{proof}[Proof of Proposition~\ref{thm:truthful-necessary}]
	Consider a single-agent setting.
	Since the algorithm $\fA$ does not satisfy TRP, there exists a problem instance with original arms $\cO$ and the agent's strategy $\cS$, time horizon $T$ such that 
	\begin{align*}
		\rpreg(\fA,T;\cO) 
		\le 	
		\rpreg(\fA,T;\cS).
	\end{align*}
	As observed in the proof of Theorem~\ref{thm:single_pf_rp}, the expected utility of the strategies $\cO$ and $\cS$ can be compared using $\rpreg(\fA,T;\cO)$ and $\rpreg(\fA,T;\cS)$, so the result follows.
    \sscomment{double check}
\end{proof}

Before presenting the proof of Theorem~\ref{thm:single_rp}, we provide the pseudocode of the standard ETC algorithm to make the paper self-contained.
\begin{algorithm}
\textbf{Input}: exploration length $m$, tie-breaking rule\\
Play each arm for $m$ rounds.\\
Play the arm with the highest empirical mean for all remaining rounds
\caption{\textsc{ETC}}\label{alg:ETC}
\end{algorithm}

\begin{proof}[Proof of Theorem~\ref{thm:single_rp}]
	Recall that we will deal with a single agent.
	Consider a bandit instance $\cI$ in which there exists $l$ arms with parameters $v_1 \ge v_2 \ge \ldots \ge v_l$, and the time horizon is $T$.
	Let $n_a(t)$ be the number of rounds that arm $a$ is pulled and $\muhat_a(t)$ be the empirical average reward, at the end of round $t$, and define $\delta_a = v_1 - v_a$ for $a \in [l]$.
	
	Our proof mainly relies on the following lemma.
	\begin{lemma}\label{th:ETC_main_th}
		Suppose that the realization of the arms's mean rewards is given, and let $a^*$ be the arm the largest mean reward.
		If $m \ge \frac{2 l}{\Delta^2} \ln{2T}$, then for a sub-optimal arm we have:
		\begin{align*}
			\Ex{n_a(T)} \leq m+1.
		\end{align*}
		Especially for $m = \frac{2}{\Delta^2} l \ln (2T)$, the expected regret is upper bounded by:
		\begin{align*}
    		\reg(T) 
    		\leq 
    		\sum_{a\in [l]} (m+1) \Delta_i 
    		\leq 
    		\sum_{a\in [l]} \Big( \frac{2\delta_a l\ln{(2T)}}{\Delta^2} + 1 \Big).
		\end{align*}	
	\end{lemma}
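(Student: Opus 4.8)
The plan is to exploit the rigid two-phase structure of ETC. During exploration every arm is pulled exactly $m$ times, and afterwards only the single empirically-best arm is pulled for the remaining $T-ml$ rounds. Hence, in the regime $T \ge ml$ where exploration has completed, I have the exact identity
\[
n_a(T) = m + (T - ml)\,\Ind{a \text{ is committed}}
\]
for every arm $a$. Taking expectations over the reward tapes and the algorithm's randomness gives $\Ex{n_a(T)} = m + (T-ml)\Pr{a \text{ committed}}$, so the whole problem reduces to controlling the probability that a \emph{suboptimal} arm wins the exploration phase.

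First I would reduce the commitment event to a pairwise comparison against the optimal arm $a^*$: since being committed requires having the maximal empirical mean, $\{a \text{ committed}\} \subseteq \{\muhat_a(ml) \ge \muhat_{a^*}(ml)\}$ regardless of the tie-breaking rule. Writing the difference of the two empirical means as an average of $m$ i.i.d.\ terms $X_{a,s}-X_{a^*,s} \in [-1,1]$ with mean $v_a - v_1 = -\delta_a$, a single application of Hoeffding's inequality (range $2$, so $\sum_{s}(b_s-a_s)^2 = 4m$) yields
\[
\Pr{a \text{ committed}} \le \Pr{\muhat_a(ml) \ge \muhat_{a^*}(ml)} \le \exp\!\Big(-\tfrac{m\delta_a^2}{2}\Big).
\]
Because the support of $F$ is discrete with minimum gap $\Delta$ and $a$ is strictly suboptimal, $\delta_a = v_1 - v_a \ge \Delta$, so this probability is at most $\exp(-m\Delta^2/2)$.

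Next I would substitute the hypothesis $m \ge \frac{2l}{\Delta^2}\ln(2T)$, which gives $\exp(-m\Delta^2/2) \le (2T)^{-l} \le \frac{1}{2T}$ for every $l \ge 1$. Plugging this back in,
\[
\Ex{n_a(T)} = m + (T - ml)\Pr{a \text{ committed}} \le m + T\cdot\tfrac{1}{2T} = m + \tfrac12 \le m+1,
\]
which is the first claim. The regret bound then follows from the standard decomposition $\Ex{\reg(T)} = \sum_{a\in[l]} \delta_a \Ex{n_a(T)}$, in which the optimal arm contributes $0$: bounding each suboptimal term by $\delta_a(m+1)$, specializing to $m = \frac{2}{\Delta^2}l\ln(2T)$, and using $\delta_a \le 1$ to absorb the trailing $+1$ gives $\sum_{a\in[l]}\big(\frac{2\delta_a l \ln(2T)}{\Delta^2} + 1\big)$.

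I expect the only genuinely delicate point to be the concentration step: I must phrase the comparison so that a single two-sided deviation on the differenced variable suffices and the range of $X_{a,s}-X_{a^*,s}$ is handled correctly, rather than paying an avoidable union-bound factor over all $l$ arms. Everything else---the exact pull-count identity, the discrete-gap lower bound $\delta_a \ge \Delta$, and the regret decomposition---is routine bookkeeping. It is worth noting that the stated hypothesis carries an extra factor of $l$ beyond what this particular bound strictly needs (a threshold of $m \gtrsim \frac{2}{\Delta^2}\ln(2T)$ already forces $\Ex{n_a(T)} \le m+1$); that slack is harmless here and is presumably calibrated for the cross-instance regret comparisons underlying the TRP argument.
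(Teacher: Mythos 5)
Your proof is correct and takes essentially the same approach as the paper's: bound $\Ex{n_a(T)}$ by $m$ plus the commitment probability times the remaining horizon, control that probability with Hoeffding's inequality via $\delta_a \ge \Delta$ and the choice of $m$, then finish with the regret decomposition (and your observation that the factor $l$ in the hypothesis is slack for this lemma is accurate --- the paper's own substitution only invokes $m \ge \tfrac{2}{\Delta^2}\ln(2T)$). The only cosmetic difference is the concentration step: you difference the paired samples and apply Hoeffding once to get $\exp(-m\delta_a^2/2)$, while the paper splits the gap into two half-deviations and union-bounds to get $2\exp(-m\delta_a^2/2)$; both yield $\Ex{n_a(T)} \le m+1$ since $2T\exp(-\ln 2T) = 1$.
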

	\begin{proof}[Proof of the lemma]
		Due to the structural nature of ETC, for any sub-optimal arm $a$ we have
		\begin{align*}
    		\Ex{n_a(T)} \leq m + (T-mk) \Pr{\hat{\mu}_a(m) \ge \hat{\mu}_{a^*}(m)} 
    	\end{align*}
    	Now we focus on upper bounding $\Pr{\hat{\mu}_a(m) \ge \hat{\mu}_{a^*}(m)}$.
    	\begin{claim}\label{cl:concen}
    		For any arm $a \neq a^*$, the following holds.
    		\begin{align*}
    			\Pr{\hat{\mu}_a(m) \ge \hat{\mu}_{a^*}(m)} 
    			\le 
    			2\exp(-\frac{\delta_a^2 m}{2})
    		\end{align*}
    	\end{claim}
    	\begin{proof}[Proof of the claim]
    		We can expand the inequality as follows.
    		\begin{align*}
    			\Pr{\hat{\mu}_a(m) \ge \hat{\mu}_{a^*}(m)} &= \Pr{(\hat{\mu}_a(m) -v_a) + (v_{a^*} - \hat{\mu}_{a^*}(m)) \ge \delta_a}
    			\\ 
    	    	&\leq 
    	    	\Pr{(\hat{\mu}_a(m) -v_a) \ge \frac{\delta_a}{2}} + \Pr{(v_{a^*} - \hat{\mu}_{a^*}(m)) \ge \frac{\delta_a}{2}}
    	    	\\ 
    	    	&\leq 
    	    	2 \exp(- \frac{\delta_a^2}{4} \frac{1}{2} \frac{1}{\frac{1}{4m}}) \tag{Hoeffding's inequality} \\
    	    	& \leq 2\exp(- \frac{\delta_a^2 m}{2}),
			\end{align*}
			which finishes the proof.
			Note that we use the following well-known type of Hoeffding's inequality.
			\begin{theorem}
				Let $X_1,\ldots, X_n$ be independent random variables such that $a_i \le X_i \le b_i$ almost surely.
				Define $S_n = X_1 + \ldots X_n$.
				For any $t >0$,
				\begin{align*}
					\Pr{|S_n - \Ex{S_n}| \ge t}
					\le 
					2\exp\parans{-\frac{2t^2}{\sum_{i=1}^n (b_i - a_i)^2}}.	
				\end{align*}
			\end{theorem}
    	\end{proof}
		Using the claim above, we can further expand,
		\begin{align*}
       		\Ex{n_a(T)} 
       		&\leq
       		m + (T-mk) \Pr{\hat{\mu}_a(m) \ge \hat{\mu}_{a^*}(m)} 
       		\\
       		&\leq 
       		m + 2T\exp(- \frac{\delta_a^2 m}{2})
       		\tag{Claim~\ref{cl:concen} $\&$ $T-mk \le T$}
      		\\
       		&\leq
       		m + 2T  \exp(- \Big(\frac{\delta_a}{\Delta}\Big)^2 \ln{2T})
       		\tag{$m\ge 2/\Delta^2 \ln(2T)$}
       		\\
       		&\leq 
       		m+ 2T  \exp(- \ln{2T})
       		\tag{Since $\frac{\delta_a}{\Delta} \ge 1$} 
       		\\
       		&=
       		m + 1.
		\end{align*}
		This proves the upper bound on $\Ex{n_a(T)}$.
		To obtain the upper bound on the regret, one can use the following well-known regret decomposition lemma.
		\begin{lemma}[Regret decomposition]
			The regret can be writtten as
			$\reg(T) = \sum_{a \in [l]} \Ex{n_i(T)\delta_a}$.
		\end{lemma}

		From this, the regret bound easily follows as below
		\begin{align*}
    		\reg(T)
    		&\leq 
    		\sum_{a\in [l]} (m+1) \delta_a 
    		\\
    		&= 
    		\sum_{a\in [l]} (\frac{2l}{\Delta^2} \ln{2T}+1) \delta_a 
    		\\ 
    		&\leq 
    		\sum_{a\in [l]} ( \frac{2l\delta_a}{\Delta^2} \ln{2T} +1).
    		\tag{Since $\Delta_a\leq 1$} 
		\end{align*}
		This completes the proof.
	\end{proof}
	
	Now, consider a strategy $\cS \neq \cO$ of which the replication vector is $r = (r_1,\ldots, r_l)$ (as defined in the proof of Theorem~\ref{thm:single_pf_rp}), and at least one coordinate being strictly positive.
	Consider a realization $v \in \R_{\ge 0}^l$, and a permutation $\sigma(v)$.
	Let $\cI_{\sigma}$ be a bandit instance in which the arms are permuted by $\sigma$ from $\cI$.

	Given $v_1 \ge \ldots \ge v_l$, let $\cI_{\sigma,v}$ be the instance in which $i$-th arm has mean reward $v_{\sigma(a)}$ for $a \in [l]$, \ie $\cI_{\sigma, v} = \bigtimes_{a \in [l]}(v_{\sigma(a)}: r_a)$.
	Denote the expected regret under $\cI_{\sigma,v}$ by $\reg_{I_{\sigma,v}}$.
	
	From the regret decomposition lemma, we have
	\begin{align*}
		\reg_{I_{\sigma,v}}(T) = \sum_{a\in [l]}	\Ex{n_{\sigma(a),t}}\delta_{\sigma(a)} \parans{r_a + 1}.
	\end{align*}
	Due to the initial exploration phase of ETC, obviously $\Ex{n_{\sigma(a),t}} \ge m$.
	Hence we obtain
	\begin{align*}
		\reg_{I_{\sigma,v}}(T) 
		&\ge
		\sum_{a\in [l]} m \cdot \delta_{\sigma(a)}(r_a + 1).
	\end{align*}
	Now, RHS in the random permutation regret \eqref{eq:epf} can be written as follows.
	\begin{align}
		\sum_{\sigma \in \cP_l}\reg_{\cI_{\sigma,v}}(T)/|\cP_l|
		&\geq
		\frac{\sum_{\sigma \in \cP_l}\sum_{a\in [l]} m \cdot \delta_{\sigma(a)}(r_a + 1)}{l!}
		\nonumber
		\\
		&\geq
		\frac{\sum_{a\in [l]} \sum_{\sigma \in \cP_l} m \cdot \delta_{\sigma(a)}(r_a + 1)}{l!}
		\nonumber
		\tag{Change order of summation}
		\\
		&=
		\frac{\sum_{a\in [l]} \sum_{\sigma \in \cP_l} m \cdot \delta_{a}(r_{\sigma(a)} + 1)}{l!}
		\nonumber
		\tag{Sum over $\sigma$ equals that over $\sigma^{-1}$}
		\\
		&=
		\frac{\sum_{a\in [l]} m \cdot \delta_a(l-1)!\sum_{j=1}^l (r_{j}+1)}{l!}
		\nonumber
		\tag{fix $\sigma(a) = j$ and sum over $\sigma \in \cP_l$	}
		\\
		&=
		\frac{\sum_{a\in [l]} m \cdot \delta_a (\norm{r}_1+l)}{l}\label{ineq:05090221}
	\end{align}
	Due to our construction of $r$, we have $\norm{r}_1 \ge 1$, and finally we obtain
	\begin{align}
		\reg_\cI(T) 
		&\le
		\sum_{a\in [l]} (m+1) \delta_a 
		\label{ineq:05090222}
		\\
		&\le
		\sum_{a\in [l]} m\delta_a \frac{l+1}{l},
		\nonumber
		\tag{$m = \frac{2l}{\Delta^2}\ln 2T \ge l$}
		\\
		&\le
		\frac{\sum_{a\in [l]} m \cdot \delta_a\sum_{j=1}^l (r_{j}+1)}{l}
		\tag{$\norm{r}_1 \ge 1$}
		\nonumber
		\\
		&\le
		\sum_{\sigma \in \cP_l}\reg_{\cI_{\sigma,v}}(T)/|\cP_l|,
		\nonumber
	\end{align}
	where the last inequality follows from the inequality~\eqref{ineq:05090221}.
	Note further that $\reg_\cI(T)$ is $\rpreg$ of truthful strategy $\cO$ given the realization $v$, and RHS denotes that of a strategy $\cS$ whose replication vector is $r$.
	Thus,
	\begin{align*}
		\rpreg(\fA, T, \cO) \le \rpreg(\fA, T, \cS),	
	\end{align*}
	for $\fA$ being ETC, and it concludes that ETC is TRP.
\end{proof}

\section{Proof Sketch of Proposition~\ref{thm:single-refine1}}\label{apd:thm:single-refine1}
We here provide the proof sketch of Proposition~\ref{thm:single-refine1}, where we omit the details as it is a cumbersome application of standard techniques.
The formal pseudocode is provided in Algorithm~\ref{alg:pid-ETC}.
\begin{algorithm}
	\textbf{Input}: Tie-break rule $\tau$\\
	$T_{prev} \gets 0$\\
	\For{$i=1,2,\ldots$}
	{
		$T \gets 2^i$\\
		\While{$t\le T$}
		{
			$m \gets \floor{\ln^4(T)}$
			Play each arm $m$ rounds\\
			Play the arm with the highest empirical mean for all remaining rounds, given $\tau$
		}
		$T_{prev} \gets T$
	}
	\caption{Prior-indepdenent ETC}\label{alg:pid-ETC}
\end{algorithm}
	
	To prove the proposition, let us fix some $T$ and focus on the rounds that operates based on this $T$.
	Due to the assumptions, there exist sufficiently large $T$ such that $l,1/\Delta \le\ln T$.
	Then, for sufficiently large $T$, it follows that
	\begin{align}
		m = \floor{\ln^4(T)} \ge l \ln T \Delta^2.\label{eq:msatis}
	\end{align}
	If $T$ is not sufficiently large, the while loop will break before gets to the exploitation phase, and therefore it does not change the expected payoff of the agent.
	On the other hand, if $T$ is sufficiently large, then since $m$ satisfies \eqref{eq:msatis}, by Theorem~\ref{thm:single_rp}, each partition will be replication-proof for fixed $T$.
	Therefore, iterating over $T = 2^{1},2^2,\ldots$ preserves the replication-proofness of the algorithm, and thus the overall algorithm is replication-proof.
	Note further that the exploration length $\ln^4(T)$ only amounts for polylogarithmic factor to the regret of the algorithm.
\begin{algorithm} 
\SetAlgoLined
\textbf{Input}: Tie-breaking rule, agent set $\cN$, arm set $\cS_i$ for $i \in \cN$, restarting round $\tau \in [T]$\\
Let $\muhat_i \gets 0$, $n_i\gets 0$, $\muhat_{i,a} \gets 0$, $n_{i,a} \gets 0$ for $i \in \cN$ and $a \in \cS_i$\\

 \For{$t=1,2,\ldots,T$}{
  \If{$t = \tau+1$}
  {
  $\muhat_{i,a} \gets 0; ~ n_{i,a} \gets 0$ for $\forall a \in \cS_i$, $\forall i \in \cN$
  }
  \If{$n_i < M$ for some $i \in \cN$}
  {
  	$\hat{i} \gets i$
  }
  \Else{$\hat{i} \gets \argmax_{i \in \cN}\muhat_{i}$}
  
  \If{$n_{\hat{i},a} < m$ for some $a \in \cS_{\hat{i}}$}
  {
  	$\hat{a} \gets a$
  }
  \Else{$\hat{a} \gets \argmax_{a \in \cS_{\hat{i}}}\muhat_{\hat{i},a}$}  
  Pull $\hat{a}$ of agent $\hat{i}$ and obtain reward $R_t$\\
  Update statistics:
  $\muhat_{\hat{i},\hat{a}} \leftarrow \tfrac{\muhat_{\hat{i},\hat{a}}n_{\hat{i}, \hat{a}} + R_t}{n_{\hat{i}, \hat{a}}+1}; ~
  \muhat_{\hat{i}} \leftarrow \tfrac{\muhat_{\hat{i}}n_{\hat{i}} + R_t}{n_{\hat{i}}+1};~ n_{\hat{i}, \hat{a}} \leftarrow n_{\hat{i}, \hat{a}} + 1; ~~ n_{\hat{i}} \leftarrow n_{\hat{i}} + 1;$
 }
 \caption{Hierarchical ETC with Restarting ($\hbb$)\label{alg:hbb}}
\end{algorithm}

\section{Proof of Theorem~\ref{thm:spf_rp}}
\begin{proof}
	\sscomment{proofread again, and structurize}
	First, note that our $\hbb$ is well-defined since our choice of $M$ and $m$ ensures that the exploitation phase of the first phase starts after the exploitation of the second phase ends.
	For simplicity, denote the first phase ETC by $\alg_1$ and the second phase ETC by $\alg_2$.
	Pick an agent $i$ with arms $\{o_{i,1}, \ldots, o_{i,l}\}$, let $\cS_i$ be agent $i$'s arbitrary strategy which replicates arm $o_{i,j}$ of $r_{i,j} \in \N_0$ times for each $j \in [l_i]$.
	Let $r_i = (r_{i,1},\ldots, r_{i,l_i})$, and denote by $v_i = (v_{i,1}, \ldots, v_{i,l_i})$ the realization of agent $i$'s original arms' mean rewards.
	Fix strategies of all the other agents $j \neq i$ by $\cS_{-i}$, and fix a realization $v_{-i}$ of all the other agents' arms' mean rewards.
	Henceforth, we will omit $\cS_{-i}$ and $v_{-i}$ from the arguments in the notations presented hereafter, unless confusion arises.
	In addition, we assume that all the prior distributions do not have point mass for simplicity, but it can easily be generalized as discussed in the proof of Theorem~\ref{thm:single_pf_rp}.
	
	Let $i^*$ denote the agent selected at the exploitation phase of $\alg_1$, and $l_i$ to denote agent $i$'s number of arms.
	We further write $\muhat_i(M)$ to denote the empirical average reward of agent $i$ when agent $i$ is selected $M$ times,
	and $\muhat_{i,a}(m)$ to denote the empirical average reward of agent $i$'s arm $a$ when it's selected $m$ times.
	Then, given $v_i$ and $r_i$, recall that the bandit instance agent $i$ submits can be presented as $\bigtimes_{i \in [l]}(v_a: r_a)$.
	
	Now, similar to the proof of Theorem~\ref{thm:single_pf_rp}, we first write agent $i$'s expected utility as follows.
	\begin{align}
		u_i(\cS_i)
		=
		\int_{\mu_{i,1} = v_{i,1}} \int_{\mu_{i,2} = v_{i,2} < v_{i,1}}\ldots \int_{\mu_{i,l_i} = v_{i,l_i}< v_{i,l_i-1}} 
		\sum_{\sigma \in \cP_{l_i}}\Ex{U_i(\cS_i)\ |\ \bigtimes_{a \in [l_i]}(v_{i,\sigma(a)}: r_{i,a})}g(v_i)dv_i.\label{eq:04262152}
	\end{align}
	
	We first characterize the expected utility of agent $i$ based on its utility under the single agent setting with agent $i$.
	To this end, we introduce several notations.
	Write $\Gamma_i^{\cS_i}(t, \sigma(v_i))$ to denote agent $i$'s utility when he is selected $t$ times under the single agent setting in $\alg_2$ when its arms' mean rewards are realized to be $\sigma(v_i)$, and he plays strategy $\cS_i$ with replication vector $r_i$.
	Given $M$, $m$, and $\cS_i$, define an event
	\[
	C^{\cS_i}_i(y,\sigma(v)) = \{\max_{\cN \ni j\neq i}\parans{\muhat_j(M) }= y \ | \ \bigtimes_{a \in [l_i]}(v_{i, \sigma(a)}:r_{i,a})\},
	\]
	\ie the event that agents other than $i$ have maximal agent-wise empirical average reward $y$ at the end of the exploration phase in $\alg_1$.
	Again given $M$, $m$, and $\cS_i$, we further define following two quantities:
	\begin{align*}
		p^{\cS_i}_i(y, \sigma(v_i)) &= \CPr{C_i(y)}{\bigtimes_{a \in [l_i]}(v_{i, \sigma(a)}:r_{i,a})}
		\\
		q^{\cS_i}_i(y,\sigma(v_i)) &= \CPr{\muhat_i(M) \ge y}{\bigtimes_{a \in [l_i]}(v_{i, \sigma(a)}:r_{i,a})}.
	\end{align*}
	Remark that all the quantities above are conditioned on the event that $v_{-i}$ is realized and the other agents are playing $\cS_{-i}$.

	The following claim essentially characterizes the expected utility of agent $i$ under strategy $\cS$.
	\begin{claim}
		Agent $i$'s expected utility can be written as follows.
		\begin{align}
			\CEx{U_i(\cS_i)}{\bigtimes_{a \in [l_i]}(v_{i, \sigma(a)}:r_{i,a}}
			&=
			\Ex{\Gamma_i^{\cS_i}(M,\sigma(v_i))}  \nonumber
			\\
			&~~~+ \Ex{\Gamma_i^{\cS_i}(T-Mn,\sigma(v_i))} \int_0^1p_i^{\cS_i}(y,\sigma(v_i))q_i^{\cS_i}(y,\sigma(v_i))dy.\label{eq:04271231}	
		\end{align}
	\end{claim}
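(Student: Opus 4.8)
The plan is to condition throughout on the fixed realization $\bigtimes_{a\in[l_i]}(v_{i,\sigma(a)}:r_{i,a})$ of agent $i$'s arm means and replication pattern (with $\cS_{-i}$ and $v_{-i}$ already fixed), and to split agent $i$'s ex-post utility according to the restarting round $\tau=Mn$. Writing $U_i(\cS_i)=U_i^{\mathrm{expl}}+U_i^{\mathrm{exploit}}$, where $U_i^{\mathrm{expl}}$ collects agent $i$'s reward share over rounds $1,\dots,Mn$ and $U_i^{\mathrm{exploit}}$ over rounds $Mn+1,\dots,T$, I would identify the two summands of \eqref{eq:04271231} with the expectations of these two pieces and add by linearity.

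For the exploration piece I would use that $\alg_1$ is a non-adaptive, agent-level ETC: during rounds $1,\dots,Mn$ each agent, hence agent $i$, is pulled exactly $M$ times regardless of the realized rewards. Since the second-phase statistics $n_{i,a},\muhat_{i,a}$ are kept per agent, interleaving agent $i$'s pulls with the other agents' pulls does not alter which of agent $i$'s arms $\alg_2$ chooses; the induced arm-selection sequence coincides with a single-agent run of $\alg_2$ over $M$ rounds. Therefore $U_i^{\mathrm{expl}}=\Gamma_i^{\cS_i}(M,\sigma(v_i))$, matching the first term after taking expectations.

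For the exploitation piece, agent $i$ earns reward only if it is the agent committed to by $\alg_1$, i.e. if $\muhat_i(M)\ge\max_{\cN\ni j\neq i}\muhat_j(M)$ (ties by the uniform rule), and earns nothing otherwise. When agent $i$ is selected, the restart at round $Mn+1$ clears its second-phase statistics, so $\alg_2$ runs a fresh single-agent ETC for agent $i$ over the remaining $T-Mn$ rounds, contributing $\Gamma_i^{\cS_i}(T-Mn,\sigma(v_i))$. Hence $U_i^{\mathrm{exploit}}=\Gamma_i^{\cS_i}(T-Mn,\sigma(v_i))\,\Ind{\muhat_i(M)\ge\max_{j\neq i}\muhat_j(M)}$.

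The crux, and the step I expect to be the main obstacle, is to factor the expectation of this last product, which rests on two independences established after conditioning on the realization. First, the restart makes $\Gamma_i^{\cS_i}(T-Mn,\sigma(v_i))$ a function only of the reward draws after round $Mn$, while the selection indicator is a function only of the draws over rounds $1,\dots,Mn$; as these draws are independent given the arm means, the product's expectation splits into $\Ex{\Gamma_i^{\cS_i}(T-Mn,\sigma(v_i))}$ times the selection probability. Second, since the priors $F_j$ and thus the reward draws are independent across agents and each agent is explored the deterministic number $M$ of times, $\muhat_i(M)$ is independent of $\max_{j\neq i}\muhat_j(M)$; conditioning on the value $y$ of the latter and integrating against its law $p_i^{\cS_i}(\cdot,\sigma(v_i))$ turns the selection probability into $\int_0^1 p_i^{\cS_i}(y,\sigma(v_i))\,q_i^{\cS_i}(y,\sigma(v_i))\,dy$ with $q_i^{\cS_i}(y,\sigma(v_i))=\CPr{\muhat_i(M)\ge y}{\cdots}$. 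Adding the two phases gives \eqref{eq:04271231}. I would stress that the decomposition hinges entirely on the restart --- otherwise the carried-over statistics would correlate the exploitation utility with the selection event --- and on the cross-agent independence; point masses in the priors are handled as in Theorem~\ref{thm:single_pf_rp} by merging coincident arms, and discreteness of $\muhat_i(M)$ merely reads the displayed integral as a Lebesgue--Stieltjes integral against the law of $\max_{j\neq i}\muhat_j(M)$.
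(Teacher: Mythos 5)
Your proposal is correct and follows essentially the same route as the paper's proof: decompose the utility at the restarting round $\tau=Mn$, use the non-adaptive nature of agent-level ETC to identify the exploration piece with $\Gamma_i^{\cS_i}(M,\sigma(v_i))$, and use the restart plus cross-agent independence to write the exploitation piece as $\Ex{\Gamma_i^{\cS_i}(T-Mn,\sigma(v_i))}$ times the selection probability $\int_0^1 p_i^{\cS_i}q_i^{\cS_i}\,dy$. You are in fact slightly more careful than the paper, which leaves the factorization of the product (independence of pre-$Mn$ draws from post-restart draws) and the Lebesgue--Stieltjes reading of the integral implicit.
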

	\begin{proof}[Proof of the claim]
		Since $\hbb$ is running with the first phase being ETC, the expected utility of agent $i$ can be decomposed as (i) expected utility in the exploration phase and (ii) expected utility in the exploitation phase.
		Note that (i) essentially coincides with the expectation of $\Gamma_i^{\cS_i}(M, \sigma(v_i))$ since it chooses agent $i$ exactly $M$ times regardless of the random rewards.
		Moreover, due to our restarting at time $\tau = Mn$, when the exploitation phase of $\alg_1$ starts, all the parameters of $\alg_2$ is initialized.
		Thus, conditioned on the event that agent $i$ has the largest empirical average reward at round $\tau = Mn$, its expected utility in the exploitation phase again coincides with the single agent case expected utility $\Gamma_i^{\cS_i}(T-Mn, \sigma(v_i))$.
		Otherwise, if agent $i$ has not been selected as the empirically best agent, then its conditional expected utility is simply zero.
		Note that the probability of agent $i$ being the empirically best agent can be written as follows,\footnote{To be precise, we need to take account for the tie-breaking rule and the event that tie happens, but we omit the detail as it can be easily handled and is beyond of our interest.}
		\begin{align*}
			\Pr{\text{agent $i$ is the empirically best agent at round $\tau = Mn$}}
			&=
			\int_{0}^1 \Pr{\muhat_i(M) \ge y}\Pr{C_i(y)}
			\\
			&=
			\int_0^1p_i^{\cS_i}(y,\sigma(v_i))q_i^{\cS_i}(y,\sigma(v_i))dy,
		\end{align*}
		and we finish the proof of the claim.
	\end{proof}
	Using the claim above, for any strategy $\cS_i$, we can further expand its conditional expected utility $\sum_{\sigma \in \cP_{l_i}} \CEx{u_i(\cS_i)}{\bigtimes_{a \in [l_i]}(v_{i, \sigma(a)}:r_{i,a})}$ as follows.
	\begin{align}
		\sum_{\sigma \in \cP_{l_i}}
		& 
		\CEx{U_i(\cS_i)}{\bigtimes_{a \in [l_i]}(v_{i, \sigma(a)}:r_{i,a})}\nonumber
		\\
		&=
		\sum_{\sigma \in \cP_{l_i}}
		\parans{
		\Ex{\Gamma_i^{\cS_i}(M,\sigma(v_i))} + \Ex{\Gamma_i^{\cS_i}(T-Mn,\sigma(v_i))} \int_0^1p_i^{\cS_i}(y,\sigma(v))q_i^{\cS_i}(y,\sigma(v_i))dy
		}
		\nonumber
		\\
		&=
		\parans{
		\sum_{\sigma \in \cP_{l_i}}
		\Ex{\Gamma_i^{\cS}(M,\sigma(v_i))}
		}
		+
		\sum_{\sigma \in \cP_{l_i}}
		\parans{
		\Ex{\Gamma_i^{\cS_i}(T-Mn,\sigma(v_i))}
		\int_{0}^1p_i^{\cS_i}(y,\sigma(v_i))q_i^{\cS_i}(y,\sigma(v_i))dy
		}
		\nonumber
		\\
		&=
		\parans{
		\sum_{\sigma \in \cP_{l_i}}
		\Ex{\Gamma_i^{\cS_i}(M,\sigma(v_i))}
		}
		+
		\int_{0}^1
		\parans{\sum_{\sigma \in \cP_{l_i}}
		\Ex{\Gamma_i^{\cS_i}(T-Mn,\sigma(v_i))}
		p_i^{\cS_i}(y,\sigma(v_i))q_i^{\cS_i}(y,\sigma(v_i))}dy
		\nonumber
		\\
		&=
		\parans{
		\sum_{\sigma \in \cP_{l_i}}
		\Ex{\Gamma_i^{\cS_i}(M,\sigma(v_i))}
		}
		+
		\int_{0}^1
		p_i^{\cS_i}(y,\sigma(v_i))
		\parans{
		\sum_{\sigma \in \cP_{l_i}}
		\Ex{\Gamma_i^{\cS_i}(T-Mn,\sigma(v_i))}
		q_i^{\cS_i}(y,\sigma(v_i))
		}dy,\label{eq:04271244}
	\end{align}
	where the third equality follows from Fubini's theorem, and the last inequality holds since $p_i^{\cS_i}(y,\sigma(v_i))$ is independent from $\sigma$.
	
	In case of the truthful strategy $\cO_i$, since any permutation $\sigma$ yields the same problem instance, all the terms inside \eqref{eq:04271244} remains the same.
	Thus,
	\begin{align}
		\sum_{\sigma \in \cP_{l_i}} &
		\CEx{U_i(\cO_i)}{\bigtimes_{a \in [l_i]}(v_{i, \sigma(a)}:r_{i,a})}\nonumber
		\\
		&~~~~~~=
		|\cP_{l_i}| 
		\parans{
		\Ex{\Gamma_i^{\cO_i}(M,
		v_i)}
		+
		\Ex{\Gamma_i^{\cO_i}(T-Mn,v_i)}
		\int_{0}^1p_i^{\cO_i}(y,v_i)q_i^{\cO_i}(y,v_i)dy
		}.\label{eq:05082325}
	\end{align}
	Now it suffices to prove that the RHS of \eqref{eq:05082325} is at least that of \eqref{eq:04271244}.
	
	We first observe that the first term can effectively compared based on our result on the single-agent setting.
	\begin{claim}\label{cl:reduc_to_single}
		For any strategy $\cS_i$, we have
		\begin{align*}
			\sum_{\sigma \in \cP_{l_i}}\Ex{\Gamma_i^{\cO_i}(M, \sigma(v_i))}	
			\ge
			\sum_{\sigma \in \cP_{l_i}}\Ex{\Gamma_i^{\cS_i}(M, \sigma(v_i))}.
		\end{align*}
	\end{claim}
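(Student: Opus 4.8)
The plan is to reduce this claim to the single-agent TRP property of ETC (Theorem~\ref{thm:single_rp}), but instantiated at the reduced horizon $M$ rather than at the global horizon $T$. The first step is to rewrite $\Gamma_i^{\cS_i}(M,\sigma(v_i))$ in terms of regret. Since $\Gamma_i^{\cS_i}$ is computed in the \emph{single-agent} world governed by $\alg_2$, agent $i$ owns every arm, so its utility over the $M$ exploration pulls is exactly $\alpha$ times the cumulative reward collected by $\alg_2$. Writing $v_{i,1} = \max_a v_{i,a}$ for the best original mean---which is the benchmark and is unchanged by $\sigma$, since every permutation keeps all distinct means present---I would record
\begin{align*}
\Ex{\Gamma_i^{\cS_i}(M,\sigma(v_i))} = \alpha\Big(v_{i,1}M - \Ex{\reg_{\sigma(v_i),r_i}(M)}\Big),
\end{align*}
and analogously for $\cO_i$ with the replication vector $r_i$ replaced by $\vec{0}$.

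Next I would sum over $\sigma \in \cP_{l_i}$ and use that $\alg_2 = $ ETC is permutation invariant: permuting the realized means by $\sigma$ while keeping the replication vector $r_i$ fixed yields the same expected regret as keeping the means fixed and permuting $r_i$ by $\sigma^{-1}$. Summing over all $\sigma$ (equivalently over all $\sigma^{-1}$) therefore turns the average of $\Ex{\reg_{\sigma(v_i),r_i}(M)}$ into the random-permutation regret, giving
\begin{align*}
\sum_{\sigma \in \cP_{l_i}}\Ex{\Gamma_i^{\cS_i}(M,\sigma(v_i))} = \alpha\,|\cP_{l_i}|\Big(v_{i,1}M - \rpreg(\alg_2,M;\cS_i)\Big),
\end{align*}
and the same identity for $\cO_i$. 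The constant $v_{i,1}M$ terms cancel, so the claimed inequality is equivalent to $\rpreg(\alg_2,M;\cO_i) \le \rpreg(\alg_2,M;\cS_i)$. This mirrors exactly the marginalization and PI argument already carried out in the proof of Theorem~\ref{thm:single_pf_rp}.

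This last inequality is precisely the TRP statement of ETC, so it remains to check that the guarantee transfers to the sub-horizon $M$. TRP of ETC requires exploration length $m \ge \frac{2}{\Delta^2}\,l_i\ln(2M)$ for the instance with $l_i$ original arms; since $m = \frac{2}{\Delta^2}L\ln(2T)$ with $L \ge l_i$ and $M \le \tau = Mn \le T$, this bound indeed holds, and moreover $M \ge mL$ ensures $\alg_2$'s exploration phase completes for the truthful strategy. Applying Theorem~\ref{thm:single_rp} at horizon $M$ then yields $\rpreg(\alg_2,M;\cO_i)\le\rpreg(\alg_2,M;\cS_i)$ and finishes the claim.

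The main obstacle I anticipate is this horizon bookkeeping: the quantity $\Gamma_i^{\cS_i}(M,\cdot)$ lives only on the $M$ exploration pulls allocated to agent $i$ in $\alg_1$, so I must make sure the single-agent TRP guarantee applies verbatim with $M$ in place of $T$ and that the chosen $M$ and $m$ keep all the relevant exploration-length hypotheses satisfied. The remaining ingredients---identifying $\Gamma$ with $\alpha$ times cumulative reward in the single-agent world, and pushing the permutation through permutation invariance onto $r_i$---are a direct replay of the algebra used for Theorem~\ref{thm:single_pf_rp}, and I expect them to be routine.
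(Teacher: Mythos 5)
Your proof is correct and follows essentially the same route as the paper: the paper's own proof simply identifies $\Gamma_i^{\cO_i}(M,\sigma(v_i))$ as agent $i$'s single-agent utility and invokes Theorem~\ref{thm:single_rp} with the Phase-2 exploration length $m = \frac{2}{\Delta^2}L\ln(2T)$. Your additional bookkeeping---converting utility to regret against the permutation-invariant benchmark $v_{i,1}M$, pushing the permutation onto the replication vector via PI, and verifying that $m \ge \frac{2}{\Delta^2}l_i\ln(2M)$ so the TRP guarantee applies at the sub-horizon $M$---just makes explicit the steps the paper leaves implicit.
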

	\begin{proof}
		Note that $\Gamma_i^{\cO_i}(M, \sigma(v_i))$ denotes the agent $i$'s utility at round $M$ under the single agent setting.
		Thus, the claim directly follows from Theorem~\ref{thm:single_rp} since we run ETC with $m = 2l/\Delta^2 \ln (2T)$ as Phase 2 algorithm.
	\end{proof}
	Thus, it remains to compare the latter term that involves the integral.
	Given $v_i \in \R_{\ge 0}^{l_i}$, for $a \in [l_i]$, write $Y_{i,a}$ to denote the event that the arm with mean $v_a$ of agent $i$ is selected in the exploitation rounds of $\alg_2$ for agent $i$, \ie having the largest empirical average reward at the end of the exploration rounds in $\alg_2$ for agent $i$.
	
	\sscomment{Proofread}
	Further, similar to the definition of $\Gamma_i^{\cS_i}(\cdot, \cdot)$, we analogously define $\Gamma_{i,a}^{\cS_i}(\cdot, \cdot)$ to be the random cumulative rewards when there exists only agent $i$ and only arm $a$ under $\alg_2$.
	That is, $\Gamma_{i,a}^{\cS_i}(t, \sigma(v_i))$ is just a summation of $t$ random variables where each random variable follows the agent $i$'s arm's reward distribution parameterized by the mean reward $\sigma(v_i)$.
	Note that our construction in equation~\eqref{eq:04262152} verifies $v_{i,1} \ge v_{i,2} \ge \ldots \ge v_{i,l_i}$.
	Define $\delta_{i,a} = v_{i,1} - v_{i,a}$ for $a \in [l_i]$.
	
	We now construct a lower bound of $q_i^{\cO_i}(y,v_i)$ in the following. \sscomment{$o_i$ and $o$ are both used. Also, make consistent $\sigma(v)$ or $v$.}
	\begin{claim}\label{cl:tail_lower}
		For any $y \in [0,1]$, and $\sigma \in \cP_{l_i}$, consider arbitrary $v_i \in [0,1]^{l_i}$ such that $v_{i,1} \ge v_{i,2} \ge \ldots \ge v_{i,l_i}$.
		 Then, we have
		\begin{align*}
			q_i^{\cO_i}(y, v_i)
			\ge
			\Pr{\Gamma_{i,1}^{\cO_i}(M,v_i) \ge My} \parans{1 - 2\sum_{a=2}^{l_i} \exp(-\frac{\delta_{i,a}^2 m }{2})}
		\end{align*}

	\end{claim}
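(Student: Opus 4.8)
The plan is to lower-bound $q_i^{\cO_i}(y,v_i)=\Pr{\muhat_i(M)\ge y}$ by isolating the single most favorable scenario: the best arm (arm $1$, with mean $v_{i,1}$) survives the Phase-2 exploration and is committed to during the exploitation pulls, and the realized rewards of that arm are themselves large. Concretely, I would let $Y_{i,1}$ denote the event that arm $1$ wins the Phase-2 exploration and write $\Pr{\muhat_i(M)\ge y}\ge\Pr{\{\muhat_i(M)\ge y\}\cap Y_{i,1}}$, then factor this joint probability into the product of $\Pr{Y_{i,1}}$ and a term controlled by $\Gamma_{i,1}^{\cO_i}(M,v_i)$.

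First I would control $\Pr{Y_{i,1}}$. Arm $1$ fails to be committed only if some suboptimal arm $a\ge 2$ has empirical mean at least that of arm $1$ after the $m$ exploration pulls. By Claim~\ref{cl:concen} each such event has probability at most $2\exp(-\frac{\delta_{i,a}^2 m}{2})$, where the stochastically ordered family hypothesis is what guarantees that $\muhat_1(m)$ concentrates above $\muhat_a(m)$ at the rate governed by the true gap $\delta_{i,a}$. A union bound over $a\in\{2,\dots,l_i\}$ then yields $\Pr{Y_{i,1}}\ge 1-2\sum_{a=2}^{l_i}\exp(-\frac{\delta_{i,a}^2 m}{2})$, which is exactly the second factor in the claim.

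Next I would relate the empirical mean to the optimal-arm-only sum. I would decompose the randomness into the exploration rewards, which determine $Y_{i,1}$, and the fresh exploitation rewards of arm $1$, which are independent of $Y_{i,1}$. On $Y_{i,1}$ the agent accumulates the $m$ exploration pulls of arm $1$ together with the $M-ml_i$ exploitation pulls of arm $1$, while all remaining suboptimal-arm rewards are nonnegative; hence $M\muhat_i(M)$ dominates a sum of arm-$1$ rewards of the type counted by $\Gamma_{i,1}^{\cO_i}$. Conditioning on the realized arm-$1$ exploration block, both $\{\muhat_i(M)\ge y\}$ and $Y_{i,1}$ are monotone increasing in those rewards, so a positive-association (FKG / Chebyshev-sum) argument lets me factor the joint probability as $\Pr{\Gamma_{i,1}^{\cO_i}(M,v_i)\ge My}\cdot\Pr{Y_{i,1}}$, giving the claimed bound.

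The main obstacle is precisely this last decoupling: the event that arm $1$ wins and the event that the empirical mean is large are both driven by the same arm-$1$ exploration rewards, so they cannot be treated as independent without justification, and the suboptimal arms enter with the ``wrong'' sign relative to the stochastic-dominance direction. I expect to resolve this by conditioning on the arm-$1$ exploration block and invoking the monotonicity/positive-correlation structure just described, while absorbing the $O(ml_i)$ discrepancy between the true number of optimal-arm pulls and $M$ into the bound using nonnegativity of the rewards and the regime $ml_i\le M$. This is where the argument is most delicate; every other step reduces to a routine application of the concentration estimate in Claim~\ref{cl:concen} together with the stochastic ordering hypothesis.
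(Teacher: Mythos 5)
Your skeleton matches the paper's proof: isolate the event $Y_{i,1}$ that arm $1$ survives Phase-2 exploration, bound $\Pr{Y_{i,1}} \ge 1 - 2\sum_{a=2}^{l_i}\exp(-\frac{\delta_{i,a}^2 m}{2})$ by Claim~\ref{cl:concen} plus a union bound, and then decouple the tail event from $Y_{i,1}$ by a positive-correlation argument. However, the step you yourself flag as most delicate is a genuine gap, and it cannot be repaired in the way you propose. On $Y_{i,1}$, discarding the suboptimal arms' exploration rewards by nonnegativity bounds the agent's total reward below by a sum of only $M - m(l_i-1)$ arm-$1$ rewards (the $m$ exploration pulls plus the $M - l_i m$ exploitation pulls), so your argument can at best yield
\[
q_i^{\cO_i}(y,v_i) \;\ge\; \Pr{\Gamma_{i,1}^{\cO_i}(M - m(l_i-1),v_i) \ge My}\cdot \Pr{Y_{i,1}},
\]
whereas the claim requires the factor $\Pr{\Gamma_{i,1}^{\cO_i}(M,v_i) \ge My}$ built from all $M$ pulls. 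Adding $m(l_i-1)$ further nonnegative arm-$1$ rewards only makes the tail event \emph{more} likely, i.e. $\Pr{\Gamma_{i,1}^{\cO_i}(M-m(l_i-1),v_i)\ge My} \le \Pr{\Gamma_{i,1}^{\cO_i}(M,v_i)\ge My}$, so the $O(ml_i)$ discrepancy enters with the wrong sign: no appeal to nonnegativity or to the regime $ml_i \le M$ can convert your bound into the stated one. The paper closes exactly this hole by a different mechanism: it \emph{replaces} (rather than discards) each suboptimal arm's exploration block $\Gamma_{i,a}^{\cO_i}(m,v_i)$, $a\ge 2$, by a copy of an arm-$1$ block, invoking the stochastically ordered family hypothesis; this substitution is what produces the quantity $\psi_i(M,y,v) = \Pr{\Gamma_{i,1}^{\cO_i}(M,v_i)\ge My}$ with the full $M$ pulls. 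Relatedly, you attribute the stochastic-ordering hypothesis to Claim~\ref{cl:concen}, but that claim is plain Hoeffding; in the paper's proof the hypothesis does its work in this substitution and in Claim~\ref{cl:tail_upper}.

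There is a second, connected problem in your decoupling step. Your fix — condition on the realized arm-$1$ exploration block and invoke FKG — fails for the \emph{true} tail event $\{\muhat_i(M)\ge y\}$: given that block, $Y_{i,1}$ is decreasing in the suboptimal arms' exploration rewards while the tail event is increasing in them, so the two events are conditionally negatively associated and the product bound comes out in the wrong direction. The factorization in the paper goes through only because, after the substitution above, the tail event involves arm-$1$ rewards alone, hence is independent of (or positively correlated with) $Y_{i,1}$. So even setting the deficit aside, the factorization must be performed after reducing to an arm-$1$-only event — and once you do that reduction by nonnegativity rather than by the stochastic-dominance swap, you are back to the $M - m(l_i-1)$ deficit. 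The missing idea, in short, is the substitution of the suboptimal exploration rewards by optimal-arm copies; without it your argument establishes a strictly weaker inequality than the claim.
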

	\begin{proof}[Proof of the claim]
		Due to the independence of rewards from arm $a$ at exploitation phase and the random variable $\Ind{Y_{i,a}}$, we can rewrite $q_i^{\cO}(y,\sigma(v))$ as follows.
		\sscomment{FIXING...}
		\begin{align}
			q_i^{\cO_i}(y,v_i)
			&=
			\Pr{
			\sum_{a=1}^{l_i}
			\parans{
			\Gamma_{i,a}^{\cO_i}(m,v)
			+
			\Ind{Y_{i,a}}\Gamma_{i,a}^{\cO_i}(M-l_im, v_i)
			}
			\ge My
			}
			\nonumber
			\\
			&=
			\sum_{a=1}^{l_i}
			\CPr{
			\sum_{a=1}^{l_i} 
			\parans{
			\Gamma_{i,a}^\cO(m,v)
			+
			\Ind{Y_{i,a}}\Gamma_{i,a}^{\cO}(M-l_im, v)
			}
			\ge My
			}{Y_{i,a}}\Pr{Y_{i,a}}
			\nonumber
			\\
			&=
			\sum_{a=1}^{l_i}
			\CPr{
			\parans{
			\sum_{a=1}^{l_i} 
			\Gamma_{i,a}^\cO(m,v)
			}
			+
			\Gamma_{i,a}^{\cO}(M-l_im, v)
			\ge My
			}{Y_{i,a}}\Pr{Y_{i,a}}
			\nonumber
			\\
			&\ge
			\CPr{
			\parans{
			\sum_{a=1}^{l_i} 
			\Gamma_{i,1}^\cO(m,v)
			}
			+
			\Gamma_{i,1}^{\cO}(M-l_im, v)
			\ge My
			}{Y_{i,1}}\Pr{Y_{i,1}}
			\nonumber
		\end{align}
		
		Note further that due to the positive correlation between the event $Y_{i,1}$ and the tail event, it follows that
		\begin{align*}
			\CPr{
			\parans{
			\sum_{a=1}^{l_i} 
			\Gamma_{i,1}^{\cO_i}(m,v)
			}
			+
			\Gamma_{i,1}^{\cO_i}(M-l_im, v)
			\ge My
			}{Y_{i,1}}
			\ge
			\Pr{
			\parans{
			\sum_{a=1}^{l_i} 
			\Gamma_{i,1}^{\cO_i}(m,v)
			}
			+
			\Gamma_{i,1}^{\cO_i}(M-l_im, v)
			\ge My
			}.
		\end{align*}
		
		Let us write
		\begin{align*}
			\psi_i(M, y, v) 
			= \Pr{
			\parans{
			\sum_{a=1}^{l_i} 
			\Gamma_{i,1}^{\cO_i}(m,v)
			}
			+
			\Gamma_{i,1}^{\cO_i}(M-l_im, v)
			\ge My
			},
		\end{align*}
		then we have
		\begin{align}
			q_i^{\cO_i}(y,v_i) \ge \psi_i(M, y,v)\Pr{Y_{i,1}}
			\label{ineq:05090246}
		\end{align}
		
		Since $Z_{i,1}$ is the complement of $\cup_{a=2}^l Z_{i,a}$, observe that
		\begin{align*}
			\Pr{Z_{i,1}^c}
			&=
			\Pr{\cup_{a=2}^l Z_{i,a}}
			\\
			&\le
			\sum_{a=2}^l\Pr{Z_{i,a}}
			\tag{Union bound}
			\\
			&\le
			\sum_{a=2}^l\Pr{\muhat_{i,1}(m) \le \muhat_{1,a}(m)}
			\tag{Definition of $Z_{i,a}$}
			\\
			&\le
			\sum_{a=2}^l 2\exp(-\frac{\delta_{i,a}^2 m}{2}).
			\tag{Claim~\ref{cl:concen}}
		\end{align*}
		
		Hence, we obtain
		\begin{align*}
			q_i^{\cO}(y,\sigma(v))
			\ge
			\psi_i(M, y, v)  \parans{1 - 2\sum_{a=2}^l \exp(-\frac{\delta_{i,a}^2 m }{2})},
		\end{align*}
		and it proves the claim.
	\end{proof}

	\begin{claim}\label{cl:tail_upper}
		For any strategy $\cS_i$, $\sigma \in S_{l_i}$, $y \in [0,1]$ and $v \in [0,1]^{l_i}$, we have 
		\begin{align*}
			q_i^{\cS_i}(y, \sigma(v))	 
			\le 
			\psi_i(M, y, v)
		\end{align*}
	\end{claim}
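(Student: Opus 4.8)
The plan is to read $q_i^{\cS_i}(y,\sigma(v))$ as the tail probability of the empirical reward sum that $\alg_2$ collects during the $M$ rounds in which agent $i$ is selected in the exploration phase of $\alg_1$, and to show that this sum is stochastically dominated by the sum of $M$ i.i.d.\ draws from the best arm (the one with mean $v_{i,1}$), whose tail is exactly $\psi_i(M,y,v)$. Under strategy $\cS_i$ agent $i$ registers $l_i+\norm{r_i}_1$ arms once replicas are counted, but regardless of this number the algorithm pulls agent $i$ exactly $M$ times, and the inner ETC $\alg_2$ splits those pulls into an exploration part (each registered arm pulled $m$ times) and an exploitation part (the empirically best registered arm pulled for the remaining rounds). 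The decisive structural fact is that every one of these $M$ pulls draws a reward from an arm whose mean is at most $v_{i,1}$, since $v_{i,1}=\max_a v_{i,a}$ dominates the mean of every original arm and hence of every replica, and this remains true after applying any permutation $\sigma$.

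The core of the argument is a coupling that exploits the stochastically ordered family assumption. Let $H_j$ denote the reward CDF of arm $j$ of agent $i$. Since arm $1$ (mean $v_{i,1}$) first-order stochastically dominates every arm $j$, we have $H_1(x)\le H_j(x)$ for all $x$, and therefore the generalized inverses satisfy $H_1^{-1}(u)\ge H_j^{-1}(u)$ for all $u\in[0,1]$. I would introduce i.i.d.\ uniforms $U_1,\dots,U_M$ on $[0,1]$, independent of the algorithm's internal randomness, and along the actual trajectory set the $t$-th collected reward to $R_t=H_{j_t}^{-1}(U_t)$, where $j_t$ is the (history-dependent) arm selected at the $t$-th pull of agent $i$, while simultaneously defining $R_t^{\star}=H_1^{-1}(U_t)$. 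The point is that $j_t$ is a measurable function of $U_1,\dots,U_{t-1}$ and the algorithm's random bits and is hence independent of $U_t$; this guarantees that $R_t$ carries the correct conditional law (arm $j_t$'s reward distribution) while forcing $R_t\le R_t^{\star}$ pointwise.

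Summing over the $M$ pulls yields $\sum_{t=1}^M R_t\le \sum_{t=1}^M R_t^{\star}$ almost surely under the coupling, so
\[
q_i^{\cS_i}(y,\sigma(v)) = \Pr{\sum_{t=1}^M R_t \ge My} \le \Pr{\sum_{t=1}^M R_t^{\star} \ge My}.
\]
Finally, $R_1^{\star},\dots,R_M^{\star}$ are i.i.d.\ draws from arm $1$, so their sum has exactly the distribution appearing in the definition of $\psi_i$, since $l_i m+(M-l_i m)=M$; this gives $\Pr{\sum_t R_t^{\star}\ge My}=\psi_i(M,y,v)$ and hence the claim. Observe also that the right-hand side does not depend on $\sigma$, which is consistent with the uniform-in-$\sigma$ statement, precisely because the best mean $v_{i,1}$ is permutation invariant.

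I expect the main obstacle to be making the coupling rigorous in the presence of the adaptive, history-dependent selection $j_t$: one must verify that reusing a single shared uniform $U_t$ at each round preserves the marginal law of the collected reward sequence, which rests exactly on the independence of $U_t$ from the past and on the measurability of $j_t$ with respect to $U_1,\dots,U_{t-1}$. A secondary point worth a sentence is the degenerate case $(l_i+\norm{r_i}_1)m> M$, in which the inner ETC never reaches its exploitation phase; the argument is unaffected, since all $M$ pulls still land on arms dominated by arm $1$.
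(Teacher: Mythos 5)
Your coupling is essentially the paper's own argument made rigorous: the paper also proves this claim by coupling the two reward processes and invoking first-order stochastic dominance of the arm with mean $v_{i,1}$, and it is silent on exactly the point you treat carefully (that the adaptively selected arm $j_t$ is measurable with respect to the past and hence independent of the fresh uniform $U_t$). The gap is in what you compare against. Dominating \emph{all} $M$ pulls by arm-$1$ draws proves $q_i^{\cS_i}(y,\sigma(v)) \le \Pr{\sum_{t=1}^M R_t^{\star} \ge My}$, the tail of $M$ i.i.d.\ draws from the best arm, and you identify this quantity with $\psi_i(M,y,v)$ by reading the paper's displayed formula literally (``$l_im+(M-l_im)=M$''). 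That formula, however, carries an index-collision typo --- note its summand $\Gamma_{i,1}^{\cO_i}(m,v)$ does not depend on the summation index $a$ --- and the intended definition, stated in prose in the paper's own proof of this very claim and the one that its proof of Claim~\ref{cl:tail_lower} actually establishes, is $\psi_i(M,y,v)=\Pr{\sum_{a=1}^{l_i}\Gamma_{i,a}^{\cO_i}(m,v)+\Gamma_{i,1}^{\cO_i}(M-l_im,v)\ge My}$: exploration draws from \emph{each} original arm, with arm-$1$ draws only for the remaining $M-l_im$ rounds. By your own dominance argument this intended $\psi_i$ is \emph{smaller} than the all-arm-$1$ tail, so your inequality does not imply the claim. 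The distinction is not cosmetic: both claims must hold for the \emph{same} $\psi_i$ in order to be chained in inequality \eqref{ineq:05090213}, and Claim~\ref{cl:tail_lower} is false for the all-arm-$1$ version --- take two deterministic arms with means $1$ and $0$; conditioned on $Y_{i,1}$ the truthful process still spends $m$ of its $M$ pulls on the mean-$0$ arm, so for any $y$ with $(M-m)/M<y\le 1$ one has $q_i^{\cO_i}(y,v)=0$ while the all-arm-$1$ tail times $\Pr{Y_{i,1}}$ equals $1$.

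The repair stays entirely inside your construction: couple the $m$ exploration pulls of each \emph{original} arm of the $\cS_i$-process identically (same uniforms) to the $m$ pulls of the corresponding arm in the $\psi_i$-process --- under $\sigma$, match the registered arm with realized mean $v_{i,\sigma(a)}$ to arm $\sigma(a)$, which works uniformly in $\sigma$ because the multiset of exploration means is permutation-invariant --- and apply the domination $H_1^{-1}(U_t)\ge H_j^{-1}(U_t)$ only to the remaining $M-l_im$ pulls of the $\cS_i$-process (replica exploration plus exploitation), matched against the arm-$1$ draws of the $\psi_i$-process. Note finally that your closing remark about the degenerate case $(l_i+\norm{r_i}_1)m>M$ being ``unaffected'' is valid only for your weaker all-arm-$1$ bound: for the intended $\psi_i$, some original arm may then receive fewer than $m$ pulls in the $\cS_i$-process, the pull-by-pull matching above no longer goes through, and depending on the exploration order the intended inequality can genuinely fail (heavy replication of the realized best arm can crowd the $m$ mandatory pulls of the worst arm out of the window of $M$ pulls). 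That case needs to be excluded or handled separately; the paper's one-line proof does not address it either.
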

	\begin{proof}[Proof of the claim]
		This easily follows from the fact that in the probability $\psi(M,y,v)$, the exploration phase runs with the original $l_i$ arms for $m$ times each and then the exploration phase runs with the optimal arm $(i,1)$, whereas in the probability $q_i^{\cS_i}(y, \sigma(v))$, the exploration runs with the original $l_i$ arms upon the replicated arm for $m$ times each, and then exploitation runs with the empirically best arm.
		That is, we can couple the first $l_i \cdot m$ rounds for the exploration phase of the original $l_i$ arms, and then the exploitation is dominant for $\psi(M,y,v)$ due to the stochastic dominance of the arm $(i,1)$ over the other arms.
		
	\end{proof}

	Now, combining Claim~\ref{cl:tail_lower} and~\ref{cl:tail_upper}, we obtain
	\begin{align}
		q_i^{\cO}(y, \sigma(v)) - q_i^{\cS}(y,\sigma(v))
		&\ge
		-2
		\psi_i(M, y, v)
		\parans{\sum_{a=2}^{l_i} \exp(-\frac{\delta_{i,a}^2 m }{2})}
		\nonumber
		\\
		&\ge
		-2l_i 
		\psi_i(M, y, v)
		\parans{\exp(-\frac{\Delta^2 m }{2})}
		\tag{Due to $\delta_{i,a} \ge \Delta$}
		\\
		&\ge
		-2l_i \cdot \frac{q_i^{\cO}(y, v)}{\Pr{Y_{i,1}}} \cdot
		\parans{\exp(-\frac{\Delta^2 m }{2})}\tag{Due to ineq. \eqref{ineq:05090246}}
		\\
		&\ge
		-2l_i \cdot \frac{q_i^{\cO}(y, v)}{1 - \sum_{a=2}^{l_i} 2\exp(-\frac{\delta_i^2 m}{2})} \cdot
		\parans{\exp(-\frac{\Delta^2 m }{2})}
		\nonumber
		\\
		&\ge
		-2l_i
		\cdot 
		\frac{q_i^{\cO}(y, v)}{1 - 2l_i\exp(-\frac{\Delta^2 m}{2})} \cdot
		\parans{\exp(-\frac{\Delta^2 m }{2})},
		\label{ineq:05090213}
	\end{align}
	where the fourth inequality follows from the concentration on $Z_{i,1}$ in the proof of Claim~\ref{cl:tail_lower}.

	Now we consider two cases regarding $\sigma$: (i) when $\sigma(1) = 1$ and (ii) otherwise.
	
%
	
	Recall that it suffices to prove that the RHS of \eqref{eq:05082325} is at least that of \eqref{eq:04271244}.
	Due to Claim~\ref{cl:reduc_to_single}, it suffices to prove the following inequality.
	\begin{align}
		\int_{0}^1
		&p_i^{\cS}(y,\sigma(v))
		\parans{
		\sum_{\sigma \in S_l}\underbrace{\paranm{\Ex{\Gamma_i^{\cO}(T-Mn,\sigma(v))}
		q_i^{\cO}(y,\sigma(v))
		-
		\Ex{\Gamma_i^{\cS}(T-Mn,\sigma(v))}
		q_i^{\cS}(y,\sigma(v))}}_{\Theta(\sigma, v)}
		}dy 
		\ge 
		0.
		\label{eq:05090205}
	\end{align}
%
%
%
%
%

	Note that the term inside the summation can be decomposed as follows.
	\begin{align*}
		\Theta(\sigma, v)
		=
		\paranm{\Ex{\Gamma_i^{\cO}(T-Mn,\sigma(v))} - \Ex{\Gamma_i^{\cS}(T-Mn,\sigma(v))}}q_i^{\cO}(y,\sigma(v))
		\\
		+
		\Ex{\Gamma_i^{\cS}(T-Mn,\sigma(v))}
		\paranm{q_i^\cO(y,\sigma(v)) - q_i^{\cS}(y,\sigma(v))}.
	\end{align*}
	
	Due to \eqref{ineq:05090213}, we further obtain
	\begin{align*}
		\Theta(\sigma, v) 
		&\ge 
		\underbrace{\paranm{\Ex{\Gamma_i^{\cO}(T-Mn,\sigma(v))} - \Ex{\Gamma_i^{\cS}(T-Mn,\sigma(v))}}q_i^{\cO}(y,\sigma(v))}_{\Theta_1(\sigma, v)}
		\\
		&-
		\underbrace{2\Ex{\Gamma_i^{\cS}(T-Mn,\sigma(v))}
		\cdot l_i\cdot \frac{q_i^{\cO}(y, v)}{1 - 2l_i\exp(-\frac{\Delta^2 m}{2})} \cdot
		\parans{\exp(-\frac{\Delta^2 m }{2})}}_{\Theta_{2}(\sigma, v)} 
		.
	\end{align*}
	By summing over $\sigma \in S_{l_i}$, for $\Theta_1(\sigma, v)$, we have
	\begin{align*}
		\sum_{\sigma \in S_l} \Theta_1(\sigma, v)	
		&=
		\sum_{\sigma \in S_l}\paranm{\Ex{\Gamma_i^{\cO}(T-Mn,\sigma(v))} - \Ex{\Gamma_i^{\cS}(T-Mn,\sigma(v))}}q_i^{\cO}(y,\sigma(v))
		\\
		&=
		q_i^{\cO}(y,\sigma(v)) \cdot \sum_{\sigma \in S_{l_i}}\paranm{\Ex{\Gamma_i^{\cO}(T-Mn,\sigma(v))} - \Ex{\Gamma_i^{\cS}(T-Mn,\sigma(v))}}
	\end{align*}
	
	From inequality~\eqref{ineq:05090221} and \eqref{ineq:05090222} we know that
	\begin{align*}
		\sum_{\sigma \in S_{l_i}}\paranm{\Ex{\Gamma_i^{\cO}(T-Mn,\sigma(v))} - \Ex{\Gamma_i^{\cS}(T-Mn,\sigma(v))}}
		&\ge
		\sum_{a \in [l_i]} \delta_{i,a} (\frac{m}{l_i} - 1)
		\\
		&\ge
		\sum_{a \in [l_i]} \delta_{i,a} (\frac{2l_i \ln 2T / \Delta^2}{l_i} - 1)
		\\
		&= 
		\sum_{a \in [l_i]} \delta_{i,a} (\frac{\ln 2T}{\Delta^2} - 1),
	\end{align*}
	which implies
	\begin{align*}
		\sum_{\sigma \in S_l} \Theta_1(\sigma, v)
		&\ge
		q_i^{\cO}(y,\sigma(v)) 
		\cdot \sum_{i \in [l]} \delta_{i,a} (\frac{\ln 2T}{\Delta^2} - 1).
	\end{align*}
	
	Putting $m = 4 \cdot \frac{2}{\Delta^2} (\max_i l_i) \ln(2T)$, observe that the following holds.
	\begin{align*}
		\exp(-\frac{\Delta^2 m}{2}) \le \frac{1}{4l_iT}	
	\end{align*}

	Hence, for $\Theta_2(\sigma,v)$, we further obtain
	\begin{align*}
		\sum_{\sigma \in S_{l_i}}\Theta_2(\sigma,v)
		&=
		2\Ex{\Gamma_i^{\cS}(T-Mn,\sigma(v))}
		\cdot l_i\cdot \frac{q_i^{\cO}(y, v)}{1 - 2l_i\exp(-\frac{\Delta^2 m}{2})} \cdot
		\parans{\exp(-\frac{\Delta^2 m }{2})}
		\\
		&\le
		4\Ex{\Gamma_i^{\cS}(T-Mn,\sigma(v))}
		\cdot l_i\cdot q_i^\cO(y,v)\cdot \frac{1}{4l_iT}
		\\
		&=
		\Ex{\Gamma_i^{\cS}(T-Mn,\sigma(v))}
		 q_i^\cO(y,v)\frac{1}{4T}
		\\
		&\le
		\frac{q_i^{\cO}(y,v)}{4}\tag{Since $\Ex{\Gamma_i^{\cS}(T-Mn,\sigma(v))} \le T$}
	\end{align*}
	
	Combining the above observations, for $\Theta(\sigma, v)$, we obtain
	\begin{align*}
		\sum_{\sigma \in S_{l_i}}\Theta(\sigma, v)
		&\ge
		q_i^{\cO}(y,v)
		\parans{
		\sum_{a \in [l_i]} \delta_{i,a} (\frac{\ln 2T}{\Delta^2} - 1)
		-
		\frac{1}{4}
		}
		\\
		&\ge
		q_i^{\cO}(y,v)
		\parans{
		\sum_{a \in [l_i]}  (\frac{\ln 2T}{\Delta} - 1)
		-
		\frac{1}{4}
		}
		\\
		&=
		q_i^{\cO}(y,v)\parans{l_i \cdot \ln(2T) - l_i -\frac{1}{4}} 
		\\
		&\ge 0,
	\end{align*}
	where the last inequality follows from $T \ge 1$ and $q_i^\cO(y,v) \ge 0$.
	Thus, inequality \eqref{eq:05090205} is true, and it finishes the proof.

\end{proof}

\section{Proof of Theorem~\ref{thm:regret}}\label{apd:thm:regret}

\begin{proof}
	Suppose without loss of generality, that the arms within each agent is sorted in decreasing order, \ie for any $i$, $\mu_{i,1} \ge \mu_{i,2} \ge \ldots \mu_{i,l_i}$.
	Assume further that the agents are also sorted in decreasing order with respect to their best arm, \ie $\mu_{1,1} \ge \mu_{2,1} \ldots \ge \mu_{n,1}$.
	For $i \in [n]$, define $\Delta_i = \mu_{1,1} - \mu_{i,1}$, and for each $a \in [l_i]$, let $\delta_{i,a} = \mu_{i,1} - \mu_{i,a}$.
	Let $\muhat_i(M)$ be agent $i$'s empirical average reward at the end of round $M$.
	
	Let $I_t$ be the arm selected at round $t$, and $r_t$ be the corresponding reward.
	We define an intra-agent regret $\reg_i(t)$ to be the regret occurred in Phase 2 until when agent $i$ is selected exactly $t$ times.
	Precisely,
	\begin{align*}
		\reg_i(t)
		=
		t\mu_{i,1}-\sum_{\tau=1}^tr_{\tau}\Ind{I_\tau \in \cS_i}
	\end{align*}
	
	\begin{claim}
		For any agent $i$ and for any $x \in [0,1]$, we have
		\begin{align*}
			\Pr{|\muhat_{i}(M) - \mu_{i,1}| \ge x}
			\le
			\frac{1}{Mx}	\sum_{a=1}^{l_i}\Big( \frac{2 L^2\ln{(2M)}}{\Delta^2} + 1 \Big)		
		\end{align*}
		
	\end{claim}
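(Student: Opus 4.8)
The plan is to relate the deviation of the agent-level empirical mean to the within-agent (Phase~2) regret and then apply Markov's inequality, controlling the expected regret through the ETC analysis already developed in Lemma~\ref{th:ETC_main_th}. Concretely, writing $r_{i,1},\dots,r_{i,M}$ for the rewards collected during the $M$ rounds in which agent $i$ is selected in the exploration phase of $\alg_1$, we have the identity
\begin{equation*}
\mu_{i,1}-\muhat_i(M)=\frac{1}{M}\sum_{\tau=1}^{M}\bigl(\mu_{i,1}-r_{i,\tau}\bigr)=\frac{\reg_i(M)}{M},
\end{equation*}
so that $\{\,|\muhat_i(M)-\mu_{i,1}|\ge x\,\}=\{\,|\reg_i(M)|\ge Mx\,\}$. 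First I would invoke Markov's inequality to reduce the claim to an upper bound on the expected within-agent regret, i.e.\ to showing that $\Ex{|\reg_i(M)|}$ is at most $\sum_{a=1}^{l_i}\bigl(\tfrac{2L^2\ln(2M)}{\Delta^2}+1\bigr)$.

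Next I would bound the regret itself. Since Phase~2 runs ETC on agent $i$'s $l_i\le L$ arms over a horizon of $M$ pulls, Lemma~\ref{th:ETC_main_th} (applied with time horizon $M$ and $l_i$ arms) gives $\Ex{n_{i,a}(M)}\le m+1$ for every suboptimal arm, so by the regret-decomposition lemma
\begin{equation*}
\Ex{\reg_i(M)}=\sum_{a=1}^{l_i}\delta_{i,a}\,\Ex{n_{i,a}(M)}\le \sum_{a=1}^{l_i}(m+1)\,\delta_{i,a}\le \sum_{a=1}^{l_i}\Bigl(\frac{2L\ln(2M)}{\Delta^2}+1\Bigr),
\end{equation*}
using $\delta_{i,a}\le 1$ and $l_i\le L$. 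This controls the expectation of the nonnegative pseudo-regret $\tilde\reg_i(M)=\sum_{a}n_{i,a}(M)\delta_{i,a}$ and supplies the logarithmic term appearing in the statement.

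The main obstacle, and the step requiring care, is that $\reg_i(M)$ is defined through the \emph{realized} rewards and can be negative, so Markov cannot be applied to it directly; equivalently, $\muhat_i(M)$ fluctuates around the mean of the arms actually played rather than always lying below $\mu_{i,1}$. I would therefore split $\reg_i(M)=\tilde\reg_i(M)+N_i$, where $\tilde\reg_i(M)\ge 0$ is the pseudo-regret bounded above and $N_i=\sum_{\tau}(\mu(I_\tau)-r_{i,\tau})$ is a zero-mean sum of bounded increments. The lower deviation $\mu_{i,1}-\muhat_i(M)\ge x$ is then handled by Markov on $\tilde\reg_i(M)$ together with a Hoeffding bound on $N_i$, while the upper deviation $\muhat_i(M)-\mu_{i,1}\ge x$ forces $N_i\ge Mx$ and is a pure concentration event; here the stochastically ordered family assumption ensures that every played arm has mean at most $\mu_{i,1}$, so no suboptimality bias aids the upper tail. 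The resulting exponential tail contributions are lower order and are absorbed by the slack deliberately built into the claimed bound — the factor $L^2$ in place of $L$, and the summation over all $l_i$ arms — which is exactly why the clean Markov-type form $\tfrac{1}{Mx}\sum_{a}(\cdots)$ is enough to close the argument.
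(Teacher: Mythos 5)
Your overall route is the same as the paper's: identify the deviation $|\muhat_i(M)-\mu_{i,1}|$ with the intra-agent regret $\reg_i(M)/M$, apply Markov's inequality, and control the expected regret via Lemma~\ref{th:ETC_main_th}. In fact, the paper's own proof is precisely the ``naive'' argument you warn against: it equates $|M\muhat_i(M)-M\mu_{i,1}|$ with the regret and applies Markov directly to the signed, realized-reward quantity $\reg_i(M)$, with no treatment of either its possible negativity or the upper-deviation event. So the soundness issue you single out is real, and your decomposition of $\reg_i(M)$ into the nonnegative pseudo-regret $\sum_{a} n_{i,a}(M)\,\delta_{i,a}$ plus a mean-zero noise term $N_i$ (which needs Azuma--Hoeffding rather than plain Hoeffding, since ETC's exploit arm is chosen adaptively) is the right way to address it.

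The gap is in your final step: the exponential tails are \emph{not} absorbed by the slack in the claimed bound. Your argument yields a bound of the form $\frac{2}{Mx}\Ex{\sum_a n_{i,a}(M)\delta_{i,a}} + c_1\exp(-c_2 M x^2)$, and the slack between the claimed numerator and the pseudo-regret bound is only polylogarithmic in $T$ (the $L^2$-versus-$L$ factor does not change this). But $\max_{x\in[0,1]} Mx\exp(-c_2 Mx^2)=\Theta(\sqrt{M})$, attained at $x=\Theta(1/\sqrt{M})$: at that scale the exponential term is a constant, while the claimed right-hand side is $\mathrm{polylog}(T)/\sqrt{M}\to 0$ since $M=\sqrt{T\ln T}$. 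No choice of constants repairs this, and indeed no proof can: by the central limit theorem, the empirical average of $M$ stochastic (e.g.\ Bernoulli) rewards fluctuates at scale $1/\sqrt{M}$ with constant probability, so the claim as stated is actually false for $x=\Theta(1/\sqrt{M})$. What your argument does establish is the claim restricted to $x \ge C\sqrt{\ln(M)/M}$ for a suitable constant $C$ — equivalently, the claim with an additive $c_1\exp(-c_2 Mx^2)$ term on the right-hand side — and that weaker statement suffices for the only place the claim is invoked, namely $x=\Delta/2$ in the clean-event construction of Theorem~\ref{thm:regret}, where the exponential terms are far below $1/M$. You should state and prove that corrected version rather than asserting absorption.
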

	\begin{proof}[Proof of the claim]
		\begin{align}
		\Pr{|\muhat_{i}(M) - \mu_{i,1}| \ge x}
		&=
		\Pr{|M\muhat_i(M) -M\mu_{i,1}| \ge Mx}
		\label{eq:05101204}
	\end{align}
	We observe that $|M\muhat_i(M) - M\mu_{i,1}|$ is exactly agent $i$'s internal regret until he is selected $M$ times.
	Thus, we obtain
	\begin{align*}
		\eqref{eq:05101204}
		=
		\Pr{\reg_{i}(t) \ge Mx}
		&\le
		\frac{\Ex{\reg_i(t)}}{Mx}.
		\tag{by Markov's inequality}
	\end{align*}
	By Lemma~\ref{th:ETC_main_th}, under the single agent setting when only agent $i$ exists, if we run ETC with $m$ as specified in the theorem statement, we have
	\begin{align*}
		\Ex{\reg_i(t)}  
		\le  
		\sum_{a=1}^{l_i}\Big( \frac{2\delta_{i,a} L^2\ln{(2M)}}{\Delta^2} + 1 \Big).
	\end{align*}
	Thus we have
	\begin{align*}
		\eqref{eq:05101204}
		&\le
		\frac{1}{Mx}	\sum_{a=1}^{l_i}\Big( \frac{2\delta_{i,a} L^2\ln{(2M)}}{\Delta^2} + 1 \Big)
		\\
		&\le
		\frac{1}{Mx}	\sum_{a=1}^{l_i}\Big( \frac{2 L^2\ln{(2M)}}{\Delta^2} + 1 \Big),
	\end{align*}
	and it completes the proof of the claim.
	\end{proof}
	Define $\clean_i$ be the event $\{\muhat_i(M) - \mu_{i,1} \ge \frac{1}{2}\Delta\}$.
	Then, by the claim above, we have
	\begin{align*}
		\Pr{\clean_i^c} 
		&\le 
		\frac{1}{M}\sum_{a=1}^{l_i}\Big( \frac{4 L^2\ln{(2M)}}{\Delta^3} + 1 \Big)
		\\
		&\le
		\frac{1}{M}\sum_{a=1}^{l_i}\Big( \frac{4 L^2\ln{(2M)}}{\Delta^3} + 1 \Big)
	\end{align*}
	This implies that
	\begin{align*}
		\Pr{\cap_{i=1}^n \clean_i} \ge 1 - \frac{1}{M}\sum_{i=1}^n\sum_{a=1}^{l_i}\Big( \frac{4 L^2\ln{(2M)}}{\Delta^3} + 1 \Big).
	\end{align*}
	Note that whenever $\cap_{i=1}^N \clean_i$ happens, the following holds for any $i \in [n]$.
	\begin{align*}
		\muhat_{1}(M)
		&\ge
		\mu_{1,1} - \frac{\Delta}{2}
		\tag{Due to $\clean_1$}
		\\
		&\ge
		\mu_{i,1} + \frac{\Delta}{2}
		\tag{Since $\mu_{1,1} - \mu_{i,1} \ge \Delta$}
		\\
		&\ge\muhat_{i}(M) 
		\tag{Due to $\clean_i$},
	\end{align*}
	and thus agent $1$ will be selected in the Phase 1's exploration phase.
	In this case, the overall regret can be decomposed as follows.
	\begin{align*}
		\reg(T) 
		&\le
		\parans{Mn + \Ex{\reg_i(T)}} + T \cdot \frac{1}{M}\sum_{i=1}^n\sum_{a=1}^{l_i}\Big( \frac{4 L^2\ln{(2M)}}{\Delta^3} + 1 \Big)
		\\
		&\le
		Mn + \sum_{a=1}^{l_i}\Big( \frac{2\delta_{i,a} L^2\ln{(2M)}}{\Delta^2} + 1 \Big) + \frac{T}{M}\sum_{i=1}^n\sum_{a=1}^{l_i}\Big( \frac{4 L^2\ln{(2M)}}{\Delta^3} + 1 \Big)
		\\
		&\le
		Mn + \Big( \frac{2 L^3\ln{(2M)}}{\Delta^2} + l \Big) + \frac{T}{M}\Big( \frac{4 nL^3\ln{(2M)}}{\Delta^3} + nL \Big).
	\end{align*}
	Note that $M = \sqrt{T \ln T}$ optimizes this upper bound (up to constant), and we finally obtain
	\begin{align*}
		\reg(T) 
		&\le \sqrt{T\ln T}n + \frac{L^3 \ln(T\ln T)}{\Delta^2} + \frac{\sqrt{T}}{\sqrt{\ln T}}(\frac{2nL^3\ln(2T \ln T) + \Delta^3nL}{\Delta^3})
		\\
		&=
		O(\frac{nL^3\sqrt{T \ln T}}{\Delta^3}),
	\end{align*}
	and it finishes the proof.
\end{proof}

\end{document}